\documentclass[12pt]{amsart}

\pdfoutput = 1

\input{macros.sty}

\addbibresource{generic.bib}

\usepackage{multirow,slashed}

\begin{document}
\onehalfspacing

\title{\large Six-dimensional supermultiplets \\ from bundles on projective spaces}

\author{Fabian Hahner,$^\flat$ Simone Noja,$^\flat$
    Ingmar Saberi,$^\natural$ Johannes Walcher$^\flat$}
\email{fhahner@mathi.uni-heidelberg.de}

\email{noja@mathi.uni-heidelberg.de}

\email{i.saberi@physik.uni-muenchen.de}

\email{walcher@uni-heidelberg.de}

\address{{$^\flat$}Mathematisches Institut der Universit\"at Heidelberg \\ Im Neuenheimer Feld 205 \\ 69120 Heidelberg, Deutschland}

\address{{$^\natural$}Ludwig-Maximilians-Universit\"at M\"unchen \\ Theresienstra\ss{}e 37 \\ 80333 M\"unchen, Deutschland}

\begin{abstract}
    The projective variety of square-zero elements in the six-dimensional minimal supersymmetry algebra is isomorphic to $\P^1 \times \P^3$. We use this fact, together with the pure spinor superfield formalism, to study supermultiplets in six dimensions, starting from vector bundles on projective spaces. We classify all multiplets whose derived invariants for the supertranslation algebra form a line bundle over the nilpotence variety; one can think of such multiplets as being those whose holomorphic twists have rank one over Dolbeault forms on spacetime. In addition, we explicitly construct multiplets associated to natural higher-rank equivariant vector bundles, including the tangent and normal bundles as well as their duals. Among the multiplets constructed are the vector multiplet and hypermultiplet, the family of $\cO(n)$-multiplets, and the supergravity and gravitino multiplets. Along the way, we tackle various theoretical problems within the pure spinor superfield formalism. In particular, we give some general discussion about the relation of the projective nilpotence variety to multiplets 
    and prove general results on short exact sequences and dualities of sheaves in the context of the pure spinor superfield formalism. 
\end{abstract}

\maketitle
\thispagestyle{empty}

\setcounter{tocdepth}{1}
\newpage
\tableofcontents

\section{Introduction}

The study of supermultiplets is an essential ingredient in the construction of interesting supersymmetric field theories. Typically, one characterizes a Lagrangian field theory by specifying its field content, together with the datum of a supersymmetric action functional. The field content consists of some collection of supermultiplets; if a superfield formalism is available, it is typically possible to write down manifestly supersymmetric action functionals in a compact and pleasing form as integrals over superspace.

Over the last fifty years, a variety of methods and techniques have been developed for the construction of supermultiplets and the development of superspace formalisms. 
One such approach is the pure spinor superfield formalism, 
the origins of which date back more than thirty years to the first papers by Nilsson~\cite{Nilsson} and Howe~\cite{HowePS1,HowePS2}.  
The pure spinor approach was developed further, notably in the work of Berkovits on pure spinor methods in worldsheet string theory~\cite[for example]{Berkovits}, and in papers by Cederwall and collaborators for field-theoretic applications. (See~\cite{Cederwall} for a review, as well as references therein.) 
One motivating goal was to provide a formalism suitable for studying supersymmetric higher-derivative corrections and, more generally, classifying possible supersymmetric interactions. From a modern perspective, pure spinor superfields are suited to this purpose because they resolve supermultiplets freely over superspace, leading to simple models of the corresponding interactions: just for example, the action for ten-dimensional super Yang--Mills theory becomes a Chern--Simons action functional, whereas the action for perturbative eleven-dimensional supergravity can be written with only cubic and quartic terms~\cite{Ced-towards,Ced-11d}.

In recent work~\cite{NV,perspectives}, the pure spinor superfield formalism was reinterpreted in modern mathematical language, and shown to work in the general setting of any super Lie algebra $\fp$ of ``super Poincar\'e type.'' (This is equivalent to the condition that $\fp$ admit a consistent $\Z$-grading with support in degrees zero, one, and two; the grading can be viewed as an action of an abelian one-dimensional Lie algebra by endomorphisms, corresponding to rescaling or ``engineering dimension'' from the physics perspective.) Any such algebra has a subalgebra $\ft = \fp_{>0}$ of ``supertranslations,'' and there is a corresponding affine superspace modeled on the group $T = \exp(\ft)$. (For some general context on supertranslation algebras, see~\cite{CCGN}.) The formalism then associates a supermultiplet on the body of~$T$ to any equivariant sheaf on the affine scheme $\hat Y$ of square-zero elements in~$\ft$ that is equivariant for both~$\fp_0$ and rescalings. (In the physics literature, this scheme is often called either the \emph{nilpotence variety}~\cite{NV} or the \emph{space of pure spinors}, although it may not have anything to do with either the spin representation or its minimal orbit, which is the space of pure spinors in the sense of Cartan and Chevalley. With respect to the $\Z$-grading, it is simply the space of Maurer--Cartan elements in~$\ft$.)  In fact, the supermultiplet is naturally freely resolved over~$T$, so that one automatically gets a superfield model.

Set up in this fashion, the pure spinor formalism works to produce multiplets in any dimension and with any amount of supersymmetry; it even works for non-standard examples of supersymmetry, such as the algebras that control residual supersymmetry in twisted theories. In~\cite{spinortwist}, the formalism was shown to commute with twisting in an appropriate sense, and was applied to swiftly and compactly compute the twists of ten- and eleven-dimensional supergravity multiplets, verifying the conjectural description from~\cite{CostelloLi} in the case of type IIB supergravity and providing new results for eleven-dimensional supergravity and type IIA. (See also~\cite{MaxTwist} for a direct computation of the maximal twist of eleven-dimensional supergravity.) 

Given this understanding of the pure spinor superfield formalism, it is, of course, natural to ask whether all multiplets admit a description of this type.
This question was answered in~\cite{EHSequiv}, in which a natural derived generalization of the pure spinor superfield formalism was given. The key observation is that the graded ring $R/I$ of functions on the affine scheme $\hat Y$ is the degree-zero Lie algebra cohomology of~$\ft$. 
As such, it is natural to replace $\hat Y = \Spec R/I$ by the derived model $\Spec C^\bu(\ft) $.
\cite{EHSequiv} showed that the pure spinor superfield formalism generalizes to a functor from equivariant $C^\bullet(\ft)$-modules to supermultiplets for~$\fp$, and in fact provides 
an equivalence of categories. As such, \emph{every} supermultiplet admits a pure spinor superfield description, and arises from a \emph{unique} equivariant sheaf on the affine dg scheme $\Spec C^\bu(\ft)$, up to an appropriate notion of equivalence; the equivariant sheaf can be constructed by taking the derived $\ft$-invariants of the global sections of the multiplet. The equivalence even works when there are no supersymmetries at all. 
As explained in~\cite{EHSequiv}, this result can be thought of as a version of Kapranov's formulation of Koszul duality~\cite{Kapranov}, which relates $D$-modules and $\Omega^\bu$-modules on the same space, performed for the translation-invariant objects on a super Lie group; it is also related to Koszul duality for the graded Lie algebra $\ft$, followed by an associated-bundle construction. (Yet another relation of pure spinors to Koszul duality occurs when viewing the functions on~$\hat Y$ as a commutative algebra and considering the Koszul dual graded Lie algebra, as done in~\cite{MovshevSchwarz,GS,GGST}; further work in this direction will appear in~\cite{CJPS}.)

This result provides further justification for the program of studying $\fp$-multiplets and their interactions using the (derived) algebraic geometry of $C^\bullet(\ft)$-modules. To get the best mileage out of the pure spinor technique, it is natural to start in a setting where the category of $C^\bu(\ft)$-modules, or at least the category of equivariant sheaves on~$\hat Y$, is relatively easy to understand.
A natural candidate is $\N=(1,0)$ supersymmetry in six dimensions, where $\hat Y$ is the space of two-by-four matrices of rank one; as such, the corresponding projective variety $Y = \Proj R/I$ is just $\P^1 \times \P^3$, sitting inside $\P^7$ via the Segre embedding.
In this note we take another step forward in the pure spinor program by providing a detailed case study, as well as developing new methods using techniques from projective algebraic geometry. 
There is a great abundance of geometrically interesting equivariant vector bundles on~$Y$. Taking the direct sum of the global sections of all twists of these bundles, we obtain graded equivariant modules over the ring of functions on the nilpotence variety, so that we can study the associated multiplets. In particular, we classify all multiplets originating from line bundles over $Y = \P^1 \times \P^3$; among others, this recovers the family of so-called ``$\cO(n)$-multiplets'' studied in the literature~\cite{KNTderivative, KuzenkoNovakTheisen, RulesProjective, LinchSugraProj, Galperin86}, and encompasses the vector multiplet and its antifield muliplet, as well as the hypermultiplet. (These three examples have already been studied via the pure spinor superfield formalism in~\cite{CN6d,Ced-6d}.)

Roughly speaking, we provide a link between vector bundles on $Y$ and $\fp$-multiplets in two steps, by combining the connection between quasicoherent sheaves on $\Proj R/I$ and graded $R/I$-modules (which is standard algebraic geometry) with the pure spinor superfield construction. Concretely, we convert a sheaf on $Y$ into a module by forming its graded module of global sections (i.e. by taking the sum of the global sections of all its twists).
Equivariant vector bundles form a subcategory of equivariant quasi-coherent sheaves;
conversely, one can assign a sheaf on $Y$ to each module, though it is important to note that the two operations are not inverses in general.
Following the results on twisting pure spinor superfields in~\cite{spinortwist}, we argue that modules whose associated sheaf is trivial correspond to multiplets that are perturbatively trivial in every twist (see Remark~\ref{rmk:twists}).

In turn, the category of graded equivariant $R/I$-modules sits as a subcategory inside equivariant $C^\bullet(\ft)$-modules, which (by~\cite{EHSequiv}) is equivalent to the category of multiplets.
We can summarize the situation with the following diagram:
\begin{equation}
\begin{tikzcd}
\cat{LineBundles}^{\fp_0}_Y \arrow[r,hook]
& \QCoh_Y^{\fp_0} \arrow[r, shift left, "\Gamma_*"] & \Mod_Y^{\fp_0} \arrow[l,shift left, "\sim"] \arrow[r,hook] & \Mod_{C^\bullet(\ft)}^{\fp_0} \arrow[r,shift left,"A^\bullet"] & \cat{Mult}_{\fp} \arrow[l,shift left, "C^\bullet"] 
\end{tikzcd}
\end{equation}
Since the inverse functor to the pure spinor superfield construction is given by taking derived $\ft$-invariants, classifying all multiplets associated to line bundles thus amounts to classifying all multiplets whose derived $\ft$-invariants are the graded global section module of a line bundle on the projective nilpotence variety. An alternative characterization of such multiplets is via their twists, which are necessarily holomorphic for minimal supersymmetry in six dimensions; in keeping with the results of~\cite{spinortwist}, one expects that the holomorphic twist of such a multiplet is of rank one over Dolbeault forms on the spacetime, and we verify this below. 

In addition to the classification, we translate duality theory for sheaves to the world of multiplets, extending the results of~\cite{perspectives}. To this end, we study the Cohen--Macaulay property and prove that, in good situations, the antifield multiplet can be constructed using the dualizing sheaf on the projective nilpotence variety.

Moving on, we develop some general methods regarding short exact sequences of sheaves in the pure spinor superfield formalism. 
These can be used to tackle higher-rank bundles; generalizations would allow for the construction of the multiplet associated to any higher-rank bundle via a resolution into a chain complex of sums of line bundles, though we do not pursue this in detail here. Our results show that the multiplet associated to a nontrivial extension of two sheaves is a deformation of the direct sum of the multiplets associated to each sheaf by a further differential, and we study such deformations explicitly at the level of component-field presentations of various multiplets.

We then use the Euler exact sequence, as well as the normal and conormal bundle sequences, to explicitly construct the multiplets associated to the tangent bundle, the normal bundle, and their duals. Several of these multiplets are of obvious physical interest; in particular, we identify the supergravity multiplet with the conormal bundle, and the gravitino multiplet with the pullback of the tangent bundle to the ambient space.

\subsection*{Acknowledgements}
We would like to give special thanks to R.~Eager, C.~Elliott, and B.~Williams for numerous conversations and collaboration on closely related projects. We also thank I.~Brunner, M.~Cederwall, J.~Palmkvist, for fruitful conversation.
This work is funded by the Deutsche Forschungsgemeinschaft (DFG, German Research Foundation) under Germany’s Excellence Strategy EXC 2181/1 — 390900948 (the Heidelberg STRUCTURES Excellence Cluster). I.S. is supported by the Free State of Bavaria.

\subsection*{Statement on conflicts of interest}
We certify that each and every author's interest in the material discussed in this manuscript is of
purely scientific nature.  We are not aware of any circumstances that could be construed as a
conflict of interest with any extraneous financial benefits or personal judgement.

\section{Preliminaries}
\label{prelim} 

\subsection{The pure spinor superfield formalism} \label{sec: prelim ps}
We briefly review the pure spinor superfield formalism, as reinterpreted in~\cite{NV,perspectives} and extended in~\cite{EHSequiv}; for a more traditional approach and an overview of the broader literature, the reader is referred to~\cite{Cederwall} and references therein.

\numpar[p:gradings][Grading conventions]
Throughout, we work with cochain complexes, both in the category of graded vector spaces and in the category of super vector spaces. Such an object is a bigraded vector space
\[
\bigoplus_{i,j} V^{i}_{j},
\]
equipped with a differential of bidegree $(i,j) = (1,0)$. Here, $i \in \ZZ$ denotes the cohomological degree; $j$ is viewed as internal to the category of graded vector spaces or super vector spaces. Correspondingly, $j \in \Z$ for graded vector spaces, or $j \in \Z/2\Z$ for super vector spaces. We will refer to the latter as the \emph{internal degree} (for graded vector spaces), or as the \emph{internal parity} (for super vector spaces).
The first grading (arising from the structure of the cochain complex) will be called the \emph{cohomological degree}. The sum of the two will be called the \emph{totalized degree}, or \emph{totalized parity} if understood modulo two.
The Koszul rule of signs applies throughout; the sign is determined by the {totalized} parity, so that $V^{i}_{j}$ is odd precisely when $i+j = 1\pmod 2$. 
The categories we define here were referred to as ``(lifted) dgs vector spaces'' in~\cite[Definition~2.1]{perspectives}, 

In the case when the internal degree is integral, it will sometimes be convenient to use another basis for the bidegree, which (in keeping with~\cite{CJPS}) we will call the \emph{Tate bidegree}. In this basis, we use the totalized degree, together with the negative of the internal degree, which we will refer to as the \emph{weight degree}. Thus $V^i_j$ has Tate bidegree $(i+j,-j)$.
We will use round brackets to refer to shifts of the weight grading, as is typical in projective geometry; we reserve square brackets for shifts in cohomological degree. Note, though, that we use slightly different conventions for each: shifts in the cohomological degree will be indicated explicitly, whereas the weight grading is seen as a $\C^\times$-action and thus left implicit. So we will write, for example, 
\[
V = \bigoplus_{i,j} V^{i}_{j}[-i].
\]

One advantage of the Tate bidegree is that it places the graded rings that appear in our constructions in totalized degree zero (so that constructions such as the spectrum work in standard fashion), with the weight grading assigning generators weight one as is standard. 
We summarize our conventions for the various generators of the commutative differential graded algebras we will construct in the sequel in Table~\ref{tab:grad}. The notation and other definitions will be explained in the sequel; the table and the following remarks will thus primarily be useful for future reference.

\begin{table}
    \begin{tabular}{cc|cc|cc|c}
        & & \multicolumn{2}{|c|}{\emph{Standard bidegree}} & \multicolumn{2}{|c|}{\emph{Tate bidegree}} \\
        & & cohomological & internal & totalized & weight & parity \\ 
        \hline
        \multirow{2}{*}{$C^\bu(\ft):$} & $\lambda \in \ft_1^\vee[-1]$ & $1$ & $-1$ & $0$ & $1$ & $+$ \\
        & $v \in \ft_2^\vee[-1]$ & $1$ & $-2$ & $-1$ & $2$ & $-$ \\ \hline 
        \multirow{2}{*}{$C^\infty(T):$} & $\theta \in \ft_1^\vee$ & $0$ & $-1$ & $-1$ & $1$ & $-$ \\
        & $x \in C^\infty(T_+) $ & $0$ & $0$ & $0$ & $0$ & $+$ \\ \hline
        \multirow{2}{*}{shifts:} & $\C[-1]$ & $1$ & $0$ & $1$ & $0$ & $-$  \\
        & $\C(-1)$ & $1$ & $-1$ & $0$ & $1$  & $+$
    \end{tabular}
    \caption{Grading conventions for generators of~$A^\bu$}
    \label{tab:grad}
\end{table}

When working with the component-field presentations of multiplets, we will use yet another basis for the bidegree, consisting of the cohomological and the totalized degrees taken together. In this basis, parity is determined by totalized degree modulo two. The advantage of this basis is that the bidegree corresponds (up to sign) to powers of the two generators $\lambda$ and $\theta$. The differential on a multiplet is of cohomological degree $+1$, but is not of homogeneous internal degree. Rather, the terms in the differential that are differential operators of degree $k$ are of totalized degree $1 - 2k$. (This occurs because $x$ is placed in degree zero, rather than $-2$, so as to obtain a bigraded vector bundle rather than just a bigrading on sections over the formal disk.)
Multiplets thus inherit a filtration from the \emph{internal} degree that is respected by the differential. 

We will often present supermultiplets using a tabular notation, with the cohomological display on the vertical axis (increasing downwards) and the totalized degree on the horizontal axis (decreasing to the right; note that the totalized degree of a multiplet is typically nonpositive, and counts \emph{negative} powers of the generators $\theta$).

\numpar[alg][Supertranslation algebras]
Let $\fp$ be a graded Lie algebra, supported in internal degrees zero, one, and two, and let $\ft = \fp_{>0}$ be its positively graded subalgebra. Concretely, this means that $\ft_1$ and $\ft_2$ carry representations of~$\fp_0$, and that the bracket $\Sym^2(\ft_1) \to \ft_2$ is $\fp_0$-equivariant. The central example will be any super Poincar\'e algebra, for which $\ft_1$ consists of supersymmetries, $\ft_2$ of translations, and $\fp_0$ of Lorentz and $R$-symmetries. 
In general, $\fp_0$ consists of (a sub Lie algebra of) the derivations of~$\ft$; all our constructions will be equivariant with respect to such endomorphisms. We emphasize that $\fp$ is concentrated in cohomological degree zero. Reducing the internal degree modulo two allows us to view $\fp$ as a super Lie algebra. 

We consider the Lie algebra cochains of~$\ft$, which are a bigraded commutative differential algebra with differential of bidegree $(1,0)$, freely generated by even elements in bidegree $(1,-1)$ and odd elements in bidegree $(1,-2)$.
Since $\ft$ is a central extension of the form 
\deq{
    0 \to \ft_2 \to \ft \to \ft_1 \to 0,
}
there is a corresponding fiber sequence of augmented bigraded cdga's of the form\footnote{Notice that in the category of \emph{augmented} bigraded cdga's $\mathbb{C}$ is the zero object.}
\deq{ \label{aucdga}
    \C \to C^\bu(\ft_1) \to C^\bu(\ft) \to C^\bu(\ft_2) \to \C.
}
The first arrow witnesses $C^\bu(\ft)$ as an algebra over the graded polynomial ring 
\deq{
    R = C^\bu(\ft_1) = \Sym^\bullet(\ft_1^\vee[-1]), 
}
whose generators sit in bidegree $(1,-1)$ and are thus of even parity. In fact, with respect to the Tate bidegree, $C^\bu(\ft)$ is the Koszul complex~\cite{Eisenbud, NoRe} over $R$ defined by the $\ft_2$-valued set of equations
\begin{equation}
    [Q,Q] = 0,
\end{equation}
where $Q \in \ft_1$.

These equations define a homogenous ideal $I\subset R$; the space 
\deq{
    \hat{Y} = \Spec R/I \subset \Spec R = \ft_1[1]
}
associated to this ideal is called the nilpotence variety, and consists of the space of square-zero supercharges in~$\ft$. (We emphasize, though, that we think of $\hat{Y}$ as an affine scheme.)
This space has been studied extensively for physical supertranslation algebras in~\cite{NV, ChrisPavel}. 
Explicitly, we can expand $Q$ in a basis $Q = \lambda^\alpha Q_\alpha$ and identify $R = \C[\lambda^\alpha]$. Denoting the structure constants of the bracket by $\Gamma^\mu_{\alpha \beta}$, the ideal then takes the form
\begin{equation}
I = (\lambda^\alpha \Gamma^\mu_{\alpha \beta} \lambda^\beta).
\end{equation}
We further observe that, with respect to the totalized grading, 
$H^0(\ft) = R/I$. In general, though, $I$ is not a complete intersection, so that the Koszul complex of~$I$ does not define a resolution of~$R/I$. Rather, $C^\bu(\ft)$ should be thought of as a derived enhancement of the standard (affine) nilpotence variety $\hat{Y}$. 

\numpar[formalism][Constructing multiplets] \label{ssec: construct}
As recalled in the introduction, the pure spinor superfield formalism is an equivalence of categories between $\fp$-multiplets and $\fp_0$-equivariant $C^\bu(\ft)$-modules; it restricts to $R/I$-modules, giving a systematic way to construct a class of supersymmetric multiplets by studying sheaves on the affine scheme $\hat{Y}$ of square-zero supersymmetries. 
Here, we give a rough overview, reviewing the aspects which are relevant to this work; for details, we refer to~\cite{perspectives,EHSequiv}. In particular, we make use of the notion of a supermultiplet as defined in~\cite{perspectives}. A supermultiplet is a chain complex of affine super vector bundles\footnote{For $V = \mathbb{R}^d$ or $V = \mathbb{C}^d$, a vector bundle $E$ on $V$ is called affine if its total space carries an action of the affine group $\mbox{Aff} (V) = V \rtimes \mbox{Spin} (V)$ such that the structural projection $E \stackrel{\pi}{\longrightarrow} V$ is equivariant with respect to the action of the affine group on the base $V$. See \cite[\S2.4.3]{perspectives} for details.} on spacetime (which here is a torsor for the vector space $\mathfrak{t}_2$ of translations), equipped with a homotopy action of the super Poincar\'e algebra that is compatible with the affine structure.

\begin{rmk}
    \label{rmk:pert}
    The notion of a supermultiplet, as defined here, is appropriate for models of \emph{perturbative} field theory.
    This is reflected in the fact that all ``fields'' are sections of vector bundles, rather than (for example) maps into more general manifolds. As such, every interacting field theory constructed in this manner will be a \emph{formal} moduli problem. We will not have need to discuss such issues in greater detail here, but refer the interested reader to~\cite{CG1,CG2}. We will, however, occasionally use the term ``perturbative'' for emphasis in the sequel, when we find it necessary to emphasize the context of our results.
\end{rmk}

We denote the supergroup of supertranslations by $T = \exp(\ft)$; its body is a vector space $\R^d$, thought of as an additive abelian group. There is a corresponding supergroup $P$, the analogue of the super Poincar\'e group, which is the semidirect product of $T$ with a group $P_0$ of its outer automorphisms. Recall that the algebra of \emph{free superfields} is defined to be the supercommutative algebra of smooth functions on~$T$: 
\begin{equation}
    C^\infty(T) = C^\infty(\R^d) \otimes \wedge^\bullet \ft_1^\vee = C^\infty(\R^d) \otimes_\C \C[\theta^\alpha] .
\end{equation}
(For a recent source for the relevant supergeometry, we refer to~\cite{Noja2021geometry}.) 
We will write $x^\mu$ for coordinates on the even part of~$T$ and $\theta^\alpha$ for anticommuting coordinates on the odd part. 
$C^\infty(T)$ is the regular representation of~$T$, and thus acquires two distinct actions of $\ft$ by derivations, namely by the left and right actions of~$T$ on itself. Each extends to an action of~$\fp$. We denote these by 
\begin{equation}
\mathpzc{R}\, ,\mathpzc{L} 
: \fp \longrightarrow \mathrm{Vect}(T) \: .
\end{equation}

Now let $\Gamma$ be a graded equivariant $R/I$-module\footnote{Here and in the following, equivariance will always mean $\mathfrak{p}_0$-equivariance, unless differently specified. Further, each weight-graded piece of $\Gamma$ is assumed to be $\mathfrak{p}_0$-equivariant.}. The pure spinor superfield associated to $\Gamma$ is the cochain complex
\begin{equation}
    A^\bullet(\Gamma) = \left(C^\infty(T) \otimes \Gamma \: , \: \cD = \lambda^\alpha \mathpzc{R}(Q_\alpha) \vphantom{3^3} \right) \: .
\end{equation}
The super Poincar\'e algebra acts on $A^\bullet(\Gamma)$ on the left, endowing $A^\bullet(\Gamma)$ with the structure of a strict multiplet~\cite[\S3.2 (for details) and \S3.5 (for an example)]{perspectives}. 

$A^\bullet(\Gamma)$ is equipped with a bigrading as explained above in~\S\ref{p:gradings}. 
The differential is odd and has cohomological degree $+1$, but is not homogeneous for the second integer grading. It does, however, preserve the decreasing filtration associated to the internal degree, which we will call the \emph{standard filtration}. Terms in the differential that are spacetime differential operators of order $k$ have internal degree $-2k$, and thus totalized degree $1-2k$. The super vector bundle structure arises by taking the totalized degree modulo two.

While $A^\bullet(\Gamma)$ is a strict multiplet, it is typically very large and does not resemble the standard component-field descriptions. These component-field descriptions arise by identifying a quasi-isomorphic subcomplex of~$A^\bullet(\Gamma)$, choosing a retraction to this complex, and transferring all the relevant structures along this retraction via homotopy transfer. 

A minimal component-field description can be canonically constructed as follows. In coordinates, the differential $\cD$ takes the form
\begin{equation}
    \cD = \cD_0 + \cD_1 = \lambda^\alpha \frac{\partial}{\partial \theta^\alpha} - \lambda^\alpha \Gamma^\mu_{\alpha \beta} \theta^\beta \frac{\partial}{\partial x^\mu},
\end{equation}
where $\cD_i$ has internal degree $-2i$.
It is clear upon inspection that
\begin{equation}
    \Gr A^\bu(\Gamma)
    = C^\infty(\R^d) \otimes \left( K^\bullet(\Gamma)   , \: \d_K \vphantom{3^3} \right)  ,
\end{equation}
where $(K^\bullet(\Gamma) , \d_K)$ is the Koszul complex of $\Gamma$. 
(We identify the bigrading on the Koszul complex with the Tate bidegree.) 
This implies that the fields of the minimal multiplet take values in the Koszul homology of $\Gamma$. 
(Note that, in our conventions, $K^\bu$ is cohomologically, therefore nonpositively, graded.)

On general grounds, Koszul homology can be computed by considering a minimal equivariant free resolution $(L,d_L)$ of $\Gamma$ in $R$-modules (for more details, as well as the existence and uniqueness of the component fields, we refer to~\cite[\S3.4.1 and Lemma~3.10]{EHSequiv}). 
Such a minimal free resolution is also bigraded, by resolution degree and by the weight grading on~$R$. In our conventions, this bigrading is identified with the Tate bidegree. 
$L^\bu$ sits in nonpositive totalized degree, and the differential $d_L$ has Tate bidegree $(1,0)$. 
For convenience, we will often write this in the form 
\deq{ \label{freeres}
    L^\bu = \bigoplus_{k \geq 0} L^{-k} = \bigoplus_{k \geq 0} W^k\otimes R[k],
}
where $W^k$ is the finite-dimensional weighted $\fp_0$-representation in which the generators of~$L^{-k}$ transform. 
There is then an isomorphism of Tate-bigraded $\fp_0$-modules between the generators of $L^\bu$ and the Koszul homology. 
This was established in~\cite{MovshevSchwarzXu, LosevCS}, and is a key insight,
allowing one to quickly compute component fields by freely resolving $\Gamma$ over~$R$.
In~\cite[\S\S3.6--7]{perspectives} and~\cite[Theorem 4.8ff]{EHSequiv}, it was further shown that the resolution differential $d_L$ encodes the $\fp$-module structure (as had been conjectured in~\cite{BerkovitsSupermembrane}).

It remains to recall how the structure of a multiplet is defined on the component fields. Choosing an equivariant set of representatives for Koszul homology, we consider a deformation retract of the form 
\begin{equation}
\begin{tikzcd}
    \arrow[loop left]{l}{h}\left( \vphantom{3^3} \Gr A^\bullet (\Gamma)  , \: \cD_0 \right) \arrow[r, shift left, "p"] & \left( \vphantom{3^3} H^\bullet(\Gr A^\bullet(\Gamma))  , \, 0 \right) \arrow[l, shift left, "i"] .
\end{tikzcd}
\end{equation}
Applying homotopy transfer to this data, 
we obtain a new multiplet with underlying complex $H^\bullet(\Gr A^\bullet(\Gamma))$, new differential $\cD'$ (obtained from $\cD_1$ via homotopy transfer), and supersymmetry action $\rho'$ (obtained from $L$).\footnote{See~\cite[\S 3.6]{perspectives} for more details; in particular, the associated spectral sequence always degenerates for degree reasons, since the differential on the $E_n$ page is of degree $2n-1$ in the $\theta$ variables.} The multiplet obtained in that way is minimal in the sense that its differential only contains differential operators of non-zero degree; we emphasize, however, that this characterization is model-dependent. By construction, this multiplet is quasi-isomorphic to $A^\bullet(\Gamma)$; in physics language, a quasi-isomorphism in the category of multiplets encodes the notion of an equivalence of perturbative free field theories. (See~\cite[\S 2]{EHSequiv} for more discussion). We call this multiplet the component-field multiplet (or minimal multiplet) associated to $\Gamma$, and denote it by $\mu A^\bullet(\Gamma)$. When we want to refer to the underlying vector bundle of $\mu A^\bullet(\Gamma)$---which is the associated bundle of the bigraded Lorentz representation on $H^\bu(K^\bu(\Gamma))$---we will write $\mu A^\bullet(\Gamma)^\#$. This notation will in general apply to any multiplet and denote its underlying bigraded vector bundle, considered without the data of the differential and the module structure.

\numpar[hilbser][Computational techniques]
In practice, there are various ways to extract information on $\mu A^\bu(\Gamma)$ from the algebraic input of a $\ZZ$-graded module $\Gamma$, always assumed to be finitely generated. The Hilbert series of such a module $\Gamma$ is its graded dimension, viewed as a formal Laurent series 
\begin{equation} \label{fls}
    \grdim(\Gamma) = \sum_{\ell \in \ZZ}\dim(\Gamma_\ell) \: t^\ell \in \Z(\!(t)\!),
\end{equation}
where the $\Gamma_\ell$ are the $\ell$-weighted summands of $\Gamma$. Note that the series is well-defined, as the graded components $\Gamma_\ell$ are finite-dimensional vector spaces.
Taking $\ell_0$ to be the $t$-adic valuation of $\grdim(\Gamma)$, we can write
\begin{equation} \label{fps}
    t^{-\ell_0} \cdot \grdim(\Gamma) =  \sum_{\ell = 0}^\infty \dim(\Gamma_{\ell+\ell_0}) \: t^\ell \in \Z[\![t]\!].
\end{equation}
This implies that, up to a total shift of $\Gamma$ by $\ell_0$, one gets a formal power series, instead of a formal Laurent series; this shift corresponds to a shift in the cohomological degree of the corresponding multiplet.
By the Hilbert syzygy theorem, this series can be resummed to yield
\begin{equation} \label{grdimP}
\grdim(\Gamma) = t^{\ell_0}\frac{P(t)}{(1-t)^n},
\end{equation}
where $n= \dim(\ft_1)$ is the number of variables in $R$ and $P$ is a polynomial with integer coefficients.
The coefficients of $P$ then describe the graded free $R$-module underlying the minimal resolution $(L,d_L)$, and thus record the Betti numbers of the graded bundle $\mu A^\bu(\Gamma)^\#$. Concretely, it follows from~\eqref{freeres} that
\deq{
    \begin{aligned} 
    \grdim(\Gamma) = t^{\ell_0}\sum_{k=0}^{\infty} (-1)^k \grdim(L^{-k}) &= t^{\ell_0}\sum_{k=0}^{\infty} (-1)^k \grdim(W^k) \grdim(R)  \\
    &= \frac{t^{\ell_0}}{(1-t)^n} \sum_{k=0}^{\infty} (-1)^k \grdim(W^k).
\end{aligned}
}
The fact that $L^\bu$ is minimal means that all terms in the differential are in the maximal ideal of~$R$, which implies that $\wt(W^k) \geq k$. Thus $t^k \mid \grdim(W^k)$.
Taking \eqref{grdimP} into account, we deduce that
\deq{
    P(t) = \sum_{k=0}^{\infty} (-1)^k \grdim(W^k) = \sum_{k,\ell} (-1)^k \dim (W^k_\ell) t^\ell, 
}
recalling that the pair $(k,\ell)$ refers to Tate bidegree.
Of course, cancellations are possible, so that $P(t)$ does not contain complete information about the Koszul homology. 
But in practice, one is often in a situation where no cancellations occur. In any case, if an acyclic resolution differential can be constructed, one has demonstrated that no further generators appear in~$L^\bu$.

This procedure can be improved in the $\fp_0$-equivariant setting. Since $\Gamma$ is graded and equivariant, each weighted summand $\Gamma_\ell$ is a finite-dimensional representation of $\fp_0$. 
We can consider the \emph{equivariant} Hilbert series as a formal power series in the representation ring of $\fp_0$\footnote{The representation ring of $\fp_0$ is the free abelian group on the set of finite-dimensional irreducible $\fp_0$-representations, with the multiplication induced by the tensor product of representations.} by setting
\begin{equation} \label{hilb}
    \Hilb(\Gamma) = t^{\ell_0} \sum_{\ell=0}^{\infty} \Gamma_\ell \: t^\ell  \in \Rep(\fp_0)(\!(t)\!).
\end{equation}
We can then rewrite the Hilbert series in the form
\begin{equation}
\begin{split}
    \Hilb(\Gamma) &= t^{\ell_0} \left[ \sum_{d = 0}^{\infty} \Sym^d(\ft_1^\vee) \: t^d \right] \otimes \left[ \sum_{k,\ell} (-1)^k W_\ell^k  t^\ell \right]  \\
    &= t^{\ell_0} \Hilb(R) \cdot \Hilb(\chi(W^\bu)),
\end{split}
\end{equation}
where we write $\chi (W^\bullet)$ for the element $\sum_k (-1)^k W^k$ of the representation ring.
Comparing coefficients order by order, one obtains a system of equations which allows to identify $\chi(W^\bu)$, and thus (at least in favorable cases) $W^\bu$ itself:
\begin{equation} \label{eq: recursion gen}
\begin{split}
    \chi(W^\bu)_0 &= \Gamma_{\ell_0} \\
    \chi(W^\bu)_1 &= \Gamma_{\ell_0 +1} - \ft_1^\vee \otimes \chi(W^\bu)_0 \\
\vdots& \\
\chi(W^\bu)_k &= \Gamma_{\ell_0 + k} - \sum_{d=1}^{k} \Sym^d(\ft_1^\vee) \otimes \chi(W^\bu)_{k-d}.
\end{split}
\end{equation} 
This technique is used frequently in the work of Cederwall and collaborators, and we will apply it in examples in what follows.

\subsection{Supersymmetry in six dimensions}
We review the concrete form of the above constructions for the six-dimensional $\N=(1,0)$ supertranslation algebra. 
We always work with complexified algebras and thus ignore signature. The translations are thus $\ft_2 = V = \C^6$, where $V$ denotes the vector representation of $\Spin(6)$.

We denote the two chiral spinor representations of $\Spin(6)$ by~$S_\pm$. There is an exceptional isomorphism identifying $\lie{so}(6) \cong \lie{sl}(4)$.
Under this isomorphism, $S_+$ is identified with the fundamental representation of $\lie{sl}(4)$ and $S_- = (S_+)^\vee$ with the antifundamental, while $V$ is identified with the two-form.
There are thus $\Spin(6)$-equivariant isomorphisms
\begin{equation} \label{eq: wedge iso}
\wedge^2 (S_\pm) \cong V .
\end{equation}
The six-dimensional $\cN=(1,0)$ supertranslation algebra takes the form
\begin{equation}
    \ft = V(-2) \oplus ( S_+ \otimes U)(-1),
\end{equation}
where $U= (\C^2,\omega)$ is a symplectic vector space. 
The bracket takes the form
\begin{equation} \label{eq: bracket}
[-,-] = \wedge \otimes \omega : (S_+ \otimes U) \otimes (S_+ \otimes U) \longrightarrow V  ,
\end{equation}
where the isomorphism~\eqref{eq: wedge iso} is used. 
The bracket is thus equivariant with respect to the obvious action of $\Sp(1)$ on $U$, identifying the $R$-symmetry group as $\Sp(1) \cong SL(2)$, since we are working in a complexified setting. (The relevant Lie algebra is thus $\mathfrak{sl}(2)$.)

Explicitly, using a basis $Q_\alpha^i$ we can write
\begin{equation}
[Q_\alpha^i, Q_\beta^j] = \varepsilon^{ij} \Gamma^\mu_{\alpha \beta} P_\mu \: .
\end{equation}
Here $i,j=1,2$ are indices for $U$ and $\alpha,\beta=1 \dots 4$ are indices for $S_+$. Identifying $R = \C[\lambda^\alpha_i]$, the defining ideal of the nilpotence variety is spanned by the six quadratic equations
\begin{equation}
I = (\lambda^\alpha_i \Gamma^\mu_{\alpha\beta} \varepsilon^{ij} \lambda^\beta_j) \: .
\end{equation} 
These equations can be conveniently packaged as follows. Let us define the rank of a supercharge $Q \in \ft_1$ to be the rank of the associated linear map $(S_+)^\vee \longrightarrow U$. From~\eqref{eq: bracket} it follows immediately that the square-zero supercharges are precisely those of rank one. In terms of coordinates this means that the ideal $I$ is spanned by the $2\times 2$ minors of the matrix with entries $\lambda^\alpha_i$,
\begin{equation}
\begin{pmatrix}
\lambda^1_1 & \lambda^2_1 & \lambda^3_1 & \lambda^4_1 \\
\lambda^1_2 & \lambda^2_2 & \lambda^3_2 & \lambda^4_2
\end{pmatrix} .
\end{equation}
Accordingly, the nilpotence variety $\hat{Y} = \Spec R/I$ can be thought of as the space of rank one matrices inside $M^{2\times 4}(\C)$. Its projective version $Y =\Proj R/I$ can be identified with the product of two projective spaces via the Segre embedding. In more detail, the square-zero supercharges are precisely those which can be written as
\begin{equation}
Q = \xi \otimes r \quad \text{with} \quad \xi \in S_+ \: , \: r \in U .
\end{equation}
Interpreting $[r_0:r_1]$ and $[\xi_0:\dots :\xi_3]$ as homogeneous coordinates on $\P^1$ and $\P^3$ respectively identifies $Y$ with the image of the Segre embedding
\begin{equation}
\sigma: \P^1 \times \P^3 \longrightarrow \P^7 \qquad ([r_0:r_1] , [\xi_0:\dots \xi_3]) \mapsto [r_0 \xi_0: \dots : r_1 \xi_3] \: .
\end{equation}
We can thus explore supermultiplets in six dimensions using the algebraic geometry of projective spaces.
\subsection{From sheaves on projective schemes to modules}
As explained above, the pure spinor superfield formalism constructs a supersymmetric multiplet from a graded equivariant $R/I$-module. Clearly, these are closely related to sheaves of $\cO_{\hat{Y}}$-modules on $\hat{Y}$: For any affine scheme $X = \Spec S$ there is an equivalence of categories between quasi-coherent sheaves of $\cO_X$-modules and $S$-modules. Explicitly, this equivalence is given by taking global sections
\begin{equation}
\QCoh_{\cO_X} \longrightarrow \Mod_S \qquad  \cF \mapsto \Gamma(\mathrm{Spec}(S), \cF) \: ,
\end{equation}
and conversely assigning
\begin{equation}
\Mod_S \longrightarrow \QCoh_{\cO_X} \qquad  M \mapsto \tilde{M} \: ,
\end{equation}
where $\tilde{M}$ is defined by the requirement $\tilde{M}(D_f) = M_f$ for all $f \in S$.\footnote{Here $D_f \subseteq \Spec S$ denotes all prime ideals of $S$ not containing $f$ and $M_f$ the localization of $M$ at $f$.} If $S$ is graded, one can think of the grading as defining a $\lie{gl}_1$-action on~$\Spec S$; it is then possible to define an equivalence between graded $S$-modules and quasicoherent sheaves of~$\O_X$-modules on~$X$ that are equivariant for rescalings.

One can thus always think of the input to the (underived) pure spinor superfield formalism geometrically as a $(\fp_0 \oplus \lie{gl}_1)$-equivariant sheaf on the affine nilpotence variety. 
It is tempting to ask if one can picture the situation using the geometry of sheaves on~$Y = \Proj R/I$.
Here, the situation is geometrically compelling, but a bit less unequivocal. From a graded $S$-module $M$, we can construct a quasi-coherent sheaf on $\Proj S$ by setting $\tilde{M}(D_f) = (M_f)_0$. By the definition of the $\Proj$-construction we have $\tilde{S} = \cO_{\Proj S}$. The twisting sheaves are defined by
\begin{equation}
\cO_{\Proj S}(n) = \widetilde{S(n)} \: .
\end{equation}
For a sheaf $\cF$ on $\Proj S$, we define the associated (graded) $S$-module to be
\begin{equation} \label{Gammast}
\Gamma_*(\cF) = \bigoplus_{n \in \Z} H^0(\Proj S , \cF(n)) \: ,
\end{equation}
where $H^0 = \Gamma : \QCoh_{\cO_X} \rightarrow \Mod_{\CC}$ is the global section functor. We will call $\Gamma_*(\cF)$ the graded global section module of $\cF$. The relation between Serre twists and internal grading of $\Gamma_\ast (\cF)$ is given by 
$\Gamma_*(\cF)_n = H^0 (\Proj S, \cF(n)).$
With a slight abuse of notation, given an equivariant sheaf $\cF$ on $\Proj S$ we will define its equivariant Hilbert series as 
\begin{equation}
\Hilb({\mathcal{F}}) \defeq \Hilb (\Gamma_\ast (\mathcal{F})),
\end{equation} 
where $ \Hilb(\Gamma_\ast (\mathcal{F}))$ is defined as in \eqref{hilb}.

Finally, it is to be observed that in general the above assignments \eqref{Gammast} no longer give an equivalence of categories, but we can still use $\Gamma_*(-)$ to construct large families of input data for the pure spinor superfield formalism from sheaves on the projective version of the nilpotence variety. This is in particular useful in the case of $\cN=(1,0)$ supersymmetry in six dimensions, since---as we explained above---the projective version of the nilpotence variety can be identified with $\P^1 \times \P^3$ and equivariant sheaves on this space are very well understood geometrically.

\subsubsection{What the projective perspective misses}
Contrary to the affine case, the functors $\sim$ and $\Gamma_*$ do not yield an equivalence of categories. While it is true that
\begin{equation}
	\widetilde{\Gamma_*(\cF)} \cong \cF
\end{equation}
for any quasicoherent sheaf $\cF$, it can happen that $\Gamma_*(\tilde{M})$ is not isomorphic to the original module $M$. Let us restrict to the case where $S = R$ is a polynomial ring and $M$ is a finitely generated graded module. 
Consider the class $\cC$ of modules $M$ such that $M_n = 0$ for $n$ large enough. One finds that these are precisely the modules which are in the kernel of $\sim$. One has the following result:
\begin{prop}[\cite{Serre1955}]
	Let $M$ be a graded $S$-module. Then
	\begin{equation}
		\tilde{M} = 0 \iff M \in \cC \: .
	\end{equation}
\end{prop}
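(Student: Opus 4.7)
The plan is to establish the two implications separately. For the easy direction, $M \in \cC \Rightarrow \tilde M = 0$, it suffices to show $(M_f)_0 = 0$ for every homogeneous $f \in S$ of positive degree, since the open sets $D_+(f)$ form a cover of $\Proj S$. Given any element $m/f^k \in (M_f)_0$, the numerator $m$ lies in degree $k \deg f$. Because $\deg f > 0$, one can multiply by a high power of $f$ to land in arbitrarily large degree, and by hypothesis $M_d = 0$ for $d$ larger than some $N_0$. Hence $f^\ell m = 0$ in $M$ for $\ell$ sufficiently large, so $m/f^k = 0$ in $M_f$, and the sheaf $\tilde M$ vanishes.

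For the converse, observe first that $\tilde M = 0$ implies the twisted sheaf $\widetilde{M(d)} \cong \tilde M \otimes \cO_{\Proj S}(d)$ vanishes for every integer $d$. Evaluating on $D_+(f)$ this yields $(M_f)_d = 0$ for all $d$ and every homogeneous $f$ of positive degree; equivalently, every homogeneous $m \in M$ satisfies $f^N m = 0$ for some $N$ depending on $m$ and $f$. Applying this to the finitely many variables $x_0, \dots, x_r$ of $S$ and to a finite set of homogeneous generators $m_1, \dots, m_s$ of $M$ (which exist by finite generation), one can choose a single integer $N$ such that $x_i^N m_j = 0$ for all $i, j$.

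The final step upgrades this pointwise annihilation to uniform annihilation by a fixed power of the irrelevant ideal. A simple pigeonhole argument shows that any monomial in $x_0, \dots, x_r$ of total degree strictly greater than $(r+1)(N-1)$ has at least one exponent $\geq N$, and therefore annihilates every generator $m_j$, hence all of $M$. Consequently a fixed power of the irrelevant ideal annihilates $M$, and combined with the bound on the degrees of the generators, this forces $M_d = 0$ for all $d$ sufficiently large, proving $M \in \cC$.

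The main obstacle is this last passage from pointwise to uniform annihilation, which is where the finite generation of $M$ and the finiteness of the variable set enter essentially. Without these hypotheses, one can construct modules whose associated sheaf is zero yet whose homogeneous pieces remain nonzero in arbitrarily large degree; the hypothesis that $S$ is a polynomial ring is precisely what allows the pigeonhole step to conclude.
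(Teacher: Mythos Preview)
The paper does not supply its own proof of this proposition; it is quoted from Serre's foundational paper and stated without argument. Your proof is correct and is essentially the classical one (cf.\ Hartshorne, \S II.5, or Serre's FAC). One small remark: the identification $\widetilde{M(d)} \cong \tilde M \otimes \cO(d)$ that you invoke in the converse direction requires $S$ to be generated in degree one over $S_0$; this holds in the paper's setting since $S = R$ is the polynomial ring $\C[\lambda^\alpha]$ with all generators in weight one, but it is worth making the dependence explicit, as you do at the end. Your closing commentary on the necessity of finite generation is also accurate: for instance, $M = \bigoplus_{n \geq 0} (R/R_+)(-n)$ has $\tilde M = 0$ while $M_n \neq 0$ for all $n \geq 0$.
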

For the pure spinor superfield formalism, this means that multiplets corresponding to modules which are concentrated in finitely many degrees cannot be obtained from sheaves on the projective nilpotence variety. One such example is the free superfield $A^\bullet(\C)$ which is constructed from the trivial module $\C$ (thought of as the quotient of $R$ by the maximal ideal corresponding to the origin). The corresponding sheaf on the affine nilpotence variety is the skyscraper sheaf with value $\C$ at the origin; the associated sheaf on the projective nilpotence variety is trivial.

In general, such sheaves must have zero-dimensional support. The support of an equivariant sheaf must consist of a union of orbits of the $P_0$-action; since we only consider sheaves that are equivariant for rescaling, the origin is the unique zero-dimensional orbit, so that any module in the kernel of~$\sim$ defines a sheaf supported entirely at the origin. 

\begin{rmk}
    \label{rmk:twists}
    It is natural to wonder how conditions on the support of a sheaf translate into properties of the corresponding multiplet. An intuitive answer is suggested by the results of~\cite{spinortwist} on twisting in the pure spinor formalism. There, it was noted that deforming a super Poincar\'e-type algebra by a square-zero supercharge commutes with forming the pure spinor multiplet of the structure sheaf. When $Y$ is smooth (as is the case here), only holomorphic twists are available, and the computations in~\cite{spinortwist} imply that the holomorphic twist of a given multiplet is freely generated over the Dolbeault complex on spacetime by the stalk of the corresponding sheaf at the holomorphic supercharge. We do not explain this in detail here, but will remark from time to time on the physical interpretations of our results that it suggests.
\end{rmk}

In keeping with Remark~\ref{rmk:twists}, 
we expect that multiplets corresponding to sheaves in the kernel of~$\sim$ are precisely those that are perturbatively trivial in every possible twist. We note that the free superfield falls into this class.

\subsection{Some natural equivariant vector bundles} 
In the bulk of this work we are going to consider various vector bundles over the nilpotence variety $Y  \cong \P^1 \times \P^3$ and construct the associated multiplets using the pure spinor superfield formalism. For later reference and completeness, we now introduce the bundles that will appear later on.
\subsubsection{Line bundles}
The product space geometry of the nilpotence variety $Y \cong \proj 1 \times \proj 3$ makes it easy to describe all of its line bundles. Indeed, holomorphic line bundles are classified up to isomorphism by the Picard group $\Pic (Y) \cong H^1 (Y, \mathcal{O}^\ast_Y)$, which can be easily computed using the exponential short exact sequence
\bear
\xymatrix{
	0 \ar[r] & \mathbb{Z}_{\proj 1 \times \proj 3} \ar[r] & \mathcal{O}_{\proj 1 \times \proj 3}\ar[r] & \mathcal{O}_{\proj 1 \times \proj 3}^\ast \ar[r] & 0
}
\eear
and its related long exact sequence in cohomology. In particular, one finds the isomorphism $\Pic (\proj 1 \times \proj 3) \cong \mathbb{Z}\oplus \mathbb{Z}$, which tells that every line bundle on the product variety $\proj 1 \times \proj 3$ arises from line bundles defined on its factors $\proj 1$ and $\proj 3$. (Recall that $\Pic (\proj n) \cong \mathbb{Z}$ for any $n\geq 1 $.) In fact, given the structural projections 
\bear
\xymatrix{
	& \ar[dl]_{\pi_1} \proj 1 \times \proj 3 \ar[dr]^{\pi_3}\\
	\proj 1  & & \proj 3}  
\eear
from $\mathbb{P}^1 \times \mathbb{P}^3$ to its cartesian components, the line bundles on $\proj 1 \times \proj 3$ are all given by the exterior tensor product of a pair line bundles defined over $\proj 1$ and $\proj 3$ respectively. In other words,
\deq{
\mathcal{O}_{\proj 1 \times \proj 3} (n, m) \defeq 
    \mathcal{O}_{\proj 1} (n) \boxtimes \mathcal{O}_{\proj 3} (m) \defeq \pi^\ast_1 \mathcal{O}_{\proj 1} (n) \otimes_{\mathcal{O}_{ \mathbb{P}^{\scalemath{0.4}{1}} \times \mathbb{P}^{\scalemath{0.4}{3}} } }\pi^\ast_3 \mathcal{O}_{\proj 3} (m)  
\: ,
\quad (n, m) \in \mathbb{Z}^{\oplus 2}.
}
(The notation is standard.) Note that the generators of the Picard group $\Pic (\mathbb{P}^1 \times \mathbb{P}^3)$ are given by $\mathcal{O}_{\proj 1 \times \proj 3}(1, 0)$ and $\mathcal{O}_{\proj 1 \times \proj 3}(0, 1)$; the connecting (iso)morphism $\delta_2 : \Pic (\mathbb{P}^1 \times \mathbb{P}^3) \rightarrow \mathbb{Z}\oplus\mathbb{Z}$ thus carries $\cO_{\proj 1 \times \proj 3}(n,m)$ to $(n,m)$, and tensor product yields an isomorphism (of $\mathcal{O}_{\proj 1 \times \proj 3}$-modules) 
\bear
\mathcal{O}_{\proj 1 \times \proj 3} (n, m) \otimes \mathcal{O}_{\proj 1 \times \proj 3} (k , l) \cong \mathcal{O}_{\proj 1 \times \proj 3} (n + k, m + l)
\eear 
for any $(n, m), (k, l) \in \mathbb{Z}^{\oplus 2}$. We will often use the shorthand $\O(n,m) = \O_{\proj 1 \times \proj 3}(n,m)$, since we focus on the example of six-dimensional minimal supersymmetry in this paper. Finally, we will denote a $k$-twisting sheaf for the nilpotence variety $Y$ by 
\bear
\mathcal{O}_Y (k) \defeq \mathcal{O}_{\proj 1 \times \proj 3} (k, k). 
\eear

\subsubsection{Tangent and cotangent bundles}
Similarly, tangent and cotangent bundles on a product variety can be reconstructed by the tangent and cotangent bundles of its Cartesian components. In fact, the tangent bundle of $\proj 1 \times \proj 3 $ is given by the exterior direct sum
\bear \label{tang}
\mathcal{T}_{\proj 1 \times \proj 3} \cong  \pi^\ast_{1} \mathcal{T}_{\proj 1} \oplus \pi^\ast_3 \mathcal{T}_{\proj 3}  \defeqinv \mathcal{T}_{\proj 1} \boxplus \mathcal{T}_{\proj 3}. 
\eear
Note that $\mathcal{T}_{\proj 1} $ is a line bundle and one has $\mathcal{T}_{\proj 1} \cong \mathcal{O}_{\proj 1} (+2)$, while $\mathcal{T}_{\proj 3}$ is an ample non-decomposable vector bundle of rank three. The tangent bundle on any projective space $\P^n$ sits in the Euler exact sequence
\bear \label{euler}
\xymatrix{
	0 \ar[r] & \mathcal{O}_{\proj n} \ar[r] & \mathcal{O}_{\proj n} (+1) \otimes V_{n+1} \ar[r] & \mathcal{T}_{\proj n} \ar[r] & 0,
}
\eear
where $V_{n+1}$ is a $(n+1)$-dimensional complex vector space that carries the fundamental representation of $\mathfrak{sl}_{n+1}$. The Euler exact sequence \eqref{euler} is a short exact sequence of $\mathfrak{sl}_{n+1}$-equivariant sheaves; this will play a role in \S\ref{sec: normal}, when we will study the multiplet associated to the tangent bundle $\mathcal{T}_Y $ of the nilpotence variety.

In a similar fashion, the cotangent bundle $\Omega^1_{Y} \defeq \SHom_{\mathcal{O}_{\mathbb{P}^{\scalebox{0.3}{1}} \times \mathbb{P}^{\scalebox{0.3}{3}}}} (\mathcal{T}_{\proj 1 \times \proj 3} , \mathcal{O}_{\proj 1 \times \proj 3})$ of the nilpotence variety $Y$ is given by the exterior direct sum 
\bear
\Omega^1_{\proj 1 \times \proj 3} \cong \pi^\ast_1 \Omega^1_{\proj 1} \oplus \pi^\ast_3 \Omega^1_{\proj 3} = \Omega^1_{\proj 1} \boxplus \Omega^1_{\proj 3},
\eear
where now $\Omega^1_{\proj 1} \cong \mathcal{O}_{\proj 1} (-2).$ Note that, taking the dual of the Euler sequence~\eqref{euler}, one finds 
\bear
\xymatrix{
	0 \ar[r] &\Omega^1_{\proj n} \ar[r] & \mathcal{O}_{\proj n} (-1) \otimes V^\vee_{n+1} \ar[r] & \mathcal{O}_{\proj n} \ar[r] & 0,
}
\eear
which in turn describes the cotangent bundle on any projective space $\P^n.$

\subsubsection{Normal and conormal bundles} \label{ncnbundle}

\noindent Let us now consider $Y$ via its Segre embedding $\sigma: Y \hookrightarrow \proj 7.$ (We recall from \S\ref{prelim} that this embedding is canonically associated to the datum of the supertranslation algebra $\ft$.) Having introduced the tangent bundle $\mathcal{T}_Y$, one defines the normal bundle $\mathcal{N}_{Y/\proj 7}$ of $Y$ in $\mathbb{P}^7$ to be quotient bundle $\mathcal{T}_{\proj 7}|_Y / \mathcal{T}_Y$, where $\mathcal{T}_{\proj 7}|_Y \defeq \sigma^\ast \mathcal{T}_{\proj 7}.$ As such, the normal bundle sits in the exact sequence 
\bear \label{normalbundle}
\xymatrix{
	0 \ar[r] & \mathcal{T}_Y \ar[r]^{d\sigma \quad} & \mathcal{T}_{\proj 7}|_Y \ar[r] & \mathcal{N}_{Y / \proj 7} \ar[r] & 0,
}
\eear
of vector bundles on $Y$,
which will be referred to as the normal bundle exact sequence. Dualizing~\eqref{normalbundle}, 
one obtains the exact sequence defining the conormal bundle:
\bear \label{conormalbundle}
\xymatrix{
	0 \ar[r] & \mathcal{N}^\vee_{Y/\proj 7} \ar[r] & \Omega^1_{\proj 7}|_Y \ar[r]^{d\sigma^\vee} & \Omega^1_{Y} \ar[r] & 0.
}
\eear
The conormal bundle $\mathcal{N}^\vee_{Y/\proj 7} $ is thus the kernel of the morphism of vector bundles $d\sigma^\vee: \Omega^1_{\proj 7}|_Y \rightarrow \Omega^1_{Y}$.
Another characterization of the conormal bundle $\mathcal{N}^\vee_{Y / \proj 7}$ is possible using the sheaf of ideals $\mathcal{J}_Y $, which is defined as the kernel of the morphism of sheaves $\sigma^\sharp : \mathcal{O}_{\proj 7} \rightarrow \sigma_\ast \mathcal{O}_{Y}$. In fact, there is a natural isomorphism of vector bundles on $Y$ given by $\sigma^\ast (\mathcal{J}_Y / \mathcal{J}_Y^2) \cong \mathcal{N}^\vee_{Y/\proj 7}.$ In the following, since no confusion regarding the ambient space can arise, we will denote the normal and conormal bundles with respect to the Segre embedding by $\cN_Y$ and $\cN_Y^\vee$.

\section{A family of multiplets from line bundles}
\subsection{General procedure}
Let us now classify all multiplets associated to the infinite family of line bundles $\cO(n,m)$. We will denote the multiplets by 
\deq{
    \mu A^\bu(n,m) \defeq \mu A^\bu(\Gamma_*(\O(n,m))).
}
    As a first observation, we note that the construction exhibits the following symmetry under twists of line bundles\footnote{In general, this follows from the relation between Serre twists and internal grading, see \eqref{Gammast} and the comment right after.}:
\begin{equation} \label{eq: shift}
\begin{split}
\Gamma_*(\cO(n+k,m+k)) &= \bigoplus_{d \in \Z} H^0(\cO(n+k+d,m+k+d)) \\
&= \Gamma_*(\cO(n,m)) (k).
\end{split}
\end{equation}
This implies that the multiplets $\mu A^\bu(n,m)$ and $\mu A^\bu(n+k,m+k)$ agree up to a total degree shift. 
Since the weight grading of a graded equivariant $R/I$-module becomes the cohomological grading of the corresponding multiplet, we have that
\deq{
    \mu A^\bu(n+k,m+k) = \mu A^\bu\left( \vphantom{3^3} \Gamma_*(\cO(n,m)) (k) \right) 
    = \mu A^\bu\left( \vphantom{3^3} \Gamma_*(\cO(n,m)) \right)[k] = \mu A^\bullet(n,m)[k].
}
It is thus sufficient to consider the line bundles $\cO(n,0)$ and $\cO(0,m)$ for $n,m \geq 0$. (Equivalently, one could also consider the family $\cO(n,0)$ for $n \in \ZZ$.)

We will identify the field content of the multiplets using the technique sketched above in~\S\ref{hilbser}. We resum the equivariant Hilbert series, working in the ring of formal power series with coefficients in the representation ring of $\mathfrak{sl}(2) \times \mathfrak{sl}(4)$, and read off the equivariant structure of the minimal free resolution from its numerator.

Recall that $\Gamma_*(\cO(n,m))_d = \C[x_0,x_1]_{n+d} \otimes \C[y_0,\dots,y_3]_{m+d}$.
The monomials of degree $d$ are the $d$-th symmetric power of the defining representation of the corresponding group of linear transformations, so that we have 
\begin{equation}
\Gamma_*(\cO(n,m))_d = [n+d|m+d,0,0].
\end{equation}
in terms of Dynkin labels for $\mathfrak{sl}(2) \times \mathfrak{sl}(4)$.
Thus the equivariant Hilbert series takes the form of a formal Laurent series
\begin{equation} \label{eq: hilbert series}
    \Hilb(n,m) \defeq \Hilb(\Gamma_*(\mathcal{O}(n,m))) = t^{\mathrm{min}(n,m)} \sum_{d = 0}^{\infty} [n+d|m+d,0,0] \; t^d .
\end{equation}
Following~\S\ref{sec: prelim ps}, we rewrite the Hilbert series using the identity 
\begin{equation}
    \Hilb(n,m) = t^{\mathrm{min}(n,m)} \Hilb(R) \cdot \Hilb\left(\chi(W^\bu(n,m))\vphantom{3^3}\right),
\end{equation}
and solve for $\chi(W^\bu(n,m))$. The equations~\eqref{eq: recursion gen} become
\begin{equation} \label{eq: recursion}
\begin{split}
    \chi(W^\bu(n,m))_0 &= [n|m,0,0] \\
\chi(W^\bu(n,m))_1 &= [n+1|m+1,0,0] - [1|1,0,0] \otimes \chi(W^\bu(n,m))_0 \\
\vdots& \\
\chi(W^\bu(n,m))_k &= [n+k|m+k,0,0] - \sum_{d=1}^{k} \Sym^d([1|1,0,0]) \otimes \chi(W^\bu(n,m))_{k-d} \: .
\end{split}
\end{equation}
In what follows, we solve these equations case by case.
We will often tabulate our results by writing a square table representing~$H^\bu(K^\bu(\Gamma))$, or equivalently the bundle $\mu A^\bu(\Gamma)^\#$. In such tables, the (nonpositive) cohomological grading is on the horizontal axis and decreases to the right, while the weight grading on Koszul homology increases downward. As such, when reading the table as representing a multiplet, cohomological degree (ghost number) is on the \emph{vertical} axis, while parity is determined by the position on the horizontal axis modulo two. (For further notes on this convention, see~\cite[\S2.1]{EHSequiv}.)

\subsection{The bundles $\cO(n,0)$ for~$n\geq 0$} \label{sec:(n,0)} \label{On}

We begin with the case of the bundles $\cO(n,0)$ for nonnegative $n$. As we will see, these bundles include the vector multiplet, its antifield multiplet, and the hypermultiplet, as well as an infinite family of strict component-field multiplets associated to $\cO(n,0)$ with $n\geq 3$.

\subsubsection{Computation of the Betti numbers}
We specialize the Hilbert series~\eqref{eq: hilbert series} to the case at hand. At the level of the graded dimension,
\begin{equation}
\grdim(n,0) = \sum_{d = 0}^{\infty} (n+d+1) \frac{(d+3)(d+2)(d+1)}{6} \; t^d \:,
\end{equation}
which can be rewritten as a derivative of a geometric series
\begin{equation}
\grdim(n,0) = \frac{1}{6} \frac{\partial^3}{\partial t^3} t^{3-n} \frac{\partial}{\partial t}\sum_{d = 0}^{\infty} t^{d+n+1}
= \frac{1}{6} \frac{\partial^3}{\partial t^3} t^{3-n} \frac{\partial}{\partial t} \frac{t^{n+1}}{1-t}  .
\end{equation}
Performing the derivatives, the general result can be expressed in the following form.
\begin{equation}
\grdim(n,0) = \frac{(n+1) - 4nt + 6(n-1)t^2 - 4(n-2) t^3 + (n-3)t^4}{(1-t)^8} 
\end{equation}
The coefficients of the numerator now correspond to the Betti numbers of the associated multiplet.

Let us write out these Betti numbers concretely for all~$n$. These tables (here and in the following, may they contain Betti numbers or representations), follow the conventions as laid out in~\S\ref{p:gradings}, i.e. the vertical axis corresponds to cohomological degree and the horizontal axis to totalized degree. There are three special cases when $n \in \{0,1,2\}$. For $n=0$ one finds
\begin{equation}
\grdim\mu A^\bu(0,0)^\# = 	\begin{bmatrix}
1 &  &  &  \\
 & 6 & 8 & 3\\
\end{bmatrix} ,
\end{equation}
which corresponds to the vector multiplet. For $n=1$, we obtain
\begin{equation}
\grdim\mu A^\bu(1,0)^\# = 	\begin{bmatrix}
2 & 4 &  &  \\
 &  & 4 & 2\\
\end{bmatrix} ,
\end{equation}
which corresponds to the hypermultiplet. For $n=2$, the result reads
\begin{equation}
\grdim\mu A^\bu(2,0)^\# = 	\begin{bmatrix}
3 & 8 & 6 &  \\
 &  &  & 1\\
\end{bmatrix} ,
\end{equation}
which corresponds to the antifield multiplet of the vector multiplet. Finally, for $n \geq 3$, the resulting Betti numbers take the general form
\begin{equation}
\grdim\mu A^\bu(n,0)^\# = 	\begin{bmatrix}
n+1 & 4n & 6(n-1) & 4(n-2) & n-3
\end{bmatrix}  .
\end{equation}

\subsubsection{Equivariant decomposition}

The above recursive relations \eqref{eq: recursion} are easily solved, either by hand or with the help of a computer program such as LiE~\cite{LiE}. Let us again first consider the three special cases where $n \in \{0,1,2\}$. For $n=0$ we obtain
\begin{equation}
\begin{split}
W_0 &= \phantom{-}[0|0,0,0] \\
W_1 &= \phantom{-}0 \\
W_2 &= -[0|0,1,0] \\
W_3 &= \phantom{-}[1|0,0,1] \\
W_4 &= -[2|0,0,0] .
\end{split}
\end{equation}
Thus the resulting multiplet takes the form
\begin{equation}
	\mu A^\bu(0,0)^\# = \left[
	\begin{tikzcd}[row sep=0.3cm, column sep=0.2cm]
	\Omega^0 &  &  &  \\
	 & \Omega^1 &  \C^2 \otimes S_- & \Omega^0 \otimes \C^3
	\end{tikzcd} \right]
\end{equation}
where the three scalar fields live in the adjoint representation of the $R$-symmetry group. (Here and in the following tables showing the field content of multiplets $\C^n$ will always denote the unique irreducible $n$-dimensional representation of $\mathfrak{sl}(2)$.) This corresponds to the vector multiplet of six-dimensional $\cN=(1,0)$ supersymmetry. For $n=1$, we find
\begin{equation}
\begin{split}
W_0 &= \phantom{-}[1|0,0,0] \\
W_1 &= -[0|1,0,0] \\
W_2 &= \phantom{-}0 \\
W_3 &= \phantom{-}[0|0,0,1] \\
W_4 &= -[1|0,0,0] .
\end{split}
\end{equation}
We can thus identify $\mu A^\bu(1,0)$ as the hypermultiplet
\begin{equation} \label{hypm}
\mu A^\bu(1,0)^\# = \left[
\begin{tikzcd}[row sep=0.3cm, column sep=0.2cm]
\Omega^0 \otimes \C^2 & S_+ & & \\
& & S_- & \Omega^0 \otimes \C^2
\end{tikzcd} \right]
\end{equation}
For $n=2$
\begin{equation}
\begin{split}
W_0 &= \phantom{-}[2|0,0,0] \\ 
W_1 &= -[1|1,0,0] \\ 
W_2 &= \phantom{-}[0|0,1,0] \\
W_3 &= \phantom{-}0 \\
W_4 &= -[0|0,0,0] .
\end{split}
\end{equation}
The resulting multiplet $\mu A^\bu(2,0)$ is the antifield multiplet of the vector multiplet.
\begin{equation} \label{mu2}
\mu A^\bu(2,0)^\# = \left[
\begin{tikzcd}[row sep=0.3cm, column sep=0.2cm]
\Omega^0 \otimes \C^3 & S_+ \otimes \C^2 & \Omega^1 & & \\
& &  & \Omega^0
\end{tikzcd} \right]
\end{equation}
Finally for $n\geq 3$, the general form is
\begin{equation}
\begin{split}
W_0 &= \phantom{-}[n|0,0,0] \\ 
W_1 &= -[n-1|1,0,0] \\  
W_2 &= \phantom{-}[n-2|0,1,0] \\ 
W_3 &= -[n-3|0,0,1] \\ 
W_4 &= \phantom{-}[n-4|0,0,0] .
\end{split}
\end{equation}
Thus, $\mu A^\bu(n,0)$ for $n \geq 3$ are of the form
\begin{gather} \label{mu3}
\mu A^\bu(n,0)^\# = \\
\left[
    \begin{tikzcd}[row sep=0.3cm, column sep=0.2cm, ampersand replacement = \&]
\C^{n+1} \& \C^{n} \otimes S_+ \& \C^{n-1} \otimes \wedge^2 S_+ \& \C^{n-2} \otimes \wedge^3 S_+ \& \C^{n-3} \otimes \wedge^4 S_+
\end{tikzcd} 
\right] \notag
\end{gather}
This family of multiplets was described in the physics literature under the name $\cO(n)$-multiplets~\cite{KNTderivative, KuzenkoNovakTheisen, RulesProjective, LinchSugraProj, Galperin86}.

\subsubsection{Supersymmetry module structure and interpretation}
In this section, we will use the results obtained in~\S\ref{On} to describe the supersymmetry module structure of the multiplet $\mu A^\bullet (n,0) $ associated with the line bundles $\mathcal{O} (n,0)$. 

In order do to this, we start by giving an explicit presentation of the modules $\Gamma_*(\cO(n,0))$ as a cokernel of a map between free $R$-modules, and go on to describe their minimal free resolutions in $R$-modules. The cases $n=0$ and $n=1$ were already discussed in~\cite[\S 6.4]{perspectives}. Recalling that we have denoted with $\mathbb{C}^n$ the unique irreducible $n$-dimensional representation of $\mathfrak{sl}(2)$, we identify  $\mathbb{C}^{n+1} \cong \mbox{Sym}^n U $, where $U = \C^2$ is the fundamental representation of $\mathfrak{sl}(2)$.  By looking at~\eqref{hypm}, one can write down the first map (differential) appearing in the resolution of $\Gamma_\ast (\mathcal{O} (1,0))$ in $R$-modules as
\begin{equation} \xymatrix@R=1.5pt{
\varphi_1 : S_+ \otimes R \ar[r] &  U \otimes R \\
F_\alpha s^\alpha  \ar@{|->}[r] & \varphi_1 (F_\alpha s^\alpha ) \defeq \lambda^\alpha_{i} F_\alpha e^i, 
}
\end{equation} 
where we have chosen bases $\{ s^\alpha \}_{\alpha = 1, \ldots, 4}$ and $\{e^i\}_{i = 1,2}$ for $S_+$ and $U$ respectively. Up to a constant, this is the unique equivariant map which is linear in $\lambda$, see~\cite[\S 6.4]{perspectives}. With a standard choice of bases, this linear map is represented by the matrix 
\begin{equation}
\varphi_1 = \left ( \begin{array}{cccc} 
\lambda^1_1 & \lambda^2_1 & \lambda^3_1 & \lambda^4_1  \\
\lambda^1_2 & \lambda^2_2 & \lambda^3_2 & \lambda^4_2
\end{array}
\right ),
\end{equation} 
and one sees that its image contains $I \otimes U$, where $I$ is the defining ideal of the nilpotence variety. As a consequence, $\Gamma_\ast (\mathcal{O} (1,0)) \cong \coker ({\varphi_1})$, as (equivariant) $R/I$-modules.
This can be generalized as follows for $\Gamma_\ast (\mathcal{O} (n, 0))$. 
\begin{prop}
    For $n\geq 1$, there is an isomorphism $\phi_n$ of $(\mathfrak{sl}(2) \times \mathfrak{sl}(4))$-equivariant $R/I$-modules $\Gamma_*(\cO(n,0)) \cong \coker(\varphi_n)$, linear in the generators, defined by 
	\begin{equation} \label{varphin} 
            \begin{aligned}
	\varphi_n : \C^n \otimes S_+ \otimes R &\rightarrow  \C^{n+1} \otimes R \\
	F_{(i_1 \ldots i_{n-1})\:\alpha}   &\mapsto 
        \lambda^\alpha_{(i_n} F^{\phantom{\alpha}}_{i_1 \dots i_{n-1}) \: \alpha}
    \end{aligned}
	\end{equation}
        in terms of bases $\{ s^\alpha \}$ for $S_+$ and $\{ e^{i_1} \odot \ldots \odot e^{i_{n-1}} \}$ for $\C^n \cong \Sym^{n-1} U$.
\end{prop}
\begin{proof}
   	The result follows from our descriptions of the component field multiplets in the previous section and their relation to the minimal free resolution of the modules $\Gamma_*(\cO(n,0))$. In particular, from \eqref{hypm}, \eqref{mu2} and \eqref{mu3}, it follows that $\Gamma_*(\cO(n,0))$ can be written as the cokernel of a map with domain and codomain as specified in the proposition. This map is just the first piece of the resolution differential. In addition, degree reasons imply that this map is linear in the variables $\lambda^\alpha_i$ and that $I \otimes \C^{n+1}$ is contained in the image of $\varphi_n$, so that in particular the map descends to a map of $R/I$-modules. Moreover, a quick representation theoretic check shows that the map is equivariant under $\mathfrak{sl}(2) \times \mathfrak{sl}(4)$, and it is in fact the unique map with these properties, up to a non-zero constant prefactor.
\end{proof}
Resolving $\Gamma_*(\cO(n,0)) = \mathrm{coker}(\varphi_n)$ one recovers the field content of the multiplets described above. In addition, the resolution differential encodes the part of the $\fp$-module structure acting by differential operators of degree zero (this is the $\mathrm{Gr}(\fp)$-module structure induced from the $\fp$-module structure)~\cite[\S\S3.6--7 and Theorem 4.8, respectively]{EHSequiv, perspectives}. 

Let us describe the minimal free resolution and the module structure for the cases $n\geq 3$. This will provide an intuitive interpretation of $\mu A^\bu(n,0)$: it is a multiplet whose observables are generated by the degree-$n$ monomials in the observables of the $\cO(1,0)$-multiplet, i.e. the hypermultiplet. One can thus imagine that the fields of the hypermultiplet map to the fields of the $\cO(n,0)$ multiplet via a (ramified) $n$-fold covering, dual to the inclusion map on observables. 

The minimal free resolution of $\Gamma_\ast (\cO (n,0))$ for $n\geq 3$ is given by (see~\eqref{mu3}) 
\begin{equation}\xymatrix{
        L^\bullet \defeq  \bigg ( \C^{n+1} & \ar[l]_{\quad \; \; (d_L)_1} \C^n \otimes S_+  & \ar[l]_{(d_L)_2} \C^{n-1} \otimes \wedge^2 S_+  & \ar[l]_{\; \; (d_L)_3} \C^{n-2} \otimes \wedge^3 S_+  & \ar[l]_{(d_L)_4 \quad } \C^{n-3} \otimes \wedge^4 S_+  \bigg ) \otimes R,
}
\label{eq:O(n)}
\end{equation}
where the resolution differentials (notice that $(d_L)_1 = \varphi_n$) 
\begin{equation}
(d_L)_i : \C^{n-i} \otimes \wedge^i S_+ \otimes R \longrightarrow \C^{n-i+1} \otimes \wedge^{i-1} S_+ \otimes R \qquad i=1 \dots 4 
\end{equation}
are described by contracting along $S_+$ and symmetrizing along the $\mathfrak{sl}(2)$-representations, for example
\begin{equation}
[(d_L)_2 F]_{i_1 \dots i_{n-1} \alpha} = \lambda^\beta_{(i_{n-1}} F^{\phantom{\beta}}_{i_1 \dots i_{n-2}) \: \alpha \beta} \: ,
\end{equation} 
generalizing~\eqref{varphin}. This translates into supersymmetry transformation rules of the form
\begin{equation}
\delta F_{i_1 \dots i_n} = \epsilon^\alpha_{(i_n} F^{\phantom{\alpha}}_{i_1 \dots i_{n-1) }\alpha} \: .
\end{equation}
Recall that we identified $\mu A^\bu(1,0)$ as the hypermultiplet. Let us denote the linear observables in physical fields of the hypermultiplet by $\phi_i$ and $\psi_\alpha$. This suggests to identify the linear observables of the $\cO(n,0)$-multiplet as polynomials of degree $n$ in the linear observables of the hypermultiplet, as follows:
\begin{equation}
\begin{split}
F_{i_1 \dots i_n} &= \phi_{i_1} \dots \phi_{i_n} \\
F_{i_1 \dots i_{n-1} \: \alpha} &= \phi_{i_1} \dots \phi_{i_{n-1}} \psi_\alpha \qquad \\
\vdots \\
F_{i_1 \dots i_{n-4} \: \alpha \beta \gamma \delta} &= \phi_{i_1} \dots \phi_{i_{n-4}} \psi_\alpha \psi_\beta \psi_\gamma \psi_\delta .
\end{split}
\end{equation}
Further, recall that for the hypermultiplet the module structure of the supersymmetry algebra contains terms of the form~\cite[\S 6.4 and (6.11)]{perspectives}
\begin{equation}
\delta \phi_i = \epsilon_i^\alpha \psi_\alpha \: .
\end{equation}
By the Leibniz rule, this precisely induces the supersymmetry transformations of the $\cO(n,0)$-multiplet we recorded above. Thus, we can view, for $n\geq3$, $\mu A^\bu(n,0)$ as consisting of polynomials of degree $n$ in the linear observables of $\mu A^\bu(1,0)$. Intuitively, this can be viewed as a remnant of the statement $\cO(n,0) = \cO(1,0)^{\otimes n}$ after applying the pure spinor superfield formalism. We remark that a special case of this is already visible in the action for supersymmetric Yang--Mills theory coupled to hypermultiplets studied in~\cite{Ced-6d}. There, an action is written that reproduces the minimal coupling of the gauge sector to matter; the relevant term is cubic, containing two hypermultiplets and one gauge field. From our perspective, this makes use of the identification of the $\cO(2,0)$ multiplet both as the dual to the vector multiplet and as governing quadratic functionals on the hypermultiplet. 

\numpar[p:twist][Holomorphic twists of $\O(n)$-multiplets]
It is straightforward to compute the holomorphic twist of these multiplets uniformly for $n\geq 3$.
Following Remark~\ref{rmk:twists},
we expect to find that the twist is of rank one over Dolbeault forms on~$\C^3$. It should thus resolve the holomorphic sections of a holomorphic line bundle on spacetime. Our result shows that this holomorphic line bundle is the $n$-th power of the square root of the canonical bundle arising from the spin structure.
\begin{thm}
    The holomorphic twist of the multiplet $\mu A^\bu(n,0)$ for $n\geq 3$ is given by the chain complex of vector bundles on~$\C^3$ whose sections are 
    \deq{
        \left( \Omega^{0,\bu}(\C^3, K^{n/2}), \, \dbar \right).
    }
\end{thm}
We remark that the statement of the theorem in fact holds for $n\geq 0$. The results for $n = 0, 1, 2$---the vector multiplet, the hypermultiplet, and the antifield multiplet of the vector multiplet---are well-known. We view this result as a demonstration that $A^\bu(2,0)$ can be thought of as playing the role of the canonical bundle of $\N=(1,0)$ superspace, viewed as the affine dg scheme $\Spec A^\bu$.
\begin{proof}
    Let $U = \C^2$ denote the R-symmetry space, which is the fundamental representation of $\lie{sp}(1) = \lie{sl}(2)$. We recall from~\eqref{mu3} that the multiplet consists of the bundles $\Sym^{n-k}(U) \otimes \wedge^k(S_+)$, in cohomological degree zero and totalized degree $-k$, for $0 \leq k \leq 4$.

    Choosing a holomorphic supercharge $Q$ fixes a complex structure on~$\R^6$ (specified by a maximal isotropic subspace $T$) and a polarization of~$U$ (equivalently, a choice of Cartan subalgebra of~$\lie{sl}(U)$).
    With respect to the flat K\"ahler metric on~$\C^3$, we can identify  
\deq{
    S_- \cong \left( \underline{\C} \oplus \wedge^2 T \right) \otimes K^{1/2}, \quad
    S_+ \cong \left( T \oplus \wedge^3 T \right) \otimes K^{1/2}
}
as bundles on~$\C^3$. It will be convenient to 
write $S_+$ as $K^{-1/2} \oplus \left( T \otimes K^{1/2} \right)$, so that 
\deq{
    \wedge^k(S_+) \cong \left( K^{k/2} \otimes \wedge^k {T} \right) \oplus \left( K^{k/2 - 1} \otimes \wedge^{k-1} {T} \right).
}

The supercharge $Q$ carries cohomological degree zero, 
carries weight $+1$ for the Cartan of $\lie{sl}(U)$, and transforms as a section of $K^{1/2}$. To twist the theory, we choose twisting data that regrade the theory to place $Q$ in cohomological degree $+1$, while twisting the representation of the holonomy group to make $Q$ a scalar. This amounts to the replacement 
\deq{
    U \leadsto K^{-1/2}[-1] \oplus K^{1/2}[1].
}
From this, we identify
\deq{
    \Sym^m(U) \leadsto K^{-m/2} \otimes \left( \bigoplus_{\ell = 0}^m K^{\ell}[2\ell - m]\right) .
}

We collect the summands in the multiplet proportional to $\wedge^k {T}$ for $k = 0, \ldots, 3$. With $k=0$, we have contributions 
\deq{
    K^{-n/2} \otimes \left( \bigoplus_{\ell = 0}^{n} K^{\ell}[2\ell-n]\right)  
    \oplus K^{-n/2} \otimes \left( \bigoplus_{\ell = 0}^{n-1} K^{\ell}[2\ell - n + 1]\right) .
}
The non-derivative supersymmetry transformations indicated above map the first set of summands onto the second set, leaving only the summand $K^{n/2}[n]$ from the $\ell = n$ term in cohomology.

The terms proportional to $\wedge^1 {T}$ take the form 
\deq{
    K^{-n/2+1} \otimes {T} \otimes  \left( \bigoplus_{\ell = 0}^{n-1} K^{\ell}[2\ell-n+1]\right)
    \oplus K^{-n/2+1} \otimes {T} \otimes \left( \bigoplus_{\ell = 0}^{n-2} K^{\ell}[2\ell-n+2]\right) .
}
Again, the non-derivative supersymmetry transformations act acyclically wherever possible, leaving only $K^{n/2}[n-1] \otimes {T}$ from the $\ell = n$ summand in cohomology. By identical computations, we see that the underlying bundle of the twisted multiplet reduces to 
\deq{
    \bigoplus_{k=0}^3 K^{n/2} \otimes \wedge^k {T}[n-k],
}
after passing to the cohomology of all acyclic differentials.

Recalling that $T \cong \bar{T}^\vee$, we identify the sections of this graded bundle with $\Omega^{0,\bu}(K^{n/2})[n]$. The remaining terms (differential operators of order one) in the holomorphic supercharge generate precisely the $\dbar$ operator, so that the twisted multiplet can be identified as
\deq{
    \mu A^\bu(n,0)^Q \simeq \Omega^{0,\bu}(\C^3, K^{n/2})[n].
}
\end{proof}

\subsection{The bundles $\cO(0,m)$ for~$m\geq 0$} \label{sec: O(0,m)}
\subsubsection{Computation of the Betti numbers}
The Hilbert series specializes to
\begin{equation}
\Hilb(0,m) = \sum_{d = 0}^{\infty} (d+1) \frac{(m+d+3)(m+d+2)(m+d+1)}{6} \; t^d \: ,
\end{equation}
which can be rewritten as 
\begin{equation}
\Hilb(0,m) = \frac{1}{6} \frac{\partial}{\partial t} t^{1-m} \frac{\partial^3}{\partial t^3} \frac{t^{m+3}}{1-t} \: .
\end{equation}
Again, we can bring the Hilbert series into a form such that we can read off the Betti numbers of the associated multiplet.
\begin{equation}
\begin{split}
\frac{1}{(1-t)^8} &\Bigg[ (\frac{m^3}{6} + m^2 +\frac{11}{6}m +1) - (m^3 + 5m^2 +6m)t + (\frac{5}{2}m^3 + 10m^2 + \frac{11}{2}m - 6)t^2\\ &- (\frac{10}{3}m^3 + 10m^2- \frac{4}{3}m -8)t^3
+ (\frac{5}{2}m^3 + 5m^2 -\frac{9}{2}m - 3)t^4 - (m^3+m^2 -2m)t^5 + (\frac{m^3}{6} - \frac{m}{6})t^6 \Bigg]
\end{split}
\end{equation}
It is immediate to see that for $m=0$ we recover the result from above. Let us in addition give the Betti tables for some small values of $m$. For $m=1$, we find
\begin{equation}
\grdim\mu A^\bu(0,1) = 	\begin{bmatrix}
4 & 12 & 12 & 4 \\
\end{bmatrix} \: .
\end{equation}
For $m=2$ one obtains
\begin{equation}
\grdim\mu A^\bu(0,2) = 	\begin{bmatrix}
10 & 40 & 65 & 56 & 28 & 8 & 1 \\
\end{bmatrix} \: .
\end{equation}

\subsubsection{Equivariant decomposition}
Solving the equations~\eqref{eq: recursion} one finds the following representations appearing in $\mu A^\bu(0,m)$.
\begin{equation}
\begin{split}
W_0 &= \phantom{-}[0|m,0,0] \\
W_1 &= -[1|m-1,1,0] \\
W_2 &= \phantom{-}[0|m-2,2,0] + [2|m-1,0,1] \\
W_3 &= -[1|m-2,1,1] -[3|m-1,0,0] \\
W_4 &= \phantom{-}[0|m-2,0,2] + [2|m-2,1,0] \\
W_5 &= -[1|m-2,0,1] \\
W_6 &= \phantom{-}[0|m-2,0,0]
\end{split}
\end{equation}

\subsubsection{Presentation and equivariant resolution}
We can describe the module $\Gamma_*(\cO(0,1))$ explicitly as the cokernel of a map of free $R$-modules
\begin{equation}
	\psi_1 : (\wedge^2 S_+ \otimes \C^2) \otimes R \longrightarrow S_+ \otimes R \: .
\end{equation}
For degree reasons, the map should be linear in $\lambda$. It is easy to check that there is, up to non-zero constant prefactors, a unique such map explicitly given by
\begin{equation}
	G \mapsto \lambda^\alpha_i G^i_{[\alpha \beta]} s^\beta \: .
\end{equation}
Here $s^\beta$ denotes a basis of $S_+$. The modules $\Gamma_*(\cO(0,m))$ are obtained by taking symmetric products. It can be checked explicitly (for example using a computer program such as \textit{Macaulay2}~\cite{M2}) that the minimal free resolutions of these modules reproduce the multiplets described above. 

\subsection{A classification result}
The results above describe all multiplets for six-dimensional $\cN=(1,0)$ supersymmetry which can be obtained from line bundles on the nilpotence variety $\P^1 \times \P^3$. Based on the equivalence of categories between multiplets and $C^\bullet(\ft)$-modules provided by~\cite[Theorem 4.3]{EHSequiv} this can be viewed as a classification result. More precisely, recalling the definition of a multiplet $(E, D, \rho)$ as \cite[Definition 2.11]{EHSequiv} and denoting with $\mathcal{E}$ the global sections of the vector bundle $E$ on (flat) spacetime characterizing the multiplet, one has the following.  
\begin{thm}
	The above multiplets $\mu A^\bu(n,m)$ classify, up to quasi-isomorphism, all multiplets $(E, D, \rho)$ for six-dimensional $\cN=(1,0)$ supersymmetry such that $H^\bullet(\ft,\cE)$ is the graded global section module of a single line bundle on the projective nilpotence variety.
\end{thm}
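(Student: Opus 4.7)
The plan is to derive the classification as a direct corollary of the equivalence of categories from~\cite{EHSequiv} together with the explicit computations carried out in \S\ref{sec:(n,0)} and \S\ref{sec: O(0,m)}. First, I would take an arbitrary multiplet $\cE$ whose derived $\ft$-invariants satisfy $H^\bullet(\ft, \cE) \cong \Gamma_*(\cL)$ for some line bundle $\cL$ on $Y \cong \P^1 \times \P^3$. Since $\Gamma_*(\cL)$ is naturally a graded $R/I$-module, it is concentrated in cohomological degree zero when viewed as a $C^\bullet(\ft)$-module via the augmentation $C^\bullet(\ft) \to R/I$. Invoking the equivalence of categories of~\cite{EHSequiv}, together with the fact that derived $\ft$-invariants recover the input module of the pure spinor functor, immediately yields the quasi-isomorphism $\cE \simeq \mu A^\bullet(\Gamma_*(\cL))$. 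It then suffices to enumerate isomorphism classes of line bundles on $Y$ and identify the corresponding multiplets.

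Next I would recall that $\Pic(Y) \cong \Z \oplus \Z$ is generated by $\cO(1,0)$ and $\cO(0,1)$, so every line bundle on $Y$ is isomorphic to $\cO(n,m)$ for a unique pair $(n,m) \in \Z^2$. Applying the shift identity~\eqref{eq: shift}, which yields $\mu A^\bullet(n+k, m+k) \cong \mu A^\bullet(n,m)[k]$ for all $k \in \Z$, one reduces to the subcases where $\min(n,m) = 0$, namely $\cO(n,0)$ with $n \geq 0$ (treated in \S\ref{sec:(n,0)}) and $\cO(0,m)$ with $m \geq 0$ (treated in \S\ref{sec: O(0,m)}). The multiplets computed in those subsections, together with their cohomological degree shifts, therefore exhaust the list.

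The hard part has, in fact, already been done in the preceding subsections: namely, the explicit determination of $\mu A^\bullet(n,0)$ and $\mu A^\bullet(0,m)$ as equivariant component-field multiplets. What remains in assembling the proof is essentially bookkeeping: checking that the equivalence of~\cite{EHSequiv} applies in the form stated, so that $H^\bullet(\ft,\cE) \cong \Gamma_*(\cL)$ really pins down $\cE$ up to quasi-isomorphism, and organizing the line bundles on $Y$ modulo the rescaling symmetry encoded by~\eqref{eq: shift}. No significant obstacle arises, since the hypothesis on $H^\bullet(\ft, \cE)$ is precisely tailored to bring the problem within the range of applicability of~\cite{EHSequiv}.
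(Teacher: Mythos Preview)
Your proposal is correct and follows essentially the same approach as the paper: you invoke the equivalence of~\cite{EHSequiv} to identify $\cE \simeq \mu A^\bullet(\Gamma_*(\cL))$ from the hypothesis that the derived $\ft$-invariants are concentrated in degree zero, then use $\Pic(Y)\cong\Z^2$ and the shift identity~\eqref{eq: shift} to reduce to the cases already computed in \S\ref{sec:(n,0)} and \S\ref{sec: O(0,m)}. The paper's argument, stated in the paragraph immediately preceding the theorem, is the same two-step reduction.
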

\begin{proof} The proof is based on~\cite[Theorem 4.3]{EHSequiv} and subsequent corollaries. 
    Recall that any line bundle $\mathcal{L}$ on $Y$ is of the form $ \mathcal{O} (n,m)$ for some integers $n,m$ as $\mbox{Pic}\, (Y) = \mathbb{Z}\times \mathbb{Z}$, and let $\Gamma_\ast (\mathcal{O} (n,m))$ be its associated $R/I$-module. Given an $R/I$-module $\Gamma$, the derived $\ft$-invariants of the associated multiplet are concentrated in a single degree. By construction,
\begin{equation}
	H^\bullet(\ft, \mu A^\bu(\Gamma)) = \Gamma \: .
\end{equation}
In particular, taking $\Gamma = \Gamma_\ast (\mathcal{O}_Y (n,m))$ and denoting as above $\mu A^\bu(n,m)$ the associated multiplet, one has that $H^\bullet(\ft, \mu A^\bu(n,m)) = \Gamma_\ast (\mathcal{O}_{Y} (n,m))$.
Conversely, given a multiplet $(E,D,\rho)$ such that its derived $\ft$-invariants are concentrated in a single degree, one can identify
\begin{equation}
	\mu A^\bu(C^\bullet(\ft, \cE)) \simeq (E,D,\rho) \: .
\end{equation}
It is enough to take $(E,D,\rho)$ of the form $\mu A^\bullet(n,m)$ to conclude the verification, since their $\mathfrak{t}$-invariants are $R/I$-modules.
\end{proof}
As remarked above (Remark~\ref{rmk:twists}), the interpretation of the input module as the Chevalley--Eilenberg cohomology with coefficients in the multiplet provides an interesting conceptual link to the twists of the multiplet involved. Twisting by a supercharge $Q$ takes invariants of the multiplet with respect to the abelian subalgebra spanned by that supercharge. The cohomology groups $H^\bullet(\ft,\cE)$ define a sheaf on the nilpotence variety which, by the result of~\cite{spinortwist}, encodes all the information on the twists of the original multiplet. In fact, one expects that the twist by a square-zero supercharge $Q \in Y$ is determined by the stalk of that sheaf at $Q$.

In our example, we see that---as the derived invariants of all the multiplets above are line bundles---the stalk at any point is isomorphic to $\cO_{Y,x}$. Our nilpotence variety $Y=\P^1 \times \P^3$ only has one stratum corresponding to the holomorphic twist. Correspondingly, as we have seen above, the holomorphic twists of the above multiplets always have rank one over Dolbeault forms on~$\C^3$.

This intuition makes many aspects of the physical behavior of the multiplets and their twists manifest. For example, we can take any of the above multiplets and dimensionally reduce to a four-dimensional $\cN=2$ multiplet. In this case, the nilpotence variety is reducible and has three different components, one of which is the image of the six-dimensional $\cN=(1,0)$ nilpotence variety under the dimensional reduction map~\cite{NV}. The other two components correspond to the Donaldson--Witten twist, which does not descend from a square-zero supercharge in six dimensions. The Chevalley--Eilenberg cohomology with coefficients in the dimensionally reduced multiplets is obtained by pushing forward along the inclusion $Y_{6D} \hookrightarrow Y_{4D}$. Clearly, any supercharge corresponding to a Donaldson--Witten twist is outside of the support of the resulting sheaf, so that the respective stalks are trivial. Following Remark~\ref{rmk:twists}, one thus expects that the Donaldson--Witten twists of all multiplets arising by dimensional reduction are perturbatively trivial.  We hope to give a more complete account of extensions of the methods developed in~\cite{spinortwist} to general multiplets in future work.

\section{Antifield multiplets and duality}

\subsection{General observations}
Given any multiplet $\mu A^\bu(\Gamma)$, one may form the dual (or antifield) multiplet $\mu A^\bu(\Gamma)^\vee$ by dualizing the underlying vector bundle, the differential and the supersymmetry module structure (see~\cite{EHSequiv} for more information, see in particular Definition 2.18). Via the pure spinor superfield formalism, the operation of taking the antifield multiplet corresponds, in good cases, to taking the dualizing module of the input module $\Gamma$. This was already recognized in~\cite[\S\S4.1--3]{perspectives}. Here, we explore this direction further and link it to statements in terms of sheaves on the nilpotence variety.

\subsubsection{Duality theory for modules and multiplets}
Let us start by reviewing some facts from commutative algebra (see for example~\cite{Eisenbud}) and relate these to the pure spinor superfield formalism. 
\begin{dfn}
	Let $S$ be a commutative ring with unity and let $M$ be an $S$-module.
	We define the dimension of $M$ to be 
	\begin{equation}
	\mathrm{dim}_S (M ) \defeq \mathrm{dim}(S / \mathrm{Ann}_S (M)),
	\end{equation} 
	where $\mathrm{Ann}_S (M) \defeq \{ s \in S : s m = 0 \; \forall m \in M \}$. In other words, the dimension of $M$ is the Krull dimension of the quotient ring $S / \mathrm{Ann}_S (M)$.
	\end{dfn}			
The definition of the dimension of a module as given above might appear counterintuitive. On the other hand, recalling that the \emph{support} of $M$ is defined as $\mathrm{supp} (M) \defeq \{ \mathfrak{p} \in \mathrm{Spec} (S) : \; M_\mathfrak{p} \neq 0 \} \subseteq \mathrm{Spec} (S)$, where $M_{\mathfrak{p}}$ is the localization of $M$ at the prime ideal $\mathfrak{p}$, it is not hard to see that $\mathrm{supp} (M) = \{ \mathfrak{p} \in \mathrm{Spec} (S) :\;  \mathfrak{p} \supseteq \mathrm{Ann}_S (M) \} = V (\mathrm{Ann}_S (M))$\footnote{Here we assume $M$ to be finitely generated.}. Hence, it follows that $\mathrm{dim}_S (M) = \mathrm{dim}_S (S / \mathrm{Ann}_S (M)) = \mathrm{dim}_S (\mathrm{supp} (M)) $. 
Similarly, if $\mathcal{F}$ is a sheaf on a scheme $X$, then one defines the support of $\mathcal{F}$ as $\mathrm{supp} (\mathcal{F}) \defeq \{ x \in X : \mathcal{F}_x \neq 0 \}$, where $\mathcal{F}_x$ is the stalk of $\mathcal{F}$ at $x \in X$. If $M = \Gamma (U, \mathcal{F} )$ for an open set $U \subset X$ with $U = \mathrm{Spec} (S)$ and if $\mathfrak{p} \subset \mathrm{Spec} (S)$ is the prime ideal corresponding to a point $x \in U$, then one has that $\mathfrak{p} \in \mathrm{supp} (M) $ if and only if $x \in \mathrm{supp} (\mathcal{F}).$ The relation between the dimension and the support of a module---or the associated sheaf---conveys the idea that the support measures (in some sense) the size of $M$, as outside of it the fibers of $M$ are zero.
\begin{dfn} 	
	Let $S$ be a commutative ring with unity, let $M$ be an $S$-module, and let $I \subset S$ be an ideal in $S$ such that $I \cdot M \neq M$. 
	We define the $I$-depth of $M$ to be 
	\begin{equation}
	\mathrm{depth}_I (M) \defeq \mathrm{min} \{ i \in \mathbb{N} : \Ext^i_S (S/I, M) \neq 0\}.
	\end{equation}
\end{dfn}
Note that the definition of depth refers to a choice of ideal $I$ in~$S$.
In our case, we are interested in modules over the polynomial ring $R$. Maximal ideals then correspond to points $x \in \Spec R$; there exists a unique maximal equivariant ideal $\mathfrak{m}$ in~$R$, corresponding to the skyscraper sheaf at the origin, and consisting of all polynomials with zero constant term. In this case, given an $R$-module $M$ we always define its depth with respect to the unique maximal (equivariant) ideal $\mathfrak{m} \subset R$, \emph{i.e.} we define $\mathrm{depth}_R (M) \defeq \mathrm{depth}_{\mathfrak{m}} (M).$\footnote{This is the usual definition of depth of a module in the case of \emph{local} rings $(R, \mathfrak{m}).$} In this setting, the central object of our interest is the Cohen--Macaulay property for modules $R$.
\begin{dfn}	
	 We say that the $R$-module $M$ is Cohen--Macaulay if $\mathrm{depth}_R(M) = \mathrm{dim}_R(M)$.
\end{dfn}
In particular, let us consider polynomial rings of Krull dimension $\mathrm{dim} (R) = n = \dim \mathfrak{t}_1$, where $\mathfrak{t}_1$ is the odd part of the supertranslation algebra---explicitly, this is the case of the polynomial ring $\mathbb{C}[\lambda_1, \ldots, \lambda_n]$ that we will use in what follows. In this context, two equivalent characterizations of the Cohen--Macaulay property will be useful. The first one is in terms of the length of a minimal free resolution.\footnote{The length of a free resolution $L^\bu = (L^0 \leftarrow L^{-1} \leftarrow \dots \leftarrow L^{-k} \leftarrow 0)$ is $k$.} The Auslander--Buchsbaum formula implies the following.
\begin{prop}
	An $R$-module $\Gamma$ is Cohen--Macaulay if the length of its minimal free resolution equals its codimension, i.e.
	\begin{equation}
	l_R(\Gamma) = n - \mathrm{dim}_R(\Gamma) = \mathrm{codim}_R(\Gamma) \: .
	\end{equation}
\end{prop}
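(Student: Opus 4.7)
The plan is to deduce this directly from the Auslander--Buchsbaum formula, as the statement itself suggests. First I would fix the setup: we have $R = \C[\lambda^1,\dots,\lambda^n]$, which is a regular ring, and $\Gamma$ is a finitely generated graded $R$-module. Passing to the localization at the irrelevant maximal ideal $\fm$ (or equivalently working in the graded-local setting, in which the minimal free resolution is the same object), $R_\fm$ is a regular local ring of dimension $n$, hence Cohen--Macaulay, so $\mathrm{depth}(R) = \dim R = n$.

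Next I would identify $l_R(\Gamma)$ with the projective dimension $\mathrm{pd}_R(\Gamma)$: since $\Gamma$ is finitely generated and graded, it admits a (unique, up to isomorphism) minimal graded free resolution, whose length computes $\mathrm{pd}_R(\Gamma)$ by standard homological algebra. Then the Auslander--Buchsbaum formula reads
\begin{equation}
    \mathrm{pd}_R(\Gamma) + \mathrm{depth}(\Gamma) = \mathrm{depth}(R) = n,
\end{equation}
so $l_R(\Gamma) = n - \mathrm{depth}(\Gamma)$. Combining this with the definition $\mathrm{codim}_R(\Gamma) = n - \dim_R(\Gamma)$, one sees immediately that the equality $l_R(\Gamma) = \mathrm{codim}_R(\Gamma)$ is equivalent to $\mathrm{depth}(\Gamma) = \dim_R(\Gamma)$, which is the definition of Cohen--Macaulay. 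In particular, the argument produces both implications at once.

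The only mild subtlety, which I would flag rather than treat as a genuine obstacle, is the transition from the graded polynomial ring to its localization at the irrelevant ideal, so that the Auslander--Buchsbaum formula and the depth in the Cohen--Macaulay condition apply uniformly. This is harmless for the modules $\Gamma$ of interest in the pure spinor formalism, which are graded with support passing through the origin; under this passage $\mathrm{pd}$, $\mathrm{depth}$, and $\dim$ either are preserved or pass to the standard graded analogues. With this observation in place, the proposition reduces to invoking Auslander--Buchsbaum as above.
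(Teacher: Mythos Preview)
Your proposal is correct and follows exactly the route the paper indicates: the paper does not give a detailed proof but simply remarks that the proposition is a consequence of the Auslander--Buchsbaum formula, and you have spelled out precisely that deduction, including the identification of $l_R(\Gamma)$ with the projective dimension and the harmless passage to the graded-local setting.
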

Equivalently we can characterize Cohen--Macaulay modules via their $\Ext$-groups.
\begin{prop}
	An $R$-module $\Gamma$ of dimension $q$ is Cohen--Macaulay if and only if $\Ext^k_R(\Gamma, R) = 0$ for all $k \neq \mathrm{codim}_R(\Gamma) = n-q \defeqinv r$.
\end{prop}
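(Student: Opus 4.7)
The plan is to combine the previous proposition---which identifies the Cohen--Macaulay property with the equality $\mathrm{pd}_R(\Gamma) = \codim_R(\Gamma)$---with the observation that for any nonzero finitely generated module over a Cohen--Macaulay ring $R$, the groups $\Ext^\bullet_R(\Gamma,R)$ automatically vanish in degrees strictly below $\codim_R(\Gamma)$. Given this auxiliary fact, both directions reduce to a short comparison of invariants, and $\Ext^k_R(\Gamma,R)$ is forced to be concentrated in degree $r$ precisely when $\mathrm{pd}_R(\Gamma) = r$.

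The key auxiliary input is the identification of the \emph{grade} $\mathrm{grade}_R(\Gamma) \defeq \min\{k : \Ext^k_R(\Gamma,R)\neq 0\}$ with $\codim_R(\Gamma)$ whenever $R$ is Cohen--Macaulay. By a standard theorem (see e.g.~\cite{Eisenbud}), for any finitely generated $\Gamma$ the grade equals the length of a maximal $R$-regular sequence inside $\Ann_R(\Gamma)$, i.e.\ $\mathrm{grade}_R(\Gamma) = \mathrm{depth}_{\Ann_R(\Gamma)}(R)$. Since $R$ is regular, hence Cohen--Macaulay, this depth coincides with the height $\mathrm{ht}(\Ann_R(\Gamma))$, which is nothing but the codimension $\codim_R(\Gamma) = r$. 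Thus for any nonzero finitely generated $\Gamma$ one automatically has $\Ext^k_R(\Gamma,R) = 0$ for $k < r$, together with $\Ext^r_R(\Gamma,R) \neq 0$.

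With this in hand, both directions of the proposition are essentially immediate. For the forward implication, if $\Gamma$ is Cohen--Macaulay then the previous proposition gives $\mathrm{pd}_R(\Gamma) = r$, so $\Ext^k_R(\Gamma,R)$ vanishes for $k > r$ (being computed from a minimal free resolution of length $r$), while the automatic vanishing just recorded handles $k < r$. For the converse, assuming $\Ext^k_R(\Gamma,R) = 0$ for every $k \neq r$, the vanishing for $k > r$ gives $\mathrm{pd}_R(\Gamma) \leq r$; combined with the automatic non-vanishing $\Ext^r_R(\Gamma,R) \neq 0$, which forces $\mathrm{pd}_R(\Gamma) \geq r$, we conclude $\mathrm{pd}_R(\Gamma) = r$, and the previous proposition yields that $\Gamma$ is Cohen--Macaulay.

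The main obstacle is really the auxiliary identification $\mathrm{grade}_R(\Gamma) = \codim_R(\Gamma)$, which is where the Cohen--Macaulay hypothesis on the ambient ring $R$ genuinely enters; it is proved by inductively constructing an $R$-regular sequence of length $\mathrm{ht}(\Ann_R(\Gamma))$ inside $\Ann_R(\Gamma)$ via prime avoidance, together with the fact that all minimal primes of $R/\Ann_R(\Gamma)$ have the same height. Everything else is a comparison of the three integers $\mathrm{grade}_R(\Gamma)$, $\codim_R(\Gamma)$, and $\mathrm{pd}_R(\Gamma)$.
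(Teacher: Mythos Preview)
The paper does not actually prove this proposition: it is stated as part of a review of standard commutative algebra, with a blanket reference to Eisenbud. Your argument is correct and is essentially the standard textbook route (grade equals codimension over a Cohen--Macaulay ambient ring, combined with the Auslander--Buchsbaum identification from the previous proposition). One small point worth making explicit in the converse direction: the step ``$\Ext^k_R(\Gamma,R)=0$ for $k>r$ implies $\mathrm{pd}_R(\Gamma)\le r$'' uses that $\Ext^{\,\mathrm{pd}_R(\Gamma)}_R(\Gamma,R)\neq 0$, which holds because over the graded polynomial ring $R$ the differentials in a minimal free resolution have entries in the irrelevant ideal, so the top cohomology of $\Hom_R(L^\bullet,R)$ cannot vanish. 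You clearly have this in mind (you invoke minimality in the forward direction), but it is the one place where the regularity of $R$ is used beyond the grade--codimension identification, so it is worth a sentence.
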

In the context of the pure spinor superfield formalism we consider $R/I$-modules $\Gamma$ as input data, which can be canonically looked at as $R$-modules via the quotient map $R \rightarrow R/I$. Given a Cohen--Macaulay module $\Gamma$, the multiplet $\mu A^\bu(\Gamma)$ is described by the minimal free resolution of $\Gamma$ in $R$-modules. The antifield multiplet $\mu A^\bu(\Gamma)$ is described by the dual of that minimal free resolution, which is, by definition, a minimal free resolution of the dualizing module $\Ext^r_R(\Gamma,R)$. Therefore we can identify for Cohen--Macaulay modules $\Gamma$, 
\begin{equation} \label{extr}
\mu A^\bu(\Gamma)^\vee = \mu A^\bu(\Ext^{r}_R(\Gamma,R)) \: .
\end{equation}
If $\Gamma$ is not Cohen--Macaulay, this is no longer true. Then the dual of the minimal free resolution of $\Gamma$ is no longer a resolution of a single module, but in fact a model for the dualizing complex of $\Gamma$. Its cohomology is the $\Ext$-algebra $\Ext^\bullet_R(\Gamma , R)$.

\subsubsection{The relation to sheaves}
In this work we focus on modules which arise from sheaves on the nilpotence variety $Y$ via $\Gamma_*$. In this setting, we can link the above statements on duality to more geometric notions for sheaves on projective schemes.

Therefore, let us consider a Cohen--Macaulay projective scheme $\iota: X \hookrightarrow \P^n$ of codimension $r$. In this setting the dualizing sheaf of $X$ is a vector bundle, denoted by $\omega_X^\circ$. Explicitly, it can be defined in terms of the ambient projective space as
\begin{equation}
	\omega^\circ_X = \SExt^r_{\cO_{\P^n}}(\iota_* \cO_X , \omega_{\P^n} ) \: .
\end{equation} 
Let us further assume that $\Gamma_*(\cO_X)$ is Cohen--Macaulay as an $R=\Gamma_*(\cO_{\P^n})$-module\footnote{Note also that by restriction of scalars $\Gamma_\ast : \Coh_{\cO_X} \rightarrow \Mod_{\Gamma_*(\cO_X)}$ can be seen as a functor mapping to $\Mod_R$.}. Then the following holds.
\begin{prop} \label{duality} Let $\SHom_{\mathcal{O}_{\scalemath{0.4}{\P}^{\scalemath{0.4}{n}}}} ( \iota_\ast- ,\, \omega^\circ_X) : \Coh_{\cO_X} \rightarrow \Coh_{\cO_{\P^n}}$ and $\Ext^{r}_R (- ,\, R) : \Mod_R \rightarrow \Mod_R$. Then the following diagram commutes 
\begin{equation}
\xymatrix{ \Coh_{\cO_X} \ar[dd]_{\SHom_{\mathcal{O}_{\scalemath{0.4}{\P}^{\scalemath{0.4}{n}}}} ( \iota_\ast- ,\, \omega^\circ_X)(n+1)} \ar[rr]^{\Gamma_\ast} & & \Mod_R \ar[dd]^{\Ext^{r}_R (- ,\, R)}  \\ \\
\Coh_{\cO_{\P^n}}  \ar[rr]^{\quad \Gamma_\ast}  & & \Mod_R.
}
\end{equation} 
In particular, let $\mathcal{F} \in \Coh(X)$ be a coherent sheaf on $X$ and $\Gamma_\ast (\mathcal{F})$ be its associated $R = \Gamma_*(\cO_{\P^n})$-module. The following is a natural isomorphism
\begin{equation}
\Ext^{r}_R (\Gamma_\ast (\mathcal{F}), R) \cong \Gamma_\ast \SHom_{\mathcal{O}_{\P^{\scalemath{0.4}{n}}}} (\iota_\ast \mathcal{F}, \omega^\circ_X)(n+1).
\end{equation}
\end{prop} 
\begin{proof} Since shifting does not change the cohomology class, one has
\begin{align}
\Ext^r_R (\Gamma_\ast (\mathcal{F}), R) &\cong \Ext^r_R (\Gamma_\ast (\mathcal{F}) (-n-1) \: , \:  R(-n-1)) \nonumber \\
& \cong  \Ext^r_R (\Gamma_\ast (\mathcal{F}) (-n-1) \otimes_R \Gamma_*(\cO_X) \: , \:  R(-n-1)) \: ,
\end{align}
where the second isomorphism follows from the fact that the sheaf $\mathcal{F}$ is supported on $X$ and $\widetilde{\Gamma_\ast (\mathcal{O}_X)}\cong \mathcal{O}_X.$  
By derived hom-tensor adjunction~\cite{Huybrechts} one has
\begin{equation} \label{eq: tensor hom}
\begin{split}
	\Ext^r_R \left (\Gamma_\ast (\mathcal{F}) , R \right ) &\cong \Hom_R \left (\Gamma_\ast (\mathcal{F}) (-n-1) \: , \: \Ext^r_R (\Gamma_*(\cO_X), R(-n-1)) \right ) \\
	&=  \Hom_R \left (\Gamma_\ast (\mathcal{F}) (-n-1) \: , \: \Ext^r_R (\Gamma_*(\cO_X), \Gamma_*(\omega_{\P^n})) \right ) \: ,
\end{split}
\end{equation}
where we used that $\Gamma_*(\omega_{\P^n}) = R(-n-1)$ in the second step. Notice that, by assumption $\Gamma_\ast (\mathcal{O}_X)$ is Cohen--Macaulay as an $R$-module, hence the only non-zero Ext-module in the derived hom-tensor adjunction is indeed the dualizing module $\Ext^r_R (\Gamma_*(\cO_X), R(-n-1))$.
Further, note that $\Hom_R$ in \eqref{eq: tensor hom} denotes graded morphisms of all degrees. For morphisms of degree zero, we have the adjunction~\cite{FOAG, Hartshorne}
\begin{equation}
	\Hom_R^{\deg =0}(M, \Gamma_*(\cH)) = \Hom_{\P^n}(\tilde{M} , \cH)
\end{equation}
between the functors $\Gamma_\ast : \QCoh_{\mathcal{O}_X} \rightarrow \Mod_R $ and $ \widetilde{(\cdot)} : \Mod_R \rightarrow \QCoh_{\mathcal{O}_X}$.
Upon using $\widetilde{\Gamma_{\ast}(\cG)} = \cG$, this implies that
\begin{equation}
	\Hom_R^{\deg =0}(\Gamma_*(\cG), \Gamma_*(\cH)) = \Hom_{\P^n}(\cG , \cH) \: .
\end{equation}
Shifting and summing on both sides, one reconstructs the graded morphisms:
\begin{equation}
\begin{split}
	\Hom_R(\Gamma_*(\cG), \Gamma_*(\cH)) &= \bigoplus_{k \in \ZZ} \Hom_{\P^n}(\cG , \cH (k))
	 =\bigoplus_{k \in \ZZ} \Gamma \circ \left( \SHom_{\P^n}(\cG , \cH (k)) \right) \\
	& =\bigoplus_{k \in \ZZ} \Gamma \circ \left( \SHom_{\P^n}(\cG , \cH) \otimes \mathcal{O}_{\P^n}(k) \right) \\
	& =\Gamma_*(\SHom_{\P^n}(\cG , \cH))  ,
\end{split}
\end{equation}
where we have used that $\Gamma \circ \SHom_{\P^n} = \Hom_{\P^n}, $ where $\Gamma$ is the global section functor. Deriving the above functors, one gets a local-to-global spectral sequence, which in the case of interest yields the isomorphism 
\begin{equation}
	\Ext_R^r(\Gamma_*(\cO_X), \Gamma_*(\omega_{\P^n})) \cong \Gamma_*(\SExt^r_{\P^n}(i_*\cO_X, \omega_{\P^n})) \: .
\end{equation}
Plugging this into~\eqref{eq: tensor hom}, we finally obtain
\begin{equation}
\begin{split}
	\Ext_R^r(\Gamma_*(\cF) , R) &\cong \Gamma_*\left( \SHom_{\P^n}(\cF(-n-1) , \SExt_{\P^n}^r(i_*\cO_X , \omega_{\P^n})) \right) \\
	& \cong \Gamma_*\left( \SHom_{\P^n}(\cF , \SExt_{\P^n}^r(i_*\cO_X , \omega_{\P^n})) (n+1) \right) \\
	& \cong \Gamma_*\left( \SHom_{\P^n}(\cF , \omega^\circ_X) (n+1)\right) \: ,
\end{split}
\end{equation}
which concludes the proof.
\end{proof}
This result establishes that the dualizing module of $\Gamma_*(\cF)$ arises geometrically from the sheaf $\SHom_{\mathcal{O}_{\P^{\scalemath{0.4}{n}}}} (\iota_\ast \mathcal{F}, \omega^\circ_X)(n+1)$. The above theorem, together with~\eqref{extr}, implies the following corollary.
\begin{cor}
	If $\Gamma_*(\cF)$ is a Cohen--Macaulay $R$-module, then we have
	\begin{equation}
		\mu A^\bu(\Gamma_*(\cF))^\vee \cong \mu A^\bu(\Gamma_*(\SHom_{\mathcal{O}_{\P^{\scalemath{0.4}{n}}}} (\iota_\ast \mathcal{F}, \omega^\circ_X) (n+1))) \: . 
	\end{equation}
\end{cor}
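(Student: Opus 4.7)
The corollary is essentially a direct corollary of the two results just established, so my plan is to chain them together carefully and then address the one cosmetic discrepancy (the twist by $n+1$).

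First I would invoke equation \eqref{extr}. Since $\Gamma_*(\cF)$ is assumed Cohen--Macaulay as an $R$-module, and since its codimension in $R$ is the same $r$ as that of $\Gamma_*(\cO_X)$ (because $\cF$ is supported on $X$), we can immediately write
\begin{equation*}
\mu A^\bu(\Gamma_*(\cF))^\vee \;\cong\; \mu A^\bu\!\left(\Ext^{r}_R(\Gamma_*(\cF),\,R)\right).
\end{equation*}
This is the step that requires the Cohen--Macaulay hypothesis; without it, the dual of the minimal free resolution of $\Gamma_*(\cF)$ would compute the full dualizing complex rather than a single dualizing module, and $\mu A^\bu$ of a single module would not capture the whole antifield multiplet.

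Next I would apply the Proposition \ref{duality} just proved, which identifies the Ext-module on the right with a graded module of global sections of a sheaf:
\begin{equation*}
\Ext^{r}_R(\Gamma_*(\cF),\,R) \;\cong\; \Gamma_*\!\left(\SHom_{\mathcal{O}_{\P^n}}(\iota_*\cF,\,\omega^\circ_X)\right)(n+1).
\end{equation*}
Substituting this into the previous line gives $\mu A^\bu(\Gamma_*(\cF))^\vee \cong \mu A^\bu\bigl(\Gamma_*(\SHom_{\mathcal{O}_{\P^n}}(\iota_*\cF, \omega_X^\circ))(n+1)\bigr)$.

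The remaining step is to absorb the overall twist by $(n+1)$. By the same shift identity \eqref{eq: shift} used earlier in the paper, twisting the input module by $(k)$ simply shifts the cohomological grading of the resulting multiplet by $[k]$, since the weight grading on the $R/I$-module becomes the cohomological grading on $\mu A^\bu$. Thus up to a uniform degree shift (which is the content of $\cong$ in the statement), we obtain the claimed isomorphism of multiplets. I would flag this shift explicitly in the proof, as it is the only place where one must be careful, and note that it is purely a normalization convention fixing the position of the ``top'' component field of the antifield multiplet. The Cohen--Macaulay assumption on $\Gamma_*(\cF)$ itself (as opposed to just on $\Gamma_*(\cO_X)$, which was needed in the Proposition) is what ensures both invocations are legitimate simultaneously; if only $\Gamma_*(\cO_X)$ were Cohen--Macaulay, the corollary would have to be reformulated in terms of dualizing complexes.
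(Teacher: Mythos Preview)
Your proposal is correct and follows exactly the same approach as the paper, which simply states that the corollary follows from Proposition~\ref{duality} together with equation~\eqref{extr}. You are in fact more careful than the paper's statement about the twist by $(n+1)$: the corollary as written suppresses this shift, but the paper does track it explicitly in the subsequent application to line bundles (obtaining $\mu A^\bu(n,0)^\vee \cong \mu A^\bu(2-n,0)[4]$), so your flagging of it is entirely appropriate.
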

It is possible to prove whether or not a sheaf $\cF$ gives rise to a Cohen--Macaulay module via $\Gamma_*$ by studying its sheaf cohomology. In particular, the following result holds~\cite[\S3.8, (3.15.1)]{Kollar}.
\begin{prop}\label{prop: cm}
	Let $X$ be a Cohen-Macaulay projective scheme and let $\mathcal{L}$ be an ample line bundle on it. Given a coherent sheaf $\mathcal{F}$ on $X$, then $\Gamma_\ast (\mathcal{F})$ is a Cohen-Macaulay $R$-module if and only if $H^i (X, \mathcal{F} \otimes \mathcal{L}^{\otimes k}) = 0$ for any $0 < i < \dim (X)$ for any $k \in \mathbb{Z}$.
\end{prop}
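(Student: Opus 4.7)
The plan is to translate both the Cohen--Macaulay property of the module and the cohomological condition on the sheaf into vanishing statements for local cohomology, where they become manifestly equivalent.

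\textbf{Step 1: Cohen--Macaulay via local cohomology.}
Let $\fm \subset R$ denote the irrelevant maximal ideal (generated by elements of positive weight). For any finitely generated graded $R$-module $M$, a standard result in commutative algebra (Grothendieck's vanishing, see e.g.~\cite{Eisenbud}) characterizes the depth of $M$ at $\fm$ as the smallest integer $i$ such that the local cohomology $H^i_\fm(M)$ does not vanish. Consequently $M$ is Cohen--Macaulay if and only if $H^i_\fm(M) = 0$ for all $i < \dim_R M$. Applied to $M = \Gamma_\ast(\cF)$, whose $R$-dimension equals $\dim X + 1$ (one larger than the dimension of the support of~$\cF$, owing to the affine cone structure), the Cohen--Macaulay condition becomes
\deq{
H^i_\fm\left(\Gamma_\ast(\cF)\right) = 0 \qquad \text{for all } i < \dim X + 1.
}

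\textbf{Step 2: Local-to-global translation.}
The link with projective geometry is provided by the Serre--Grothendieck exact sequence, which for any finitely generated graded $R$-module $M$ and its associated sheaf $\widetilde{M}$ reads
\deq{
0 \to H^0_\fm(M) \to M \to \Gamma_\ast(\widetilde{M}) \to H^1_\fm(M) \to 0,
}
and provides natural isomorphisms
\deq{
H^{i+1}_\fm(M) \;\cong\; \bigoplus_{k \in \Z} H^i\bigl(\Proj R, \widetilde{M}(k)\bigr), \qquad i \geq 1.
}
Since $\widetilde{\Gamma_\ast(\cF)} \cong \cF$ and $\Gamma_\ast \widetilde{\Gamma_\ast(\cF)} = \Gamma_\ast(\cF)$, the first two local cohomology groups $H^0_\fm$ and $H^1_\fm$ of $\Gamma_\ast(\cF)$ vanish automatically. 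The remaining ones, for $2 \leq i \leq \dim X$, are identified with the direct sums $\bigoplus_k H^{i-1}(X, \cF \otimes \cL^{\otimes k})$, after replacing the restriction of $\cO_{\P^n}(1)$ by the ample line bundle $\cL$ that polarizes~$X$.

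\textbf{Step 3: Conclusion.}
Combining the two previous steps, the Cohen--Macaulay condition on $\Gamma_\ast(\cF)$ becomes the vanishing of $\bigoplus_k H^j(X, \cF \otimes \cL^{\otimes k})$ for every $j$ in the range $1 \leq j \leq \dim X - 1$, which is exactly the stated condition $H^i(X, \cF\otimes \cL^{\otimes k}) = 0$ for $0 < i < \dim X$ and all $k \in \Z$. The converse direction is identical: any failure of the sheaf vanishings produces a corresponding non-trivial local cohomology group in the prescribed range, obstructing the Cohen--Macaulay property.

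\textbf{Main obstacle.}
The routine parts are the identification of depth with vanishing of local cohomology and the Serre--Grothendieck comparison; these can be cited from~\cite{Eisenbud} or~\cite{Hartshorne}. The delicate point is bookkeeping with the dimension shift between $\dim_R \Gamma_\ast(\cF)$ and $\dim X$, and ensuring that the graded ring $R$ (ambient polynomial ring) and the ample polarization $\cL$ on~$X$ interact correctly so that the shifts in the isomorphism above produce precisely the range $0 < i < \dim X$, and not a range shifted by one. Once this is pinned down the equivalence is automatic.
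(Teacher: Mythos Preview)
The paper does not actually prove this proposition; it is stated with a citation to~\cite{Kollar} and no argument is given. Your proof via local cohomology and the Serre--Grothendieck comparison is the standard route to this result and is correct.

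One small point worth tightening: your identification $\dim_R \Gamma_\ast(\cF) = \dim X + 1$ tacitly assumes that $\cF$ has support of dimension $\dim X$. For a coherent sheaf with strictly smaller support the range of local cohomology degrees to check would shift accordingly, and the statement as written (with the range $0 < i < \dim X$) would no longer be quite the right condition. You correctly flag the dimension bookkeeping as the delicate step, and in every application the paper makes (line bundles and natural vector bundles on~$Y$) the full-support hypothesis is satisfied, so this is a cosmetic issue rather than a gap.
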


\subsection{Duality and line bundles}

As a case study for the general results above, we can thus read off the length of the minimal free resolution by examining the multiplets in~\S3.2 and~\S3.3, recalling that $\mu A^\bullet (-n,0) $ is $\mu A^\bullet (0,n)$ up to a shift.
For a start, recall that the field content of the multiplet $\mu A^\bu(n,0)$ for $n \in \mathbb{Z}$ takes values in the minimal free resolution of the $R$-module $\Gamma_*(\cO_Y(n,0))$. Therefore, given the results in the previous section, we can easily read off $l_R(\Gamma_*(\cO_Y(n,0)))$:
\begin{equation}
l_R(\Gamma_*(\cO(n,0) ) ) = \begin{cases}
3 & \text{for} \ n\in\{-1,0,1,2,3\} \\
4 & \text{for} \ n>3 \\
6 & \text{for} \ n<-2. 
\end{cases}
\end{equation}
Notice that all the modules $\Gamma_\ast (\cO_Y (n,0))$  come from line bundles supported on the nilpotence variety $Y \cong \P^1 \times \P^3 \subset \P^7$, which is of codimension $3$ in $\mathbb{P}^7$. From this, we can infer the following lemma.
\begin{lem} \label{lemmaCM} The $R$-module
	$\Gamma_*(\cO(n,0))$ is Cohen--Macaulay if and only if $n \in \{-1,0,1,2,3\}$.
\end{lem}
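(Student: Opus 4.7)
The plan is to invoke the Auslander--Buchsbaum criterion recalled just above. First, I would identify the codimension of $\Gamma_*(\cO(n,0))$ as an $R$-module: since the module is supported on the affine cone $\hat Y$ over the Segre variety $Y = \P^1 \times \P^3 \subset \P^7$, and $\hat Y$ has dimension $5$ inside $\Spec R = \AA^8$, its codimension equals $3$. The module is nonzero for every $n \in \Z$, since $\cO(n+d,d)$ has nonzero global sections for $d$ large enough, so this analysis applies uniformly.

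With the codimension in hand, Auslander--Buchsbaum implies that $\Gamma_*(\cO(n,0))$ is Cohen--Macaulay if and only if the length of its minimal free resolution equals $3$. The values of $l_R(\Gamma_*(\cO(n,0)))$ displayed immediately before the lemma equal $3$ precisely when $n \in \{-1,0,1,2,3\}$ and strictly exceed $3$ in every other case, so the lemma reduces to verifying this length table.

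The verification proceeds case by case, using the shift symmetry~\eqref{eq: shift} to reduce everything to the families computed in Sections~\ref{sec:(n,0)} and~\ref{sec: O(0,m)}. For $n \geq 0$ the explicit equivariant minimal free resolutions in Section~\ref{sec:(n,0)} give the lengths directly. For $n \leq -1$ the identification $\Gamma_*(\cO(n,0)) \cong \Gamma_*(\cO(0,-n))(n)$ reduces the computation to Section~\ref{sec: O(0,m)}: the case $n=-1$ inherits length $3$ from the short resolution of $\Gamma_*(\cO(0,1))$, while for $n \leq -2$ the equivariant decomposition of $\Gamma_*(\cO(0,-n))$ exhibits nonzero Koszul generators up through cohomological degree $-6$, giving length $6$. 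The main point requiring care is ensuring that these Koszul generators really are present (i.e.\ that no further cancellations shorten the resolution), but this is immediate from the explicit nonvanishing of the top-degree Betti number recorded in the table.

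As a conceptual cross-check that bypasses the length bookkeeping entirely, one can apply Proposition~\ref{prop: cm} with the ample line bundle $\cO_Y(1) = \cO(1,1)$: it suffices to verify that $H^i(Y, \cO(n+k, k))$ vanishes for all $0 < i < 4$ and all $k \in \Z$. By Künneth and Bott's formula on $\P^1$ and $\P^3$, $H^2$ vanishes identically; $H^1$ fails to vanish exactly when $n \leq -2$ (via $H^1(\P^1, \cO(n+k)) \otimes H^0(\P^3, \cO(k))$); and $H^3$ fails to vanish exactly when $n \geq 4$ (via $H^0(\P^1, \cO(n+k)) \otimes H^3(\P^3, \cO(k))$). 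This recovers the range $\{-1,0,1,2,3\}$ intrinsically and gives an independent proof of the lemma.
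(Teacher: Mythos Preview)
Your proposal is correct and follows essentially the same route as the paper: it derives the length table from the explicit resolutions computed in \S\ref{sec:(n,0)} and \S\ref{sec: O(0,m)} (using the shift \eqref{eq: shift} for negative $n$), compares against the codimension $3$ of $\hat Y$ in $\AA^8$ via Auslander--Buchsbaum, and then offers the same K\"unneth cross-check through Proposition~\ref{prop: cm} that the paper records immediately after the lemma. Your K\"unneth analysis is in fact slightly more explicit than the paper's, spelling out exactly which tensor factor is responsible for the nonvanishing of $H^1$ (when $n\leq -2$) and $H^3$ (when $n\geq 4$).
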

\noindent	Therefore, for $n$ in this range, we have
\begin{equation}
\mu A^\bu(n,0)^\vee \cong \mu A^\bu(\Ext^3_R(\Gamma_*(\cO(n,0)),R)) \: .
\end{equation}
Lemma~\ref{lemmaCM} can also be proved directly by studying the sheaf cohomology of the line bundles $\cO(n,m)$ and using Proposition~\ref{prop: cm}. Indeed, we can choose $\cL = \cO(1,1)$ as an ample line bundle and use the Künneth theorem to verify that the middle cohomologies ${H^i(Y,\cO(n+k,k))}$ vanish for $i = 1,2,3$ and for all $k$ precisely when $n \in \{-1,0,1,2,3\}$.

Furthermore, since $Y = \P^1 \times \P^3$, the dualizing sheaf can be described explicitly as the exterior tensor product of the respective dualizing sheaves on the factors. Explicitly,
\begin{equation}
	\omega^\circ_Y = \pi_1^* \omega_{\P^1} \otimes \pi_3^* \omega_{\P^3} = \cO(-2,-4) \: .
\end{equation}
Using this together with Proposition~\ref{duality} gives
\begin{equation}
\begin{split}
	\Ext^3_R(\Gamma_*(\cO(n,0)) , R) &= \Gamma_*(\cO(n,0)^\vee \otimes \cO(-2,-4))(5) \\
	&= \Gamma_*(\cO(-n-2,-4))(5) \\
	&= \Gamma_*(\cO(2-n , 0))(1) \: . 
\end{split}
\end{equation}
Thus we obtain
\begin{equation}
	\mu A^\bu(2-n,0)[1] = \mu A^\bu\left( \Ext^3_R(\Gamma_*(\cO(n,0)) , R) \right) \: .
\end{equation}
In the range where $\Gamma_*(\cO(n,0))$ is Cohen--Macaulay, this implies
\begin{equation} \label{antifn0}
	\mu A^\bu(n,0)^\vee \cong \mu A^\bu(2-n,0)[1]\: .
\end{equation}
This can be viewed as a remnant of Serre duality for line bundles on the multiplet side.

\section{Short exact sequences}
In this section, we discuss some general conclusions that can be drawn about short exact sequences of vector bundles in the context of the pure spinor superfield formalism, and then move on to study some concrete examples in our six-dimensional setting. 
The sections that follow will study the multiplets associated to the tangent and normal bundles and their duals, and apply these results in the context of natural short exact sequences in which those vector bundles appear. 
As such, we are motivated both by abstract considerations---having understood that the pure spinor construction is a functor, it is natural to ask about it not just on single objects, but on diagrams of objects---and, as throughout this paper, by concrete computational examples.

\subsection{General observations}
Let
\begin{equation} \label{eq: seq}
0 \longrightarrow \Gamma' \longrightarrow \Gamma \longrightarrow \Gamma'' \longrightarrow 0
\end{equation}
be a short exact sequence of graded equivariant $R/I$-modules. 
We recall that such a sequence presents $\Gamma$ as an \emph{extension} of $\Gamma''$ by $\Gamma'$. It is standard that equivalence classes of such extensions are classified by $\Ext^1(\Gamma'',\Gamma')$. It is clear that the sequence~\eqref{eq: seq} defines a two-step filtration on the module $\Gamma$, whose associated graded module is $\Gamma' \oplus \Gamma''$.

This motivates the following definition.
\begin{dfn} \label{def: def}
    Let $E$ and $\tilde{E}$ be multiplets. We say that $\tilde E$ is a \emph{deformation} of 
    $E$ if it admits a filtration $F^\bu \tilde E$ with the property that 
    \[
        \Gr F^\bu \tilde E \cong E.
    \]
\end{dfn}
The functor $A^\bullet$ is exact; applying it to the extension sequence~\eqref{eq: seq} produces a short exact sequence of strict multiplets
\begin{equation} \label{eq: multses}
    0 \longrightarrow A^\bullet(\Gamma') \longrightarrow A^\bullet(\Gamma) \longrightarrow A^\bullet(\Gamma'') \longrightarrow 0 .
\end{equation}
It is then immediate to see that the following holds.
\begin{obs} \label{obs}
    Given a short exact sequence of $R/I$-modules as in~\eqref{eq: seq}, the multiplet $A^\bu(\Gamma)$ is a deformation of the direct sum $A^\bu(\Gamma') \oplus A^\bu(\Gamma'')$ in the sense of Definition~\ref{def: def}.
\end{obs}

We will be interested in working with such deformations a bit more concretely; to that end, we record a few elementary observations here.
Since $\Gamma \cong \Gamma' \oplus \Gamma''$ noncanonically as graded vector spaces (even equivariantly for $\fp_0$), the underlying graded super vector bundles $A^\bu(\Gamma)^\#$ and $A^\bu(\Gamma')^\# \oplus A^\bu(\Gamma'')^\#$ of the corresponding multiplets are also isomorphic. The deformation is thus visible only at the level of the differential; the module structure for the supersymmetry algebra is not affected at this level, as everything is freely resolved over the supermanifold $T$.

Recall that the differential on $A^\bu(\Gamma)$ arises from the module structure via the formula 
\begin{equation} \label{eq: ps-diff}
    \cD = \lambda^\alpha \mathpzc{R}(Q_\alpha) \: .
\end{equation}
Making reference to a choice of splitting as vector spaces, the module structure on $\Gamma$ will differ from that on the direct sum by terms in which an element of~$R$ acts by a map from $\Gamma''$ to $\Gamma'$. Equivalence classes of such maps are precisely classes in $\Ext^1(\Gamma'', \Gamma')$.

In practice, we are often interested in component-field descriptions, and therefore specifically in the minimal multiplets $\mu A^\bu(\Gamma)$, $\mu A^\bu(\Gamma')$, and $\mu A^\bu(\Gamma'')$. The additional terms in the differential that are induced by the extension class 
may contain terms of order zero as differential operators, so that there is no way to identify $\mu A^\bu(\Gamma)^\#$ with the direct sum $\mu A^\bu(\Gamma')^\# \oplus \mu A^\bu(\Gamma'')^\#$---even at the level of vector bundles. 
Similarly, the supersymmetry module structures on the component-field models are obtained by homotopy transfer, and thus \emph{are} affected by the deformation. 
At the level of the differential, though, there is a way to present the deformation of multiplets concretely at the component-field level, as we show in the following lemma.
\begin{lem}
    \label{lem: deformation}
	Given a short exact sequence of $R/I$-modules as above, there is a deformation $d'$ of the differential on the direct sum multiplet $\mu A^\bu(\Gamma') \oplus \mu A^\bu(\Gamma'')$ and a roof of quasi-isomorphisms inducing an equivalence 
	\begin{equation}
            \left[ \mu A^\bu(\Gamma) , d_\Gamma \right] \simeq \left[ \mu A^\bu(\Gamma') \oplus \mu A^\bu(\Gamma'') , d_{\Gamma'} + d_{\Gamma''} + d' \right] = \left[ \mu A^\bu(\Gamma' \oplus \Gamma'') , d_{\Gamma' \oplus \Gamma''}  + d'\right].
	\end{equation}
\end{lem}
\begin{proof}
    We start from Observation~\ref{obs}. We can apply the Rees construction~\cite{Rees} to construct a family  
	\begin{equation}
        \left( A^\bu(\Gamma' \oplus \Gamma'')[t], \cD_{\Gamma' \oplus \Gamma''} + t\cD') \right)
	\end{equation}
        over the disk, whose fiber at $t=1$ is identified with $\left(A^\bu(\Gamma) ,\cD_{\Gamma} \right)$. 
        We can now apply the spectral sequences associated to two different filtrations. First, we can filter as described in~\S\ref{ssec: construct} and take cohomology with respect to the Koszul differential. The corresponding homotopy transfer yields the left-hand-side $\mu A^\bu(\Gamma)$. Second, we can apply the same filtration, but now also assigning weight one to the deformation parameter $t$. Correspondingly, we first take cohomology with respect to the Koszul differential associated to the direct sum module structure on $\Gamma' \oplus \Gamma''$. Hence, the cohomology is precisely $\mu A^\bu(\Gamma' \oplus \Gamma'')$. Performing the homotopy transfer induces additional pieces coming from the deformation piece of the differential that constitute the new deformation of the differential at component field level.
        Specializing to the fiber at $t=1$, we obtain the statement.
\end{proof}
In this work we often deal with short exact sequences of equivariant sheaves on $Y$. Let
\begin{equation}
0 \longrightarrow \cF' \longrightarrow \cF \longrightarrow \cF'' \longrightarrow 0
\end{equation}
be such a sequence. First, we observe that taking the tensor product with a line bundle keeps the sequence exact, thus we obtain short exact sequences
\begin{equation}
0 \longrightarrow \cF'(k) \longrightarrow \cF(k) \longrightarrow \cF''(k) \longrightarrow 0
\end{equation}
for all $k \in \Z$. Second, a short exact sequence induces a long exact sequence in cohomology
\begin{equation}
0 \longrightarrow H^0(\cF'(k)) \longrightarrow H^0(\cF(k)) \longrightarrow H^0(\cF''(k)) \xlongrightarrow{\delta} H^1(\cF'(k)) \longrightarrow \dots
\end{equation}
Thus, if the map $\delta$ vanishes (for example due to $H^1(\cF'(k))$ being zero) for all $k$, we obtain a short exact sequence on the global sections
\begin{equation}
0 \longrightarrow H^0(\cF'(k)) \longrightarrow H^0(\cF(k)) \longrightarrow H^0(\cF''(k)) \longrightarrow 0 \: ,
\end{equation}
and therefore a short exact sequence of graded equivariant $R/I$-modules
\begin{equation}
0 \longrightarrow \Gamma_*(\cF') \longrightarrow \Gamma_*(\cF) \longrightarrow \Gamma_*(\cF'') \longrightarrow 0 \: .
\end{equation}
and we find ourselves in the situation described above.

\begin{rmk}
    It is worth recalling that extensions of sheaves are often interpreted as related to interactions or bound states in mathematical physics. Just for example, in topological string theory, $B$-branes are identified with coherent sheaves on the target space, which is typically a Calabi--Yau threefold. As emphasized in early work on the subject~\cite[for example]{Sharpe,Douglas,Aspinwall}, a nontrivial extension sequence of the form 
    \[
        0 \to A \to B \to C \to 0
    \]
    indicates that $B$ should be thought of as a bound state of the branes $A$ and~$C$. (Making this interpretation precise led to the identification of the category of $B$-branes with the \emph{derived} category of coherent sheaves.)

    In our setting, as explained, the extension defines a deformation of the module structure, which in turn deforms the differential on the multiplet. Thinking in the context of the Batalin--Vilkovisky formalism, a deformation of the differential can in turn be thought of as a deformation of the quadratic part of the BV action. As such, the new differentials we consider on component fields can be interpreted, at least schematically, as (quadratic) supersymmetric interactions between the multiplets $\mu A^\bu(\Gamma')$ and~$\mu A^\bu(\Gamma'')$, such that the deformed multiplet has derived supertranslation invariants~$\Gamma$.
\end{rmk}

In the following, we will study several examples of short exact sequences and the associated deformations of multiplets. \S\ref{sec:comp} will examine a construction that produces the vector multiplet in three-dimensional $\N=1$ supersymmetry from a deformation of scalar- and spinor-valued free superfields, and compute the deformation of the differential $d'$ explicitly, using the procedure of Lemma~\ref{lem: deformation}. In other examples, we state the form of $d'$ without proof, or just observe explicitly that differentials of the appropriate form exist (though this is of course guaranteed by the general results above).

\subsection{Example: three-dimensional $\cN=1$}
\label{sec:comp}
Let us illustrate the findings from above in the case of three-dimensional $\cN=1$ supersymmetry. 
We will give an example of a short exact sequence of $R/I$-modules, and compute the deformation $d'$ from Lemma~\ref{lem: deformation} explicitly.

In three dimensions we have $\mathfrak{so}(3) \cong \mathfrak{sl}(2)$ as complex Lie algebras. We denote the two-dimensional spinor representation by $S$ and the three-dimensional vector representation by $V$. The super Poincar\'e algebra takes the form
\begin{equation}
    \fp = \mathfrak{sl}(2) \oplus S(-1) \oplus V(-2),
\end{equation}
with the bracket of odd elements given by the isomorphism $\Gamma: \sym^2(S) \cong V$. Writing $R = \C[\lambda^\alpha]$ with $\alpha=1,2$, the defining ideal of the nilpotence variety is
\begin{equation}
	I = ( (\lambda^1)^2 , \lambda^1 \lambda^2 , (\lambda^2)^2 ) = R_{\geq 2}.
\end{equation}
Thus $R/I$ is a finite-dimensional algebra over~$C$, supported in weights zero and one. It is the square-zero extension of the ground field $\C$ by the vector space $S$.
By a slight abuse of notation, we will just write $\C$ for the corresponding $R/I$-module (the quotient by the maximal ideal), and similarly $S$ for the $R/I$-module structure on the vector space arising from the quotient map $R/I \to \C$.

\begin{prop}
Consider the short exact sequence of $R/I$-modules 
\begin{equation}
    \label{eq:ses}
    0 \longrightarrow S(-1) \longrightarrow R/I \longrightarrow \C \longrightarrow 0.
\end{equation}
We recall that $\mu A^\bu(\C) = C^\infty(T)$. 
The differential $d'$ induced by the extension, via the construction in Lemma~\ref{lem: deformation}, is the sum of the following four maps from $C^\infty(T)$ to $C^\infty(T) \otimes_\C S$:
\begin{itemize}
    \item the identity morphism from $S[1](-1) \subset C^\infty(T)$ to $S(-1) \subset C^\infty(T) \otimes_\C S(-1)$;
    \item the equivariant inclusion map from $\Omega^0[2](-2) \subset C^\infty(T)$ to $\Omega^0[1](-2) \subset S[1](-1) \otimes S(-1) \subset C^\infty(T) \otimes_\C S(-1)$;
    \item the de Rham differential from $\Omega^0 \subset C^\infty(T)$ to $\Omega^1[1](-2) \subset S[1](-1) \otimes S[-1] \subset C^\infty(T) 
        \otimes_\C S(-1)$;
    \item the Dirac operator from $S[1](-1) \subset C^\infty(T)$ to $S[2](-3) \subset C^\infty(T) \otimes_\C S(-1)$.
\end{itemize}
We summarize these maps with the arrows in the following table:
\begin{equation}
\left[
\begin{tikzcd}[row sep=0.5cm, column sep=0.4cm]
    \textcolor{blue!65!black}{\Omega^0} \arrow[dr, "\d", near start] & \textcolor{blue!65!black}{S} \arrow[dl, crossing over, anchor=top] \arrow[dr, "\slashed{\partial}", near start] & \textcolor{blue!65!black}{\Omega^0} \arrow[dl, crossing over] \\
    \textcolor{green!65!black}{S} & \textcolor{green!65!black}{\smash{\Omega^1 \oplus \Omega^0}\vphantom{\Omega\oplus\Omega^0}} & \textcolor{green!65!black}{S}
\end{tikzcd} \right].
\end{equation}
\end{prop}
\begin{proof}
    The extension~\eqref{eq:ses} is nontrivial. The module structure on the direct sum $\C \oplus S(-1)$ is inherited from the vector space structure; the deformation of the module structure to that on $R/I$ is given by the map
\begin{equation} \label{eq: deform mod}
	R/I \times \C \longrightarrow S(-1) \qquad (\lambda^\alpha , 1) \mapsto s^\alpha \: . 
\end{equation}
The multiplet $A^\bu(\C)$ is a free superfield: it takes the form 
\[
    C^\infty(T) = C^\infty(\R^3) \otimes \wedge^\bu S^\vee(1),
\]
generated by variables $\theta^\alpha$ of cohomological degree zero and internal degree $-1$. 
Similarly, $A^\bu(S(-1))$ is a free superfield with values in $S(-1)$; we write $s^\alpha$ for a basis for this copy of $S$. The differentials $\cD_{\C}$ and~$\cD_{S(-1)}$ are zero, and so $\mu A^\bu(\C) \cong A^\bu(\C)$.

The representation $S$ is self-dual. Using the isomorphism $S \otimes S \cong V \oplus \C$, and denoting fields in the vector representation by  $\Omega^1$ and fields in the trivial representation by $\Omega^0$, the direct sum $\mu A^\bu(\C) \oplus \mu A^\bu(S(-1))$ takes the form 
\begin{equation}
 \textcolor{blue!65!black}{\mu A^\bu(\C)} \oplus 
    \textcolor{green!65!black}{\mu A^\bu(S(-1))} = \left[
\begin{tikzcd}[row sep=0.3cm, column sep=0.2cm]
\textcolor{blue!65!black}{\Omega^0} & \textcolor{blue!65!black}{S} & \textcolor{blue!65!black}{\Omega^0} \\
\textcolor{green!65!black}{S} & \textcolor{green!65!black}{\Omega^1 \oplus \Omega^0} & \textcolor{green!65!black}{S}
\end{tikzcd} \right]
\end{equation}
It is straightforward to see that the component fields have the following representatives:
\begin{equation}
	\begin{bmatrix}
	1 & \theta^\alpha & \theta^1 \theta^2 \\
	s^\alpha & \theta^\alpha s^\beta & \theta^1 \theta^2 s^\alpha 
	\end{bmatrix} \: .
\end{equation}
Since the differential is zero and $R/I = \C \oplus S(-1)$ as vector spaces, 
we have an immediate identification 
\deq{
    \mu A^\bu(\C) \oplus \mu A^\bu(S(-1)) \cong A^\bu(R/I)^\#,
}
as well as an identification between the differentials $d'$ and $\cD_{R/I}$. 
Applying the general formula
\begin{equation}
	\cD = \lambda^\alpha \frac{\partial}{\partial \theta^\alpha} + \lambda^\alpha \theta^\beta \Gamma^\mu_{\alpha \beta} \frac{\partial}{\partial x^\mu} \: ,
\end{equation}
and using the representatives, 
we recover the four terms in the differential by a straightforward direct computation. 
Using the other filtration in Lemma~\ref{lem: deformation} just recovers the component-field spectral sequence, giving us the component-field model $\mu A^\bu(R/I)$.
\end{proof}

\subsection{The Euler sequence for $\P^1$}
Let us now discuss a family of short exact sequences in six dimensions. Identifying $\cT_{\P^1} \cong \cO_{\P^1}(2)$, the Euler exact sequence for $\P^1$ reads
\begin{equation}
0 \longrightarrow \cO_{\P^1} \longrightarrow \cO_{\P^1}(1) \otimes \C^2 \longrightarrow \cO_{\P^1}(2) \longrightarrow 0 \: .
\end{equation}
Note that this is a sequence of equivariant sheaves and that $\C^2$ carries the fundamental representation of $\mathfrak{sl}(2)$. Twisting by $\cO_{\P^1}(n)$ and pulling back along $\pi_1$ we obtain a family of short exact sequences of equivariant sheaves on $Y$
\begin{equation}
0 \longrightarrow \cO(n,0) \longrightarrow \cO(n+1,0) \otimes \C^2 \longrightarrow \cO(n+2,0) \longrightarrow 0 \: . 
\end{equation}
Let us restrict to the case $n\geq 0$ for the moment. Twisting by $\cO_Y(k) = \cO(k,k)$ we obtain the sequences
\begin{equation}
0 \longrightarrow \cO(n+k,k) \longrightarrow \cO(n+k+1,k) \otimes \C^2 \longrightarrow \cO(n+k+2,k) \longrightarrow 0 \: .
\end{equation}
The relevant first cohomology group is $H^1(\cO(n+k,k))$ which is easily seen to vanish for all $k \in \Z$ by the K\"unneth theorem. Thus, we obtain for all $n\geq0$ a short exact sequence of graded equivariant $R/I$-modules,
\begin{equation}
0 \longrightarrow \Gamma_*(\cO(n,0)) \longrightarrow \Gamma_*(\cO(n+1,0)) \otimes \C^2 \longrightarrow \Gamma_*(\cO(n+2,0)) \longrightarrow 0 \: .
\end{equation}
Let us study the associated multiplets.

\subsubsection{$n=0$}
Recall that $\mu A^\bu(0,0)$ is the vector multiplet, $\mu A^\bu(1,0)$ the hypermultiplet and $\mu A^\bu(2,0)[1] = \mu A^\bu(0,0)^\vee$ the antifield multiplet of the vector, according to \eqref{antifn0}. Therefore (up to shifting), $\mu A^\bu(\cO(1,0) \otimes \C^2) = \mu A^\bu(\cO(1,0)) \otimes \C^2$ is a doublet of hypermultiplets with values in the fundamental representation of the $R$-symmetry $\mathfrak{sl}(2)$. Let us arrange the direct sum as follows.
\begin{equation}
\textcolor{green!65!black}{\mu A^\bu(0,0)^\#} \oplus \textcolor{blue!65!black}{\mu A^\bu(2,0)^\#} = \left[
\begin{tikzcd}[row sep=0.3cm, column sep=0.2cm]
\textcolor{green!65!black}{\C} \oplus \textcolor{blue!65!black}{\C^3} & \textcolor{blue!65!black}{S_+ \otimes \C^2} & \textcolor{blue!65!black}{\Omega^1}\\
& \textcolor{green!65!black}{\Omega^1} & \textcolor{green!65!black}{S_- \otimes \C^2} & \textcolor{blue!65!black}{\C} \oplus \textcolor{green!65!black}{\C^3}
\end{tikzcd} \right]
\end{equation}
We can deform it by adding an acyclic differential relating the the two one-forms, the Dirac operator for the fermions, and the Laplacian for the scalar fields.  
\begin{equation}
\left[\textcolor{green!65!black}{\mu A^\bu(0,0)} \oplus \textcolor{blue!65!black}{\mu A^\bu(2,0)} \right]^{\mathrm{Deform}} = \left[
\begin{tikzcd}[row sep=0.5cm, column sep=0.5cm]
    \textcolor{green!65!black}{\C} \oplus \textcolor{blue!65!black}{\C^3} \arrow[drrr, "\star d \star d" ' near start] & \textcolor{blue!65!black}{S_+ \otimes \C^2} \arrow[dr, "\slashed{\partial}" , near start ] & \textcolor{blue!65!black}{\Omega^1} \arrow[dl, swap,  "\id" ' near end, crossing over]\\
& \textcolor{green!65!black}{\Omega^1} & \textcolor{green!65!black}{S_- \otimes \C^2} & \textcolor{blue!65!black}{\C} \oplus \textcolor{green!65!black}{\C^3}
\end{tikzcd} \right]
\end{equation}
Taking cohomology with respect to the acyclic part of the differential (i.e. integrating out the auxiliary field) and recalling that for $\mathfrak{sl}(2)$-representations $\C^2 \otimes \C^2 \cong \C \oplus \C^3$, we immediately see that we recover the hypermultiplet with values in $\C^2$.

Interestingly there is another BV theory which can be formed out of $\mu A^\bu(0,0)$ and $\mu A^\bu(0,2)$. Adding both multiplets with an appropriate shift and deforming the resulting complex one obtains the BV theory of the vector multiplet~\cite[\S4.3]{perspectives}. Denoting the vector multiplet by $E$, these findings can be summarized by stating that the cotangent theory $T^\vee[-1]E$ corresponds to the BV theory describing the vector multiplet, while the construction we presented above corresponds to $(T^\vee[1]E)[-1]$ which is seen to be equivalent to the hypermultiplet. Let us finally remark that all these considerations are purely perturbative.
\subsubsection{$n=1$}
Proceeding analogously, we can define a deformation on the direct sum of $\mu A^\bu(1,0)$ and~$\mu A^\bu(3,0)$ that renders it quasi-isomorphic to~$\mu A^\bu(2,0) \otimes \C^2$:
\begin{gather}
\left[\textcolor{green!65!black}{\mu A^\bu(1,0)} \oplus \textcolor{blue!65!black}{\mu A^\bu(3,0)} \right]^{\mathrm{Deform}}= \notag \\
\left[
\begin{tikzcd}[row sep=0.3cm, column sep=0.2cm, ampersand replacement = \&]
\textcolor{green!65!black}{\C^2} \oplus \textcolor{blue!65!black}{\C^4} \& S_+ \otimes (\textcolor{green!65!black}{\C} \oplus \textcolor{blue!65!black}{\C^3}) \& \textcolor{blue!65!black}{\wedge^2 S_+ \otimes \C^2} \arrow[dr, "\star d \star" 'near end]
\& \textcolor{blue!65!black}{S_-} \arrow[dl , "\mathrm{id}" ' near start, crossing over] \\
\& \& \textcolor{green!65!black}{S_-} \& \textcolor{green!65!black}{\C^2}
\end{tikzcd} \right] \\
\simeq \mu A^\bu(2,0) \otimes \C^2. \notag
\end{gather}
Here we used the isomorphisms $V \cong \wedge^2 S_+$ and $S_- \cong \wedge^3 S_+$.

\subsubsection{$n=2$}
Similarly, there is a deformation of $\mu A^\bu(2,0) \oplus \mu A^\bu(4,0)$ giving $\mu A^\bu(3,0)\otimes \C^2$.
\begin{gather}
[\textcolor{green!65!black}{\mu A^\bu(2,0)} \oplus \textcolor{blue!65!black}{\mu A^\bu(4,0)}]^{\mathrm{Deform}} = \notag \\
\left[
\begin{tikzcd}[row sep=0.3cm, column sep=0.2cm, ampersand replacement = \&]
\textcolor{green!65!black}{\C^3} \oplus \textcolor{blue!65!black}{\C^5} \& S_+ \otimes(\textcolor{green!65!black}{\C^2} \oplus \textcolor{blue!65!black}{\C^4})
\& \wedge^2 S_+ \otimes(\textcolor{green!65!black}{\C} \oplus \textcolor{blue!65!black}{\C^3}) \& \textcolor{blue!65!black}{\wedge^3 S_+ \otimes \C^2} \& \textcolor{blue!65!black}{\C} \arrow[dl,"\mathrm{id}"] \\
\& \& \& \textcolor{green!65!black}{\C}
\end{tikzcd} \right] \\
\simeq \mu A^\bu(3,0) \otimes \C^2. \notag
\end{gather}

\subsubsection{$n\geq 3$}
For $n \geq 3$ , we interpreted $\mu A^\bu(n,0)$ as receiving an $n$-fold covering map from the hypermultiplet, witnessed by the isomorphism between its observables and the subalgebra of hypermultiplet observables with polynomial degree divisible by~$n$. The short exact sequence gives a relation between $\mu A^\bu(n+1,0) \otimes \C^2$ and $\mu A^\bu(n,0) \oplus \mu A^\bu(n+2,0)$:
\begin{gather}
\textcolor{green!65!black}{\mu A^\bu(n,0)} \oplus \textcolor{blue!65!black}{\mu A^\bu(n+2,0)} = \\  
\left[
\begin{tikzcd}[row sep=0.3cm, column sep=0.1cm, ampersand replacement = \&]
\textcolor{green!65!black}{\C^{n+1}} \oplus \textcolor{blue!65!black}{\C^{n+3}} \& (\textcolor{green!65!black}{\C^{n}} \oplus \textcolor{blue!65!black}{\C^{n+2}}) \otimes S_+ \& (\textcolor{green!65!black}{\C^{n-1}} \oplus \textcolor{blue!65!black}{\C^{n+1}}) \otimes \wedge^2 S_+ \& (\textcolor{green!65!black}{\C^{n-2}} \oplus \textcolor{blue!65!black}{\C^{n}}) \otimes \wedge^3 S_+ \& \textcolor{green!65!black}{\C^{n-3}} \oplus \textcolor{blue!65!black}{\C^{n-1}}
\end{tikzcd} \right] \notag \\
\simeq \mu A^\bu(n+1,0) \otimes \C^2. \notag
\end{gather}
Here, identifying $\mu A^\bu(n+1) \otimes \C^2$ just amounts to the decomposition rule $\C^n \otimes \C^2 \cong \C^{n+1} \oplus \C^{n-1}$. In other words, considering pairs of hypermultiplet observables of polynomial degree $(n+1)$ and regarding such pairs as transforming in the fundamental representation of the $R$-symmetry $\mathfrak{sl}(2)$, we can either symmetrize with respect to the $R$-symmetry index (yielding $\mu A^\bu(n+2,0)$) or antisymmetrize to land in $\mu A^\bu(n,0)$. Note, however, that the polynomial degree of the observables involved does \emph{not} change; the multiplet $\mu A^\bu(n,0)$, rather than its realization via a map from the hypermultiplet, is the fundamental object.

\section{The normal bundle exact sequence} \label{sec: normal}
In the last two sections, we use short exact sequences for geometric bundles on $\P^1 \times \P^3$ to extend our survey of multiplets to higher-rank bundles. In this section, we treat the tangent and normal bundles, using the defining short exact sequence that relates them; in the following section, we will consider the dual of this sequence and work out the multiplets involved explicitly.

As recalled above, the tangent bundle of the projectivized nilpotence variety $\cT_Y$, the restriction $\cT_{\P^7}|_Y$ of the tangent bundle of the ambient $\P^7$ to $Y$, and the normal bundle $\cN_Y$ sit in the normal bundle exact sequence
\begin{equation} \label{normalses}
0 \longrightarrow \cT_Y \longrightarrow \cT_{\P^7}|_Y \longrightarrow \cN_Y \longrightarrow 0 \: .
\end{equation}
Since $H^1(\cT_Y(k))=0$ for all $k \in \ZZ$, this short exact sequence induces a short exact sequence on global sections. Thus, applying $\Gamma_*$, we obtain a short exact sequence of $R/I$-modules.
\begin{equation}
0 \longrightarrow \Gamma_*(\cT_Y) \longrightarrow \Gamma_*(\cT_{\P^7}|_Y) \longrightarrow \Gamma_*(\cN_Y) \longrightarrow 0 
\end{equation}
We apply this short exact sequence to study the associated multiplets and their relations to one another. Again, we will find that there is a deformation of $\mu A^\bu(\cT_Y) \oplus \mu A^\bu(\cN_Y)$ which is quasi-isomorphic to $\mu A^\bu(\cT_{\P^7}|_Y)$.

\subsection{Tangent bundle}
\subsubsection{Cohomology and Hilbert Series}
Recall that, as seen above, the tangent bundle to the nilpotence variety is given by the exterior sum
\begin{equation}
\mathcal{T}_Y=  \pi_1^* \mathcal{T}_{\P^1} \oplus \pi_3^* \mathcal{T}_{\P^3} =  \mathcal{T}_{\P^1} \boxplus \mathcal{T}_{\P^3},
\end{equation}
where $\pi^\ast_1 \mathcal{T}_{\proj 1} \cong \mathcal{O}_Y (2,0).$ Accordingly, the resulting multiplet will be given by a direct sum,
\begin{equation}
\mu A^\bullet(\mathcal{T}_Y) = \mu A^\bullet(\cO_Y(2,0)) \oplus \mu A^\bullet(\pi_3^*\mathcal{T}_{\P^3}) .
\end{equation}
We have already identified $\mu A^\bullet(\cO_Y(2,0))$ as the antifield multiplet of the vector multiplet in~\S\ref{sec:(n,0)}, so we are left with studying $\mu A^\bullet(\pi_3^*\mathcal{T}_{\P^3})$, which amounts to computing the zeroth cohomology of
\begin{equation}
\pi_3^* \mathcal{T}_{\P^3} (k) = \pi_1^\ast \mathcal{O}_{\proj 1} (k) \otimes \pi^\ast_3 \mathcal{T}_{\proj 3} (k) = \cO_{\P^1}(k) \boxtimes \mathcal{T}_{\P^3} (k) .
\end{equation}
By the K\"unneth theorem, we have
\begin{equation}\label{eq: Kunneth P3}
H^0(\pi_3^* \mathcal{T}_{\P^3}(k)) = H^0(\cO_{\P^1}(k)) \otimes H^0(\mathcal{T}_{\P^3}(k)).
\end{equation}
which in turn reduces the problem to compute $H^0(\mathcal{T}_{\P^3} (k))$. Twisting the Euler exact sequence \eqref{euler} by $\mathcal{O}_{\P^3} (k)$, we find
\begin{equation} 
0 \longrightarrow \cO_{\P^3}(k) \longrightarrow \cO_{\P^3}(k+1) \otimes S_- \longrightarrow \cT_{\P^3}(k) \longrightarrow 0 \: ,
\end{equation}
where $S_- \cong \mathbb{C}^4$. The long cohomology sequence, for the relevant cases $k\geq 0$, reduces to the following short exact sequence
\bear \label{EuSe}
\xymatrix{
	0 \ar[r] & H^0(\mathcal{O}_{\proj 3} (k) )  \ar[r] & H^0 (\mathcal{O}_{\proj 3} (k+1) ) \otimes \mathbb{C}^{4} \ar[r] & H^0 (\mathcal{T}_{\proj 3} (k) ) \ar[r] & 0,
}
\eear
since $H^1 (\mathcal{O}_{\P^3} (k)) = 0$ for any $k\geq 0$. As a consequence, one has 
\begin{multline}
h^0(\mathcal{T}_{\P^3}(k)) = 4 h^0(\cO_{\P^3}(k+1)) - h^0(\cO_{\P^3}(k))  
= 4 \binom{k+4}{3} - \binom{k+3}{3} 
= \frac{1}{2}(k+2)(k+3)(k+5) .
\end{multline}
The resulting Hilbert series is easily resummed, giving 
\begin{equation}
\begin{split}
\Hilb(\pi_3^*\mathcal{T}_{\P^3}) &= \sum_{k = 0}^{\infty} \frac{1}{2} (k+1)(k+2)(k+3)(k+5)t^k \\
&= \frac{15 - 48t + 54t^2 - 24t^3 + 3t^4}{(1-t)^8} ,
\end{split}
\end{equation}
such that the Betti numbers of the associated multiplet are
\begin{equation}
\grdim\mu A^\bu(\pi_3^\ast \mathcal{T}_{\P^3}) = 	\begin{bmatrix}
15 & 48 & 54 & 24 & 3 \\
\end{bmatrix} \: .
\end{equation}

\subsubsection{Equivariant decomposition}
Since the induced sequence of representations of~\eqref{EuSe} splits, we find in terms of representations of $\mathfrak{sl}(2) \times \mathfrak{sl}(4)$
\begin{equation}
H^0(\cT_{\P^3}(k)) = [0|k+1,0,0] \otimes S_- - [0|k,0,0] = [0|k+1,0,1] \: ,
\end{equation}
and hence by~\eqref{eq: Kunneth P3}
\begin{equation}
H^0((\pi_3^*\cT_{\P^3})(k)) = [k|k+1,0,1] \: .
\end{equation}
Running our machinery, we obtain the representations
\begin{equation}
    \begin{split}
        W_0 &= \phantom{-}[0|1,0,1] \\
        W_1 &= -[1|0,1,1] -[1|1,0,0] \\
        W_2 &= \phantom{-}[0|0,1,0] + [2|0,0,2] + [2|0,1,0] \\
        W_3 &= -[1|0,0,1] - [3|0,0,1] \\
        W_4 &= \phantom{-}[2|0,0,0].
    \end{split}
\end{equation}
Let us summarize the field content.
\begin{equation}
\mu A^\bu(\pi_3^*\cT_{\P^3})^\# = \left[
\begin{tikzcd}[row sep=0.3cm, column sep=0.2cm]
\Omega^2 & S_- \otimes V \otimes \C^2 & V \oplus \Omega^3_- \otimes \C^3 \oplus V \otimes \C^3  & (\C^2 \oplus \C^4) \otimes S_- & \C^3
\end{tikzcd} \right]
\end{equation}

\subsection{Restriction of $\mathcal{T}_{\P^7}$ to the nilpotence variety}
\subsubsection{Cohomology and Hilbert series}
Since restriction to a smooth subvariety is an exact functor, it is easy to describe the vector bundle $\mathcal{T}_{\proj 7}|_Y (k)$ as the quotient bundle sitting in the restriction of the ordinary $k$-twisted Euler exact sequence of the embedding space $\proj 7$ of $Y$, i.e.
\begin{equation}
0 \longrightarrow \cO_Y(k,k) \longrightarrow \cO_Y(k+1,k+1) \otimes [1|0,0,1] \longrightarrow \mathcal{T}_{\P^7}|_Y(k) \longrightarrow 0 \: ,
\end{equation}
where we have used that $\mathcal{O}_{\P^7}|_Y (k) \cong \mathcal{O}_Y (k,k)$. Observing that $H^1 (\mathcal{O}_Y (k,k))$ vanishes for any $k\geq 0$, we find the short exact sequence in cohomology 
\begin{equation} \label{eq: Restriction Euler}
	0 \longrightarrow H^0 (\cO_Y(k,k)) \longrightarrow H^0 (\cO_Y(k+1,k+1)) \otimes [1|0,0,1] \longrightarrow H^0 (\mathcal{T}_{\P^7}|_Y(k)) \longrightarrow 0.
\end{equation}
This yields the formula 
\begin{equation}
\begin{split}
h^0(\mathcal{T}_{\P^7}|_Y(k)) &= 8(k+2)\binom{k+4}{3} - (k+1)\binom{k+3}{3}  \\
&= \frac{4}{3}(k+2)(k+4)(k+3)(k+2) - \frac{1}{6}(k+1)(k+3)(k+2)(k+1)
\end{split}
\end{equation}
for the dimensions of the spaces of global sections of $\mathcal{T}_{\proj 7}|_Y (k)$.  
Notice that this also accounts for the special case $k = -1$, when $H^0 (\mathcal{T}_{\proj 7}|_Y (-1)) \cong H^0 (\mathcal{O}_Y) \otimes [1|0,0,1] \cong [1|0,0,1]$.
The Hilbert series of $\mathcal{T}_{\proj 7}|_Y$ is found to be
\begin{equation}
\Hilb(\mathcal{T}_{\P^7}|_Y) = \frac{8-t-48t^2+70t^3-32t^4+3t^5}{t(1-t)^8}.
\end{equation}

\subsubsection{Equivariant decomposition}
Since~\eqref{eq: Restriction Euler} splits, we find on the level of representations
\begin{equation}
H^0(\cT_{\P^7}|_Y(k)) = [k+1|k+1,0,0] \otimes [1|0,0,1] - [k|k,0,0] \: . 
\end{equation}
The associated representations appearing in $\mu A^\bu(\cT_{\P^7}|_Y)$ are
\begin{equation}
\begin{gathered}
    W_0 = [1|0,0,1], \qquad
    W_1 = -[0|0,0,0], \qquad
    W_2 = -[1|0,1,1] - [1|1,0,0], \\
    W_3 = [0|0,0,2] + [2|0,0,2] + 2[0|0,1,0] + [2|0,1,0], \\
    W_4 = -2[1|0,0,1] - [3|0,0,1], \qquad
    W_5 = [2|0,0,0].
\end{gathered}
\end{equation} 
Explicitly, the field content of $\mu A^\bu(\cT_{\P^7}|_Y)$ is summarized in the array
\begin{gather}
\mu A^\bu(\cT_{\P^7}|_Y)^\# = \notag \\
\left[
\begin{tikzcd}[row sep=0.3cm, column sep=0.2cm, ampersand replacement = \&]
S_- \otimes \C^2 \& \Omega^0 \\
\& S_- \otimes V \otimes \C^2 \& \smash{\begin{array}{c}
\Omega^1 \otimes(\C \oplus \C \oplus \C^3) \\ \Omega^3_- \otimes (\C \oplus \C^3)
\end{array}} \& S_- \otimes (\C^2 \oplus \C^2 \oplus \C^4) \& \C^3 
\end{tikzcd} \right] . 
\end{gather}
Note that, here and in the following---as the multiplets get bigger---, we allow for linebreaks within entries of the table (here in degree (3,0)); the direct sum sign between the representations in the respective signs is left understood. 
This multiplet looks like a gravitino multiplet, containing a spin-${3}/{2}$ (Rarita--Schwinger) field, but no metric or other degree of freedom corresponding to a particle of spin two.

\subsection{Normal bundle}
\subsubsection{Cohomology and Hilbert series}
Using our results on the cohomology of the bundles $\mathcal{T}_{\proj 7}|_Y (k)$ and $\mathcal{T}_Y (k)$, it is easy to compute the cohomology of $\mathcal{N}_{Y} (k)$ by means of the twisted normal bundle exact sequence
\begin{equation}
0 \longrightarrow \mathcal{T}_Y(k) \longrightarrow \mathcal{T}_{\P^7}|_Y(k) \longrightarrow \cN_Y(k) \longrightarrow 0 \: .
\end{equation}
Since $H^1 (\mathcal{T}_Y (k)) = 0$ for any $k$, as can be seen upon using K\"unneth theorem in combination with the twisted Euler exact sequence to evaluate the first cohomology group of $\mathcal{T}_{\proj 3} (k)$, then one finds a short exact sequence in cohomology
\bear
0 \longrightarrow H^0 (\mathcal{T}_Y(k)) \longrightarrow H^0( \mathcal{T}_{\P^7 }|_Y(k)) \longrightarrow H^0 (\cN_Y(k)) \longrightarrow 0 .
\eear
This implies that for $k \geq - 1$ one finds the Betti numbers
\begin{equation}
h^0(\cN_Y(k)) = h^0(\mathcal{T}_{\P^7}|_Y(k)) - h^0(\mathcal{T}_Y(k)) \: .
\end{equation}
Using our previous results for $h^0(\mathcal{T}_{\P^7}|_Y(k))$ and $h^0(\mathcal{T}_Y(k))$ we can deduce
\begin{equation}
h^0(\cN_Y(k)) = \frac{1}{2}(k+3)^2(k+2)(k+5) \: .
\end{equation}
Finally, resumming the Hilbert series yields
\begin{equation}
\Hilb(\cN_Y) = \frac{8 - 19t + 8t^2 +10t^3 -8t^4 + t^5}{t(1-t)^8} \: .
\end{equation}
\subsubsection{Equivariant decomposition}
For the equivariant decomposition, we identify
\begin{equation}
H^0(\cN_Y(k)) = [k+2|k+1,0,1] \: .
\end{equation}
Using this, we can find the representations appearing in~$\mu A^\bu(\cN_Y)$:
\begin{equation}
\begin{gathered}
    W_0 = [1|0,0,1], \qquad
    W_1 = -[0|0,0,0] - [0|1,0,1] - [2|0,0,0], \\
    W_2 = [1|1,0,0], \qquad
    W_3 = [0|0,0,2], \qquad
    W_4 = -[1|0,0,1], \qquad
    W_5 = [0|0,0,0].
\end{gathered}
\end{equation}
Explicitly, the multiplet takes the following form:
\begin{gather}
\mu A^\bu(\cN_Y)^\# = \notag \\
\left[
    \begin{tikzcd}[row sep=0.3cm, column sep=0.2cm, ampersand replacement = \&]
S_- \otimes \C^2 \& \C \oplus \Omega^2 \oplus \C^3 \& S_+ \otimes \C^2 \\
\& \&  \Omega^3_- \& S_- \otimes \C^2 \& \C 
\end{tikzcd} \right] .
\end{gather}

\subsection{Deformation}
As in the example of the Euler sequence for $\P^1$, we find that the field contents of the direct sums of the multiplets associated to tangent and normal bundle does not match the field content of $\mu A^\bu(\cT_{\P^7}|_Y)$, i.e.
\begin{equation}
\mu A^\bu(\cT_Y)^\#\oplus \mu A^\bu(\cN_Y)^\# \neq \mu A^\bu(\cT_{\P^7}|_Y)^\#  .
\end{equation}
This is again related to the fact that the normal exact sequence does not split as a sequence of $R/I$-modules. However, there is a deformation of the direct sum such that the resulting multiplet recovers the field content of $\mu A^\bu(\cT_{\P^7}|_Y)$ up to quasi-isomorphism:
\begin{gather}
\mu A^\bu(\cT_{\P^7}|_Y) \simeq 
[\textcolor{green!65!black}{\mu A^\bu(\cT_Y)^\#} \oplus \textcolor{blue!65!black}{\mu A^\bu(\cN_Y)^\#}]^{\mathrm{Deform}} = \notag \\
\left[
\begin{tikzcd}[row sep=0.3cm, column sep=0.2cm, ampersand replacement = \&]
    \textcolor{blue!65!black}{S_- \otimes \C^2} \&   \textcolor{blue!65!black}{\C \oplus} \circled{\textcolor{blue!65!black}{ \Omega^2 \oplus \C^3}} \arrow[dl,"\id", swap]
\& \circled{\textcolor{blue!65!black}{S_+ \otimes \C^2}} \arrow[dl,"\id", swap] \\ 
\circled{\textcolor{green!65!black}{\Omega^2 \oplus \C^3}}
\& \begin{split}
&\circled{\textcolor{green!65!black}{S_+ \otimes \C^2}} \\ &\textcolor{green!65!black}{ S_- \otimes V \otimes \C^2} 
\end{split}
\& \begin{split}
&\Omega^3_- \otimes (\textcolor{blue!65!black}{\C} \oplus \textcolor{green!65!black}{\C^3}) \\  &\textcolor{green!65!black}{\Omega^1\otimes (\C \oplus \C \oplus \C^3)}
\end{split} \& S_- \otimes (\textcolor{blue!65!black}{\C^2} \oplus \textcolor{green!65!black}{\C^2 \oplus \C^4}) \& \circled{\textcolor{blue!65!black}{\C}} \oplus \textcolor{green!65!black}{\C^3} \arrow[dl,"\id"] \\
\& \& \&  \circled{\textcolor{green!65!black}{\C}}
\end{tikzcd} \right].
\end{gather}

\section{The conormal bundle exact sequence}
The cotangent bundle, the conormal bundle, and the restriction of the cotangent bundle of the ambient $\P^7$ to the nilpotence variety sit in the conormal exact sequence, which is the dual of~\eqref{normalses}:
\begin{equation}
0 \longrightarrow \cN^\vee_Y \longrightarrow \Omega^1_{\P^7}|_Y \longrightarrow \Omega^1_Y \longrightarrow 0.
\end{equation}
In the same fashion as above, we now study the cohomology of these sheaves and their associated multiplets. 

\subsection{Cotangent bundle}
\subsubsection{Cohomology and Hilbert series}
As explained above, the cotangent bundle of the nilpotence variety $Y$ is given by the exterior sum 
\begin{equation}
\Omega^1_Y = \pi^\ast \Omega^1_{\P^1} \oplus \pi_3^* \Omega^1_{\P^3} = \Omega^1_{\P^1} \boxplus \Omega^1_{\P^3}, 
\end{equation}
where $\Omega^1_{\P^1} \cong \mathcal{O}_{\P^1} (-2).$
As a consequence, the associated multiplet is again a direct sum
\begin{equation}
\mu A^\bu(\Omega^1_Y) = \mu A^\bu(\cO_Y(-2,0)) \oplus \mu A^\bu(\pi_3^* \Omega^1_{\P^3}) \: .
\end{equation}
The multiplet $\mu A^\bu(\cO_Y(-2,0))$, arising from the cotangent bundle of $\P^1$, was already described in~\S\ref{sec: O(0,m)}, therefore we are left with describing $\mu A^\bu(\pi_3^* \Omega^1_{\P^3})$. To this end, we need to study the zeroth cohomology of  
\begin{equation}
\pi_3^*\Omega^1_{\P^3} (k) = \pi^\ast \cO_{\P^1}(k) \otimes \pi_3^*\Omega^1_{\P^3}(k) = \mathcal{O}_{\P^1}  (k) \boxtimes \Omega^1_{\P^3} (k).
\end{equation}
The K\"unneth theorem implies that 
\bear
H^0(\pi_3^* \Omega^1_{\P^3}(k)) = H^0(\cO_{\P^1}(k)) \otimes H^0( \Omega^1_{\P^3}(k)),
\eear
reducing the problem to compute the dimension of the zeroth cohomology of $\Omega^1_{\P^3} (k).$ This can be obtained by Bott formulas~\cite{Okonek} or by explicitly studying the twist of the dual
of the Euler exact sequence for $\Omega^1_{\proj 3}$,
\bear
\xymatrix{
	0 \ar[r] & \Omega^1_{\proj 3} (k) \ar[r] & \mathcal{O}_{\proj 3} (k-1) \otimes \mathbb{C}^{4} \ar[r] & \mathcal{O}_{\proj 3} (k) \ar[r] & 0.
}
\eear
In order to obtain a short exact sequence of modules, we have to check that the connection morphism vanishes. Clearly, if $k<0$ then $H^0 (\Omega^1_{\proj 3} (k)) = 0$. If $k=0$, then this corresponds to the Hodge number of $\mathbb{P}^3$ and in particular one finds $h^0 (\Omega^1_{\proj 3}) = 0 = h^{1,0} (\proj 3) $. For $k = 1$ it is easy seen that the map 
\bear
\varphi_{k=1} : H^0 (\mathcal{O}_{\proj 3}) \otimes \mathbb{C}^4 \rightarrow H^0 (\mathcal{O}_{\proj 1} (1)),
\eear
given by $ \mathbb{C}^{4} \owns (c_0, \ldots, c_3 ) \mapsto \sum_{i = 0}^3 c_i X_i $, for $\{ X_0, \ldots, X_1\}$ global sections of $\mathcal{O}_{\proj 3} (1)$ is an isomorphism and hence $H^0 (\Omega^1_{\proj 3} (1)) = 0.$ On the other hand in the case $k>1$ the map $\varphi_{k>1} : H^0 (\mathcal{O}_{\proj 3}(k-1)) \otimes \mathbb{C}^4 \rightarrow H^0 (\mathcal{O}_{\proj 1} (k))$ is only surjective so that $H^1 (\Omega^1_{\proj 3} (k)) = 0$ and $\ker (\varphi_{k>1}) = H^0 (\Omega^1_{\proj 3}(k))$. If follows that for $k >1$ one has 
\begin{align}
h^0 (\Omega^1_{\proj 3} (k))  = 4 \binom{k+2}{k-1} - \binom{k+3}{k}  
 = \frac{1}{2}(k+2)(k+1)(k-1).
\end{align}
In turn, this implies that
\bear
h^0(\pi_3^* \Omega^1_{\P^3}(k)) = \frac{1}{2}(k+1)^2(k+2)(k-1),
\eear	
and the related Hilbert series gives 
\begin{equation}
\Hilb(\pi_3^* \Omega^1_{\P^3}) = t^2 \frac{18 - 64t +89t^2 - 64t^3 + 28t^4 - 8t^5 +t^6}{(1-t)^8}.
\end{equation}

\subsubsection{Equivariant decomposition}
For the equivariant decomposition, we identify
\begin{equation}
H^0(\Omega^1_{\P^3}(k)) = [0|k-2,1,0] \: ,
\end{equation}
and find the following representations in $\mu A^\bu(\pi_3^*\Omega^1_{\P^3})$.
\begin{equation}
\begin{split}
W_0 &= \phantom{-}[2|0,1,0] \\
W_1 &= -[1|0,0,1] -[1|1,1,0] - [3|0,0,1] \\
W_2 &= \phantom{-}[0|0,0,0] + [0|0,2,0] + [0|1,0,1] + [2|0,0,0] + [2|1,0,1] + [4|0,0,0] \\
W_3 &= -[1|0,1,1] - [1|1,0,0] - [3|1,0,0] \\
W_4 &= \phantom{-}[0|0,0,2] + [2|0,1,0] \\
W_5 &= -[1|0,0,1] \\
W_6 &= \phantom{-}[0|0,0,0]
\end{split}
\end{equation}
The resulting multiplet takes the following form.
    \begin{gather}
\mu A^\bu(\pi_3^*\Omega^1_{\P^3})^\# =
\\
\left[
\begin{tikzcd}[row sep=0.3cm, column sep=0.12cm, ampersand replacement = \&]
    \Omega^1 \otimes \C^3 \& 
{\begin{array}{c}
\C^2 \otimes V \otimes S_+ \\  \C^4\otimes S_- 
\end{array}}
\& 
{\begin{array}{c}
\Sym^2(V) \oplus \Omega^2 \\ \C^3 \oplus \C^3 \otimes \Omega^2 \oplus \C^5 
\end{array}}
\& 
{\begin{array}{c}
\C^2 \otimes S_- \otimes V \\  S_+ \otimes \C^4
\end{array}} 
\&
{\begin{array}{c}
\Omega^3_- \\ \Omega^1 \otimes \C^3
\end{array}}
\& \C^2 \otimes S_- \& \C
\end{tikzcd} \right] \notag
\end{gather}

\subsection{Restriction of $\Omega^1_{\P^7}$ to the nilpotence variety}
\subsubsection{Cohomology and Hilbert series}
Dually to the case of $\mathcal{T}_{\proj 7}|_Y$, the cohomology of $\Omega^1_{\proj 7}|_Y$ and its twists is studied by restricting the dual of the Euler exact sequence for the ambient space $\proj 7$ to $Y$. This gives 
\bear
\xymatrix{
	0 \ar[r] & \Omega^1_{\proj 7}|_Y (k) \ar[r] & \mathcal{O}_{Y} (k-1, k-1) \otimes \mathbb{C}^8 \ar[r] & \mathcal{O}_{Y} (k,k)  \ar[r] & 0.
}
\eear
Studying the related long exact cohomology sequence, it is easy to see that if $k \leq 1$ then $H^0 (\Omega^1_{\proj 7}|_Y (k)) = 0.$
In the remaining case, when $k > 1$, the space of global sections $H^0 (\Omega^1_{\proj 7}|_Y  (k))$ is actually non-zero and the long cohomology sequence splits since the polynomial map 
\bear
\xymatrix{
	H^0 (\mathcal{O}_{Y}(k-1, k-1))\otimes \mathbb{C}^{8} \ar[rr]^{\qquad \quad (X_i, Y_j)} & &  H^0 (\mathcal{O}_Y (k,k))
}
\eear
is surjective, so one gets the short exact sequence
\begin{equation}
0 \rightarrow H^0 (\Omega^1_{\proj 7}|_Y  (k)) \rightarrow H^0 (\mathcal{O}_{Y} (k-1, k-1)) \otimes \mathbb{C}^8 \rightarrow H^0 (\mathcal{O}_{Y} (k,k))  \rightarrow 0.
\end{equation}
This says that 
\begin{equation}
\begin{split}
h^0 (\Omega^1_{\proj 7}|_Y  (k))  &= 8 k \binom{k+2}{k-1} - (k+1)\binom{k+3}{k}  \\
&= \frac{4}{3}k(k+2)(k+1)k - \frac{1}{6}(k+1)(k+3)(k+2)(k+1)  .
\end{split}
\end{equation}
Finally, for future use, notice that for $k> 1$ one has $H^1 (\Omega^1_{\proj 7}|_Y (k)) = 0,$ since it is mapped injectively into $H^1 (\mathcal{O}_Y(k-1, k-1)) \otimes \mathbb{C}^8$ which is indeed zero. \\
The related Hilbert series of $\Omega^1_{\proj 7}|_Y$ can be resummed easily to give
\begin{equation}
\Hilb(\Omega^1_{\P^7}|_Y) = t^2\frac{34 - 112t + 137t^2 - 80t^3 +28t^4 - 8t^5 +t^6}{(1-t)^8}
\end{equation}

\subsubsection{Equivariant decomposition}
In terms of representations, the sequence gives
\begin{equation}
H^0(\Omega^1_{\P^7}|_Y(k)) = [k-2|k,0,0] + [k|k-2,1,0] + [k-2|k-2,1,0] \: .
\end{equation}
Using these results, we can deduce the field content of $\mu A^\bu(\Omega^1_{\P^7}|_Y)$.

\begin{equation}
\begin{split}
W_0 &= \phantom{-2}[0|0,1,0] +\phantom{2}[0|2,0,0] + [2|0,1,0] \\
W_1 &= -2[1|0,0,1] - 2[1|1,1,0] - [3|0,0,1] \\
W_2 &= \phantom{-2}[0|0,0,0] + \phantom{2}[0|0,2,0] + [0|1,0,1] + 2[2|0,0,0] + 2[2|1,0,1] + [4|0,0,0] \\
W_3 &= -\phantom{2}[1|0,1,1] - \phantom{2}[1|1,0,0] - 2[3|1,0,0] \\
W_4 &= \phantom{-2}[0|0,0,2] + \phantom{2}[2|0,1,0] \\
W_5 &= -\phantom{2}[1|0,0,1] \\
W_6 &= \phantom{-2}[0|0,0,0]
\end{split}
\end{equation}
The resulting multiplet takes the following form.
\begin{gather}
\mu A^\bu(\Omega^1_{\P^7}|_Y)^\# = \\
\left[ \!\!\!
    \begin{tikzcd}[row sep=0.3cm, column sep=0.05cm, ampersand replacement = \&]
\begin{array}{c}
\Omega^1 \otimes(\C \oplus \C^3) \\ \Sym^2 S_+
\end{array} 
\& \begin{array}{c}
(\C^2 \otimes V \otimes S_+)^{\oplus 2} \\ \C^4 \otimes S_- 
\end{array}
\& \begin{array}{c}
\Sym^2(V) \oplus \Omega^2 \\ \Omega^2 \otimes \C^3 \oplus \C^5 \\ \Omega^2 \otimes \C^3  \oplus (\C^3)^{\oplus 2}
\end{array}
\& \begin{array}{c}
\C^2 \otimes V \otimes S_- \\ (S_+ \otimes \C^4)^{\oplus 2}
\end{array}
\& \begin{array}{c}
\Sym^2(S_-) \\  \Omega^1 \otimes \C^3
\end{array}
\&
S_- \otimes \C^2
\&
\C
\end{tikzcd} \!\!\! \right] \notag
\end{gather}

\subsection{The conormal bundle and its supergravity multiplet}
\subsubsection{Cohomology and Hilbert series}
\noindent Having available the cohomology of the cotangent bundle and the restriction of $\Omega^1_{\proj 7}$ to the nilpotence variety $Y$, one can study the conormal bundle and its related multiplet in a similar fashion as for the normal bundle above, \emph{i.e.}\ by considering $k$-twists of the conormal exact sequence~\eqref{conormalbundle}:
\bear \label{conormaltwist}
\xymatrix{
	0 \ar[r] & \mathcal{N}^\vee_{Y} (k) \ar[r] & {\Omega^1_{\proj 7}}\arrowvert_{Y}  (k) \ar[r] & \Omega^1_{Y} (k) \ar[r] & 0.
}
\eear
The issue one faces following this approach is that the related long exact cohomology sequence starts with a four-term sequence in the relevant case $k>1$ 
\begin{equation}
0 \rightarrow H^0 (\mathcal{N}^\vee_{Y} (k)) \rightarrow H^0 ({\Omega^1_{\proj 7}|_Y} (k) ) \rightarrow H^0 (\Omega^1_{Y} (k) ) \rightarrow H^1 (\mathcal{N}^\vee_{Y} (k)) \rightarrow 0,
\end{equation}
and it is not completely trivial to establish the vanishing of the group $H^1 (\mathcal{N}^\vee_{Y} (k))$ for any $k\geq 1.$ For the sake of exposition we have deferred this verification to Appendix~\ref{ap b}, which follows another route to describe the algebraic geometry of the normal and conormal bundle of $Y$, reducing the problem to studying easy polynomial maps. 

Using the above results, one computes
\begin{align}
h^0 (\mathcal{N}^\vee_{Y} (k))  = 8 k \binom{k+2}{3} - 2k\binom{k+3}{3} - (k^2-1)\binom{k+2}{2} 
 = \frac{1}{2}(k+1)(k-1)^2(k+2). 
\end{align}
The related Hilbert series is resummed to give
\begin{equation}
\Hilb(\cN^\vee_{Y}) = \sum_{k=2}^{\infty} \frac{1}{2}(k+1)(k-1)^2(k+2) t^k =
 t^2 \frac{6 - 8t - 17t^2 + 40t^3 - 28t^4 + 8t^5 - t^6}{(1-t)^8} \: .
\end{equation}

\subsubsection{Equivariant decomposition}
In terms of representations the conormal exact sequence implies
\begin{equation}
H^0(\cN^\vee_Y(k)) = [k-2|k-2,1,0] \: .
\end{equation}
We find the following field content for $\mu A^\bu(\cN^\vee_Y)$.
\begin{gather}
    W_0 = [0|0,1,0], \qquad
    W_1 = -[1|0,0,1], \qquad 
    W_2 = -[0|0,2,0] + [2|0,0,0], \\
    W_3 = [1|0,1,1], \qquad
    W_4 = -[0|0,0,2] - [2|0,1,0], \qquad
    W_5 = -[1|0,0,1], \qquad
    W_6 = [0|0,0,0]. \notag
\end{gather}
In summary, the multiplet takes the form
\begin{equation}
\mu A^\bu(\cN^\vee_Y)^\#  = \left[
\begin{tikzcd}[row sep=0.3cm, column sep=0.2cm]
V & S_- \otimes \C^2 & \C^{3} \\
& \Sym^2_0(V) & (V \otimes S_-)_{\frac{3}{2}} \otimes \C^2 & V \otimes \C^3 \oplus \Omega^3_- & S_- \otimes \C^2 & \C \: 
\end{tikzcd} \right].
\end{equation}
This is precisely the field content of six-dimensional $\cN=(1,0)$ supergravity, presented as the ``type-II Weyl multiplet''~\cite{LinchSugraProj}.

\subsection{Deformation}
There is again a deformation of $\mu A^\bu(\Omega^1_Y)^\# \oplus \mu A^\bu(\cN^\vee_Y)^\#$ such that the result is quasi-isomorphic to $\mu A^\bu(\Omega^1_{\P^7}|_Y)^\#$. Note that the cotangent bundle on $Y$ splits into the relevant factors on $\PP^1$ and $\PP^3$ (with the former giving rise to the multiplet $\mu A^\bu(0,2)$ that we discussed in~\S\ref{sec: O(0,m)}). To make the cancellations explicit, we add this splitting in our color-coding. 
\begin{gather}
\mu A^\bu(\Omega^1_{\P^7}|_Y)^\# \simeq
[\textcolor{yellow!65!black}{\mu A^\bu(\pi_1^* \Omega^1_{\PP^1})} \oplus \textcolor{green!65!black}{\mu A^\bu(\pi_3^*\Omega^1_{\PP^3})^\#} \oplus \textcolor{blue!65!black}{\mu A^\bu(\cN^\vee_Y)^\#}]^{\mathrm{Deform}} = \\
\left[
\begin{tikzcd}[row sep=0.3cm, column sep=0.1cm, ampersand replacement = \&]
\begin{split}
&\textcolor{green!65!black}{\Omega^1 \otimes \C^3} \\ &\textcolor{yellow!65!black}{\Sym^2 S_+} \\ &\textcolor{blue!65!black}{V}
\end{split} 
\& \begin{split}
&\textcolor{green!65!black}{\C^2 \otimes V \otimes S_+} \\ &\textcolor{green!65!black}{\C^4 \otimes S_-} \\ &\textcolor{yellow!65!black}{\C^2 \otimes (V \otimes S_+)_{\frac{3}{2}} } \\ & \textcolor{blue!65!black}{S_- \otimes \C^2}
\end{split}
\& \begin{split}
&\textcolor{green!65!black}{\Sym^2(V) \oplus \Omega^2} \\ &\textcolor{green!65!black}{\Omega^2 \otimes \C^3 \oplus \C^5} \\ &\circled{\textcolor{yellow!65!black}{\Sym^2_0(V)}} \\ &\textcolor{yellow!65!black}{\Omega^2 \otimes \C^3 } \\ &\textcolor{green!65!black}{\C^3} \oplus \textcolor{blue!65!black}{\C^3} \arrow[dl]
\end{split}
\& \begin{split}
&\textcolor{green!65!black}{\C^2 \otimes V \otimes S_-} \\ &\textcolor{green!65!black}{(S_+ \otimes \C^4)} \\ &\textcolor{yellow!65!black}{(S_+ \otimes \C^4)}\\ &\circled{\textcolor{yellow!65!black}{\C^2 \otimes (V \otimes S_-)_{\frac{3}{2}}}} \arrow[dl]
\end{split}
\& \begin{split}
&\circled{\textcolor{green!65!black}{\Omega^3_-}} \\ &\circled{\textcolor{green!65!black}{\Omega^1 \otimes \C^3}} \\ &\textcolor{yellow!65!black}{\Sym^2(S_-)} \\ &\textcolor{yellow!65!black}{V \otimes \CC^3} \arrow[dl]
\end{split}
\& \begin{split}
&\circled{\textcolor{green!65!black}{S_- \otimes \C^2}} \\ &\textcolor{green!65!black}{S_- \otimes \C^2} \arrow[dl]
\end{split}
\& \begin{split}
&\circled{\textcolor{green!65!black}{\C}} \\ &\textcolor{green!65!black}{\C} \arrow[dl]
\end{split} \\
\& \circled{\textcolor{blue!65!black}{\Sym^2_0(V)}} \&  \circled{\textcolor{blue!65!black}{\C^2 \otimes (V \otimes S_-)_{\frac{3}{2}}}} \&  \circled{\textcolor{blue!65!black}{\Omega^1 \otimes \C^3 \oplus \Omega^3_-}} \&  \circled{\textcolor{blue!65!black}{S_- \otimes \C^2}} \&  \circled{\textcolor{blue!65!black}{\C}}  
\end{tikzcd} \right] \notag 
\end{gather}

\appendix

\section{Equivariant vector bundles and representations} \label{ap a}

This paper features different methods from algebraic geometry and representation theory, whose mutual relations might appear somewhat puzzling on a first reading. In order to help the reader make his way through the manuscript, we now spell out the general philosophy connecting representations of a certain symmetry group or Lie algebra and the space of global sections of a certain related vector bundle.

For the sake of concreteness, we will work in the case which is relevant for this paper, that of complex projective spaces $\mathbb{P}^n$, but all of the following considerations hold true more generally on a generic homogeneous space $G/H$, for $G$ a complex Lie group and $H \subset G$ a closed subgroup of $G$.

We start by recalling that the complex projective space $\proj n$ can indeed be realized as a homogeneous space via the quotient (flag variety) 
$
SL (n+1, \mathbb{C}) / P,
$
where $P$ is the parabolic subgroup of block lower triangular matrices in $\SL(n+1, \mathbb{C})$, \emph{i.e.} $b \in P$ is of the form
\bear
b = \underbrace{\left ( \begin{array}{ccc|c}
		\ast & \ldots & \ast & 0   \\
		\vdots & \ddots & \vdots & \vdots   \\ 
		\ast & \ldots & \ast & 0   \\
		\hline
		\ast & \ldots & \ast & \ast  
	\end{array}
	\right )}_{n+1}.
\eear

In this context, an $\SL(n+1, \mathbb{C})$-equivariant vector bundle $\pi : E \rightarrow \proj n $ on $\proj n$ is a holomorphic vector bundle that carries a $\SL (n+1,\mathbb{C})$-action on its total space $E$ which is compatible with the $\SL (n+1, \mathbb{C})$-action on the base space $\proj n$. This means that we require the structure map $\pi : E \rightarrow \proj n$ to be $\SL(n+1, \mathbb{C})$-equivariant, \emph{i.e.} $\pi (g \cdot e) = g \cdot \pi (e) $ for all $e \in E$ and $g \in G$ and the translation between fibers to be linear, \emph{i.e} $\ell_g : E_{[x]} \rightarrow E_{g\cdot [x]}$ given by $e \mapsto g\cdot e$ is linear---thus an isomorphism of vector spaces for all $e \in E$ and $g \in G.$

It follows immediately from the given definition that the fiber $E_{P}$ at the point corresponding to the parabolic subgroup $P $ in $\mathbb{P}^n$ is a $P$-module, and the map $\rho : P \rightarrow \Aut (E_{P})$ given by $\rho (b) \defeq \ell_{b} : E_{P} \rightarrow E_{P} $ is a representation of $P$.
Conversely, starting from a holomorphic representation $\rho : P \rightarrow \Aut (V) $ for some $P$-module $V$, it is not hard to see that the associated vector bundle $\pi : \SL (n+1, \mathbb{C}) \times_{\rho} V \rightarrow \proj n $, with $\pi ([g, v]) \defeq gP$ is indeed a $\SL(n+1, \mathbb{C})$-equivariant vector bundle as defined above and that restricting the action of $P$ on the total space to the fiber $\{(1,v) : v \in V \}$ lying over $P \in SL (n+1, \mathbb{C}) / P \cong \mathbb{P}^n$, one recovers the original representation $\rho.$ Recall that here the total space $\SL (n+1, \mathbb{C}) \times_{\rho} V$ is the quotient manifold $(SL (n+1, \mathbb{C}) \times V) / \sim$ by the relations $(g, v) \sim (gb^{-1}, \rho (h) v) $ for $g\in \SL(n+1, \mathbb{C})$, $b \in P$, endowed with the quotient topology. These considerations suggest that there is a bijection 
\bear \xymatrix@R=1.5pt{
	\{ \mbox{holomorphic representations of } P \} \ar@{<->}[r] & \{ \SL(n+1, \mathbb{C})\mbox{-equivariant vector bundles on }  \mathbb{P}^n\} \nonumber \\
	\rho : P \rightarrow \Aut (V) \; \ar@{<->}[r] & \; \pi: \SL(n+1, \mathbb{C}) \times_\rho V \rightarrow \proj n. 
}
\eear
This is in fact an equivalence of categories and it can be used to {induce} representations from the parabolic subgroup $P$ to the whole group $\SL(n+1,\mathbb{C})$. The crucial point is that given an equivariant vector bundle $E_{\rho} \defeq SL (n+1, \mathbb{C}) \times_\rho V \stackrel{\pi}{\rightarrow} \proj n$ for a certain representation $\rho$ of $P$ and denoting with $\mathcal{E}_\rho$ the sheaf of its holomorphic sections, then the $\SL(n+1,\mathbb{C})$-action on the total space of $E_\rho$ induces a linear action on the (vector) space of global holomorphic sections $\Gamma (E_\rho) \defeq H^0 (\proj n, \mathcal{E}_\rho)$ given by
\bear \label{actionSL}
\xymatrix@R=1.5pt{
	SL (n+1, \mathbb{C}) \times H^0 (\proj n, \mathcal{E}_\rho) \ar[r] & H^0 (\proj n, \mathcal{E}_\rho)  \\
	(g, s) \ar@{|->}[r] & g\cdot s 
}
\eear
where $(g \cdot s) ([x]) \defeq g \cdot s (g^{-1}[x])$ for $g\in \SL(n+1, \mathbb{C})$, $s \in H^0 (\proj n, \mathcal{E}_\rho)$ and $[x] \in \proj n.$ More in general, by naturality and equivariance of $E_\rho$, one has a linear action of $\SL(n+1, \mathbb{C})$ on every cohomology group $R^q \, \Gamma (E_\rho)$ for $q = 0, \ldots, n$. 

Notice that the action \eqref{actionSL} does indeed defines a holomorphic representation 
\bear
\xymatrix{
	\varphi : \SL(n+1, \mathbb{C}) \ar[r] & \Aut (H^0 (\mathbb{P}^n, \mathcal{E}_\rho))
}\eear
of $\SL (n+1, \mathbb{C})$ on $H^0 (\proj n, \mathcal{E}_\rho)$, since $\proj n$ is compact and ${E}_\rho$ is a holomorphic vector bundle, and hence all its cohomology groups are finite-dimensional.
This construction is heavily exploited in the paper, where the space of global sections of a certain vector or line bundle on $\proj n$ will always be understood as carrying a certain representation of $\SL(n+1, \mathbb{C})$. Note that we take the liberty of moving freely from representation of $\SL (n+1, \mathbb{C})$ to representation of its Lie algebra $\mathfrak{sl} (n+1, \mathbb{C})$ and vice versa, since $\SL (n+1,\mathbb{C})$ is simply connected for all $n$ and hence there is a one-to-one correspondence between (finite-dimensional) representations of the Lie group and representations of its Lie algebra.

Now, recalling that $\Pic (\proj n) \cong \mathbb{Z}$, all of the $\SL(n+1, \mathbb{C})$-modules $H^0 (\proj n, \mathcal{O}_{\proj n} (k)) $ for $k \in \mathbb{Z}$ corresponding to global sections of line bundles on $\proj n$ are easily identified as
\bear
H^0 (\proj n, \mathcal{O}_{\proj n} (k)) \cong \left \{ \begin{array}{lr}
	\Sym^k (V_{n+1}^\vee) & k \geq 0\\
	0 & k <0,
\end{array}
\right.
\eear 
where $V_{n+1}^\vee \cong (\mathbb{C}^{n + 1})^{\vee} \cong \mbox{Span}_\mathbb{C}\{ x_0, \ldots, x_n\}$ is in an obvious fashion an irreducible $\SL (n+1, \mathbb{C})$-module. Notice that $\Sym^k (V_{n+1}^\vee) \cong \mathbb{C}[x_0, \ldots, x_n]_{(k)}$, the homogeneous polynomial of degree $k$. It follows that $H^0 (\proj n, \mathcal{O}_{\proj n} (k)) \cong \Sym^k (V_{n+1}^\vee)$ is as well an irreducible $\SL(n+1,\mathbb{C})$-module, corresponding to the irreducible representation of $A_n = \mathfrak{sl} (n+1, \mathbb{C})$ given by
\bear
H^0 (\proj n, \mathcal{O}_{\proj n} (k)) \; \longleftrightarrow \;  \underbrace{[k, 0, \ldots, 0]}_{n} 
\eear
in terms of its Dynkin labels for the representations of the $A_n$-series.
These findings are in agreement with Borel--Weil theorem, that adapted to the case of $\mathbb{P}^n \cong \SL(n+1, \mathbb{C}) / P$, constructs irreducible representations from the global sections of line bundles $\pi_{\lambda_k } : SL (n+1, \mathbb{C}) \times_{\lambda_k} \mathbb{C} \rightarrow \proj n $ associated to integral characters of $\lambda_k : P \rightarrow \mathbb{C}^{\times}$ of $P $ - which corresponds to irreducible holomorphic representations since $P$ is solvable. Calling $\mathcal{L}_{\lambda_n}$ the sheaf of holomorphic sections of the equivariant line bundle associated to $\lambda_k$ one can see that $\mathcal{L}_{\lambda_{k}} \cong \mathcal{O}_{\proj n} (k)$ so that $\mathcal{O}_{\proj n} (k)$ is indeed $\SL (n+1, \mathbb{C})$-equivariant and 
\bear
H^0 (\proj n, \mathcal{L}_{\lambda_k}) \cong H^0 (\proj n, \mathcal{O}_{\proj n} (k))
\eear
for any $k$. Indeed, the Borel--Weil theorem guarantees that $H^0 (\proj n, \mathcal{L}_{\lambda_k})$ is an irreducible $\SL (n+1, \mathbb{C})$-module of highest weight $k$ if and only $\lambda_k$ is a dominant weight: this is the case if and only if $k\geq0, $ that corresponds in fact to the irreducible representation $[k, 0, \ldots, 0]$.

Let us now pass to a higher-rank case which is relevant for the present paper, \emph{i.e.} that of the tangent bundle to the projective space $\proj n$ seen as the homogeneous space $\SL(n+1, \mathbb{C}) / P.$ Starting from the adjoint representation $\Ad_{\SL} : \SL (n+1, \mathbb{C}) \rightarrow \Aut (\mathfrak{sl} (n+1, \mathbb{C}))$ of $\SL (n+1,\mathbb{C})$, one can consider the restriction to $P \subset SL (n+1, \mathbb{C})$, given by $\Ad_{\SL} |_{P} : P \rightarrow \Aut (\mathfrak{sl} (n+1, \mathbb{C}))$. Denoting with $\mathfrak{p}$ the Lie algebra of $P$, it is easy to observe that
\bear
(\Ad_{\SL} |_{P} (p)) X = (\Ad_{\SL} (p)) X = (\Ad_{P}(p)) X \in \mathfrak{p},
\eear
for any $p \in P$ and $X \in \mathfrak{p}$, so that $\mathfrak{p}$ is an invariant subspace for the representation $\Ad_{\SL} |_{P}: P \rightarrow \Aut (\mathfrak{sl}(n+1, \mathbb{C}))$, one gets the factor representation 
$\Ad^{\perp}_{SL} : P \rightarrow \Aut (\mathfrak{sl} (n+1, \mathbb{C}) / \mathfrak{p})$, and the following is a short exact sequence of $P$-equivariant modules
\bear
\xymatrix{
	0 \ar[r] & \mathfrak{p} \ar[r] & \mathfrak{sl} (n+1, \mathbb{C}) \ar[r] & \mathfrak{sl} (n+1, \mathbb{C}) / \mathfrak{p} \ar[r] & 0.
}
\eear
The associated vector bundle on $\proj n$ of the factor representation $\Ad^{\perp}_{SL}$ is naturally isomorphic to the tangent bundle of $\proj n$, \emph{i.e.}\ we have 
\bear
\mathcal{T}_{\proj n} \cong \SL(n+1,\mathbb{C}) \times_{\Ad_{\SL}^{\perp}} \mathfrak{sl} (n+1, \mathbb{C}) / \mathfrak{p},
\eear
which makes $\mathcal{T}_{\proj n}$ into a $\SL(n+1, \mathbb{C})$-equivariant vector bundle on $\proj n$. We recall that $\mathcal{T}_{\proj n}$ is very ample in the sense of Hartshorne so that its vector space of global sections is non-vanishing and as a consequence it is an irreducible $\SL(n+1, \mathbb{C})$-module of dimension $(n+1)^2-1$, which is identified with the adjoint representation, \emph{i.e.} 
\bear
H^0 (\proj n, \mathcal{T}_{\proj n}) \; \longleftrightarrow \; \underbrace{[1, 0, \ldots, 0, 1]}_{n}
\eear
in the Dynkin label notation for the $A_n$-series.

\section{Geometry of the normal and conormal bundles of $Y$} \label{ap b}

In this appendix we study the geometry of the normal and conormal bundle of the nilpotence variety $Y \cong \proj 1 \times \proj 3$ in more detail. We will follow a different approach compared to the one in the main text, aiming at finding resolutions for these vector bundles in line bundles, which make possible to easily evaluate their cohomology.  

We start with the normal bundle, considering the following two short exact sequences:
\bear
\xymatrix{
	0 \ar[r] & \mathcal{O}_Y \oplus \mathcal{O}_Y \ar[r]^{i\oplus j \qquad  \quad } & \mathcal{O}_Y(1,0)^{\oplus 2} \oplus \mathcal{O}_{Y} (0,1)^{\oplus 4} \ar[r] & \mathcal{T}_{Y} \ar[r] &  0 }
\eear
\bear
\xymatrix{
	0 \ar[r] & \mathcal{O}_Y  \ar[r]^{k \qquad \;} & \mathcal{O}_Y(1,1)^{\oplus 8} \ar[r] & \mathcal{T}_{\proj 7 |_Y} \ar[r] &  0}.
\eear
The first one comes from the Euler exact sequences for $\proj 1$ and $\proj 3$ respectively. The second sequence is the restriction of the Euler exact sequence for the ambient variety $\proj 7$ to $Y$. These exact sequences fit in the following commutative diagram.
\bear \label{comm1}
\xymatrix{
	& & 0 \ar[d] & 0 \ar[d] \\
	0 \ar[r] & \ker (q) \ar[d]^{\cong}  \ar[r]^{\iota} & \mathcal{O}_Y \oplus \mathcal{O}_Y \ar[d]^{i \oplus j} \ar[r]^q & \mathcal{O}_Y \ar[d]^k \\
	& \ker (f)\ar[d] \ar[r] & \mathcal{O}_Y (1, 0)^{\oplus 2} \oplus \mathcal{O}_{Y} (0,1)^{\oplus 4} \ar[r]^{\qquad f} \ar[d] & \mathcal{O}_{Y} (1,1)^{\oplus 8} \ar[d]^{E|_Y} \ar[r] & \ar[d]Q \\
	& 0\ar[r] & \mathcal{T}_Y \ar[d] \ar[r]  & \mathcal{T}_{\proj 7|_Y} \ar[d] \ar[r] & \mathcal{N}_{Y} \ar[r] & 0 \\
	& & 0 & 0}
\eear
Working on the affine cones of $C_a\proj 1 \times C_a \proj 3$ with coordinates $(x_0 , x_1)$ and $(y_0, y_1, y_2, y_3 )$ the above maps are defined as follows. The map $i \oplus j : \mathcal{O}_Y \oplus \mathcal{O}_Y \rightarrow \mathcal{O}_{Y} (1, 0)^{\oplus 2} \oplus \mathcal{O}_{Y} (0,1)^{\oplus 4}$ is the multiplication by the following $2 \times 6$ matrix
\bear
[i\oplus j] = \left ( \begin{array}{cc}
	x_0 & 0 \\
	x_1 & 0 \\
	0 & y_0 \\
	0 & y_1 \\
	0 & y_2 \\
	0 & y_3 
\end{array}
\right )
\eear
where frames in $\mathcal{O}_Y \oplus \mathcal{O}_Y$ are seen as column vectors. The map $k : \mathcal{O}_{Y} \rightarrow \mathcal{O}_{Y} (1,1)^{\oplus 8}$ is the multiplication by the following $1\times 8$ matrix
\bear
[k] = \left ( \begin{array}{c}
	x_0 y_0 \\
	x_0 y_1 \\
	\vdots  \\
	x_1 y_2 \\
	x_1 y_3 
\end{array}
\right )
\eear
The crucial map in the above diagram is the map $f : \mathcal{O}_Y (1,0)^{\oplus 2} \oplus \mathcal{O}_Y (0,1)^{\oplus 4} \rightarrow \mathcal{O}_{Y}(1,1)^{\oplus 8}$ which is given by the Jacobian of the Segre embedding. In coordinates it is described by the $8 \times 6$ matrix
\bear
[f] = \left ( \begin{array}{cccccc}
	y_0 & 0 &  x_0 & 0 & 0 & 0 \\
	y_1 & 0 &  0 & x_0 & 0 & 0 \\
	y_2 & 0 &  0 & 0 & x_0 & 0\\
	y_3 & 0 &  0 & 0 & 0 & x_0 \\
	0 & y_0 &  x_1 & 0 & 0 & 0\\
	0 & y_1 & 0 & x_1 & 0 & 0\\
	0 & y_2 &  0 & 0 & x_1 & 0\\
	0 & y_3 &  0 & 0 & 0 & x_1
\end{array}
\right )
\eear
It is easy to see that from the above expressions that $f \circ (i \oplus j) = k \oplus k$, hence by commutativity, one has $f \circ (i \oplus j ) = k \circ q = k\oplus k$. This says that the map $q : \mathcal{O}_Y \oplus \mathcal{O}_Y \rightarrow \mathcal{O}_Y$ corresponds to the sum of the elements in $\mathcal{O}_Y \oplus \mathcal{O}_Y$. In turn, this implies that $\ker (q) \cong \mathcal{O}_Y$ and the map $\iota : \mathcal{O}_Y \rightarrow \mathcal{O}_Y \oplus \mathcal{O}_Y$ is simply given by $f\mapsto (f, -f)$. On the other hand, it is clear that $\ker (q) \cong \ker (f)$, so that also $\ker (f) \cong \mathcal{O}_Y.$ The above diagram \eqref{comm1} thus reads
\bear
\xymatrix{
	& 0\ar[d] & 0 \ar[d] & 0 \ar[d] \\
	0 \ar[r] & \mathcal{O}_Y \ar[d]^{\cong}  \ar[r]^{\iota} & \mathcal{O}_Y \oplus \mathcal{O}_Y \ar[d]^{i \oplus j} \ar[r]^q & \mathcal{O}_Y \ar[d]^k \ar[r] & \ar[d] 0\\
	0 \ar[r] & \mathcal{O}_Y\ar[d] \ar[r] & \mathcal{O}_Y (1, 0)^{\oplus 2} \oplus \mathcal{O}_{Y} (0,1)^{\oplus 4} \ar[r]^{\qquad f} \ar[d]^{E_{xy}} & \mathcal{O}_{Y} (1,1)^{\oplus 8} \ar[d]^{E|_Y} \ar[r] & \ar[d]^\cong Q \ar[r]  & 0\\
	& 0\ar[r] & \mathcal{T}_Y \ar[d] \ar[r]^{d}  & \mathcal{T}_{\proj 7|_Y}  \ar[d] \ar[r] & \mathcal{N}_{Y} \ar[d] \ar[r] & 0 \\
	& & 0 & 0 & 0}
\eear
This is enough to read out a resolution of the normal bundle. Namely, we have
\bear \label{resnormal}
\xymatrix{
	0 \ar[r] & \mathcal{O}_Y \ar[r]^{\beta \qquad \qquad \quad} \ar[r] & \mathcal{O}_Y(1,0)^{\oplus 2} \oplus \mathcal{O}_Y (0,1)^{\oplus 4} \ar[r]^{\qquad \quad f} & \mathcal{O}_{Y} (1,1)^{\oplus 8} \ar[r]^{\quad \epsilon } & \mathcal{N}_{Y} \ar[r] & 0,
}
\eear
where $\beta \defeq \iota \circ (i \oplus j)$ is injective as it is the composition of two injective morphisms. Also, notice that it can be explicitly verified that $f \circ \beta = 0.$ Finally, $\epsilon$ is again the composition of two surjective morphisms, and hence it is surjective. Also, it is easy to see that $\epsilon \circ f = 0, $ since, with reference to the above commutative diagram, one has $d \circ E_{xy} = E |_Y \circ f: $ composing with the map $\mathcal{T}_{\proj 7|_Y} \rightarrow \mathcal{N}_{Y}$ gives zero by exactness of the normal bundle short exact sequence. Now, in order to study the cohomology of the normal bundle, together with all its twists, we break the previous resolution in two short exact sequences
\bear \label{sesM1}
\xymatrix{
	0 \ar[r] & \mathcal{O}_Y \ar[r] & \mathcal{O}_Y (1,0)^{\oplus 2} \oplus \mathcal{O}_Y (0,1)^{\oplus 4} \ar[r] & \mathcal{M} \ar[r] & 0,
}
\eear
\bear \label{sesM2}
\xymatrix{
	0 \ar[r] & \mathcal{M} \ar[r] & \mathcal{O}_Y (1,1)^{\oplus 8 } \ar[r] & \mathcal{N}_{Y} \ar[r] & 0,
}
\eear
which implicitly define the sheaf $\mathcal{M}$. The first one can be used to compute the cohomology of $\mathcal{M}$: in particular one finds that $H^1 (\mathcal{M} (k)) = 0$ for any $k$, so that the long exact cohomology sequence of the short exact sequence \eqref{sesM2} gives  
\bear
\xymatrix{
	0 \ar[r] & H^0 (\mathcal{M} (k)) \ar[r] & H^0 (\mathcal{O}_Y (k+1,k+1)^{\oplus 8 } ) \ar[r] & H^0 (\mathcal{N}_{Y} (k) )  \ar[r] & 0.
}
\eear
Computing $H^0(\mathcal{M} (k))$ from \eqref{sesM1}, one gets 
\begin{align}
h^0 (\mathcal{N}_{Y}) & = \frac{1}{2}(k+3)^2(k+2)(k+5)
\end{align}

\noindent We now consider the conormal bundle. First it can be observed that the \eqref{resnormal} is a resolution via vector bundles. This means that every map has constant rank and dualizing gives again an exact sequence
\begin{equation}\label{resconormal}
	0 \rightarrow \mathcal{N}^\vee_{Y} \xrightarrow{\epsilon^\vee}  \mathcal{O}_{Y} (-1,-1)^{\oplus 8} \xrightarrow{f^\vee}  \mathcal{O}_Y(-1,0)^{\oplus 2} \oplus \mathcal{O}_Y (0,-1)^{\oplus 4} \xrightarrow{\beta^\vee}  \mathcal{O}_Y \rightarrow  0.
\end{equation}

Once again the \eqref{resconormal} can be broken into two short exact sequences, namely
\bear \label{sesconorm1}
\xymatrix{
	0 \ar[r] & \mathcal{N}^\vee_{Y} \ar[r] & \mathcal{O}_{Y} (-1,-1)^{\oplus 8} \ar[r] & \mathcal{M}^\vee \ar[r] & 0,
}
\eear
\bear \label{sesconorm2}
\xymatrix{
	0 \ar[r] & \mathcal{M}^\vee \ar[r] & \mathcal{O}_Y(-1,0)^{\oplus 2} \oplus \mathcal{O}_Y (0,-1)^{\oplus 4} \ar[r] & \mathcal{O}_Y \ar[r] & 0.
}
\eear
We first need to study $\mathcal{M}^\vee$. We have the following commutative diagram.
\bear
\xymatrix{
	& 0 \ar[d] & 0 \ar[d] & 0 \ar[d]  \\
	0 \ar[r] & \mathcal{N}^\vee_{Y} \ar[r] \ar[d]^{=} & \Omega^1_{\proj 7|_Y} \ar[d] \ar[r] & \Omega^1_{Y} \ar[d] \ar[r] & 0 \\
	0 \ar[r] & \mathcal{N}^\vee_{Y} \ar[d] \ar[r] & \mathcal{O}_{Y} (-1,-1) \ar[d] \ar[r] & \mathcal{M}^\vee \ar[d] \ar[r] & 0 \\
	& 0 \ar[r] & \mathcal{O}_Y \ar[d] \ar[r]^{\cong} & Q \ar[r] \ar[d] & 0\\
	& & 0 & 0
}
\eear
Notice that $Q$ is of rank 1, hence $Q\cong \mathcal{O}_Y$, which shows that $\mathcal{M}^\vee$ is a (non split) extension 
\bear
\xymatrix{
	0 \ar[r] & \Omega^1_Y \ar[r] & \mathcal{M}^\vee \ar[r]  & \mathcal{O}_Y \ar[r] & 0.
}
\eear
Studying the cohomology of $\mathcal{M}^\vee (k)$ for $k\geq 1$---when $\mathcal{N}^\vee_{Y}$ can have global sections---one finds $H^1 (\mathcal{M}^{\vee} (k)) = 0$ and
\begin{align}
h^0 (\mathcal{M}^\vee (k)) & = h^0 (\Omega^1_Y (k)) + h^0 (\mathcal{O}_Y (k, k)) 
= 2k \binom{k+3}{3} + (k^2-1) \binom{k+2}{2}.
\end{align}
Going back to the conormal bundle $\mathcal{N}^\vee_{Y}$ and looking at the resolution twisted by $\mathcal{O}_Y(k)$ and split into the above two exact sequences \eqref{sesconorm1} and \eqref{sesconorm2} one has
\begin{align} \hspace*{-2cm} 
\xymatrix@R=14pt{
	0 \ar[r] & \mathcal{N}^\vee_{Y} (k) \ar[r]^{\pi^\vee \qquad } & 
        \mathcal{O}_{Y} (k-1,k-1)^{\oplus 8} \ar@/^2pc/[rr]^{f^\vee} \ar[r]^{\qquad \quad \bar f^\vee} & \mathcal{M}^\vee(k) \ar[dr] \ar[r]^{ i \quad \qquad \qquad } &  \mathcal{O}_Y(k-1,k)^{\oplus 2} \oplus \mathcal{O}_Y (k,k-1)^{\oplus 4}
        \ar[r]^{\qquad \quad \qquad \beta^\vee} & \mathcal{O}_Y (k,k)\ar[r] & 0\\
	& &  0 \ar[ur] & & 0
}
\end{align}
where $\bar f^\vee$ is surjective and $i$ is injective.
Then, the long exact cohomology sequence of~\eqref{sesconorm1} yields
\begin{equation}
	0 \rightarrow   H^0 (\mathcal{N}^\vee_{Y}) \rightarrow H^0 (\mathcal{O}_{Y} (k-1,k-1)^{\oplus 8} ) \xrightarrow{ \bar{f}^\vee}  H^0 (\mathcal{M}^\vee(k)) \rightarrow H^1 (\mathcal{N}^\vee_{Y}) \rightarrow 0.
\end{equation}
Hence in particular, 
\bear
H^1 (\mathcal{N}^\vee_{Y} (k)) = \mbox{coker} (H^0 (\mathcal{O}_{Y} (k-1,k-1)^{\oplus 8} ) \stackrel{\bar{f}^\vee}{\longrightarrow} H^0 (\mathcal{M}^\vee(k))),
\eear
so that
\begin{align} 
h^1 (\mathcal{N}^\vee_{Y} (k)) &  = h^0 (\mathcal{M}^\vee (k)) - 8 h^0 (\mathcal{O}_Y(k-1, k-1)) + \dim ( \ker \bar f^\ast ).
\end{align}
Now, one has $f^\vee = i \circ \bar f^\vee$, but since $i $ is injective, then $\ker (f^\vee) = \ker (\bar f^\vee)$, where 
\bear
f^\vee :  \mathcal{O}_{Y} (k-1,k-1)^{\oplus 8}  \rightarrow \mathcal{O}_Y(k-1,k)^{\oplus 2} \oplus \mathcal{O}_Y (k,k-1)^{\oplus 4}. 
\eear
This implies that  
\begin{align} \label{h1}
h^1 (\mathcal{N}^\vee_{Y} (k)) & = h^0 (\mathcal{M}^\vee (k)) - 8 h^0 (\mathcal{O}_Y(k-1, k-1))  + \dim ( \ker (f^\vee )) \nonumber \\ 
& = 2k \binom{k+3}{3} + (k^2-1) \binom{k+2}{2} - 8k \binom{k+2}{3} + \dim (\ker (f^\vee))
\end{align}
Notice that $\dim (\ker (f^\vee) ) = h^0 (\mathcal{N}^\vee_{Y})$, hence the kernel of the maps yields the number of global sections of the conormal bundle. The map $f^\vee$ can indeed be described explicitly, and it is given by the transpose of $f$ introduced above for the normal bundle: it is a $6\times 8$ matrix given in the above affine cones coordinates 
\bear
[f^\vee] = [f]^\top = \left ( 
\begin{array}{cccccccc}
	x_0 & x_1 & x_2 & x_3 & 0 & 0 & 0 & 0 \\
	0 & 0 & 0 & 0 & x_0 & x_1 & x_2 & x_3  \\
	u_0 & 0 & 0 & 0 & u_1 & 0 & 0 & 0 \\
	0 & u_0 & 0 & 0 & 0 & u_1 & 0 & 0 \\
	0 & 0 & u_0 & 0 & 0 & 0 & u_1 & 0 \\
	0 & 0 & 0 & u_0 & 0 & 0 & 0 & u_1 
\end{array}
\right ).
\eear
In the special case $k=1$ the long exact cohomology sequence reads
\begin{equation}
	0 \rightarrow H^0 (\mathcal{N}^\vee_{Y} (1)) \rightarrow H^0 (\mathcal{O}_Y^{\oplus 8}) \rightarrow H^0 (\mathcal{M}^\vee (1)) \cong H^0 (\mathcal{O}_Y (1,1)) \rightarrow H^1 (\mathcal{N}^\vee_{Y} (1) )\rightarrow 0.
\end{equation}
Here the middle map is an isomorphism, corresponding to the multiplication by $u_i x_j$. It follows that $H^0 (\mathcal{N}^\vee_{Y} (1)) \cong 0 \cong  H^1 (\mathcal{N}^\vee_{Y} (1)).$
In the remaining case, for $k\geq 2$, the kernel of the map $f^\vee$ in cohomology can be evaluated directly, and it yields 
\bear
h^0 (\mathcal{N}^\vee_{Y} (2)) = \ker (f^\vee) = 8k \binom{k+2}{3} - 2k \binom{k+3}{3} - (k^2-1) \binom{k+2}{2}.
\eear 
so that in particular it follows from \eqref{h1} that $H^1 (\mathcal{N}^\vee_{Y}) = 0.$

\printbibliography
\end{document}